\documentclass[a4paper,11pt,DIV=calc]{scrartcl}
\usepackage{hyperref}
\usepackage[utf8]{inputenc}
\usepackage{amsmath,amsfonts,amssymb,amsthm,mathtools}
\usepackage{amsmath}
\usepackage{amsfonts}
\usepackage{amssymb}
\usepackage{dsfont}
\usepackage{mathrsfs}
\usepackage[T1]{fontenc}
\usepackage[ngerman,english]{babel}
\usepackage[inline]{enumitem}
\usepackage{latexsym,graphicx}
\usepackage[all]{xy}
\usepackage{mathtools}
\usepackage{color}
\usepackage{authblk}
\usepackage{amscd}
\usepackage{microtype}
\numberwithin{equation}{section}

\setlength{\topmargin}{-0.5in}
\setlength{\textheight}{9.1in}
\setlength{\oddsidemargin}{-.1in}
\setlength{\textwidth}{6.6in}

\DeclareMathOperator{\tr}{Tr}

\newtheorem{theorem}{Theorem}[section]
\newtheorem{proposition}{Proposition}[section]
\newtheorem{lemma}{Lemma}[section]
\newtheorem{corollary}{Corollary}[section]
\theoremstyle{definition}
\newtheorem{remark}{Remark}[section]



\theoremstyle{definition}

\newcommand{\half}{\mbox{$\frac{1}{2}$}}
\newcommand{\beq}{\begin{equation}}
\newcommand{\eeq}{\end{equation}}
\usepackage{pdfsync}

\parindent 0pt
\parskip 10pt

\setcounter{tocdepth}{2}

\begin{document}

\title{Bose-Einstein Condensation in a Dilute, Trapped Gas at Positive Temperature}

\author{Andreas Deuchert, Robert Seiringer, Jakob Yngvason}

\date{April 18, 2018}

\maketitle

\begin{abstract} 
We consider an interacting,  dilute Bose gas trapped in a harmonic potential at a positive temperature.  The system is analyzed in a  combination of a thermodynamic and a Gross-Pitaevskii (GP) limit where the trap frequency $\omega$, the temperature  $T$ and the particle number $N$ are related by $N \sim (T / \omega)^{3} \to\infty$ while the scattering length is so small that the interaction energy per particle around the center of the trap is of the same order of magnitude as the spectral gap in the trap. 

We prove that the difference between the canonical free energy of the interacting gas and the one of the noninteracting system can be obtained by minimizing the GP energy functional. We also prove Bose-Einstein condensation in the following sense: The one-particle density matrix of any approximate minimizer of the canonical free energy functional is to leading order given by that of the noninteracting gas but with the free condensate wavefunction replaced by the GP minimizer. 
\end{abstract}

\setcounter{tocdepth}{2}
\tableofcontents

\section{Introduction and main results}
\label{sec:introductionandmainresults}
\subsection{Background and summary}
\label{sec:background}

Proving Bose-Einstein condensation (BEC) rigorously for systems of interacting particles has for a long time been a major challenge in Mathematical Physics. The experimental realization of BEC  in trapped alkali gases in 1995 \cite{WieCor1995, Kett1995} triggered numerous mathematical investigations of the properties of dilute Bose gases. Building on work of Dyson on hard-core bosons from 1957 \cite{Dyson}, the first proof of an asymptotically accurate lower bound for the ground state energy of a dilute Bose gas in the thermodynamic limit was achieved in \cite{LiYng1998}. Together with the upper bound given in \cite{RobertGPderivation}, it  established firmly the leading order behavior of the ground state energy. Perhaps more important than the result itself were the techniques of the proof which formed the basis of much subsequent work.  

For dilute, trapped gases as prepared in typical experiments the Gross-Pitaevskii (GP) limit for the ground state is relevant. This limit is characterized by the requirement that the interaction energy per particle is kept of the same order of magnitude as the spectral gap in the trap. Mathematically, this can be achieved by either scaling the trap or the interaction potential suitably as the particle number tends to infinity.
In \cite{RobertGPderivation,LiSei2002,LiSei2006} it was proved that the ground state energy of a Bose gas is in this limit equal to the minimum of the GP energy functional. Additionally, the  projection onto the minimizer of this functional is the limit of  the one-particle density matrix of the gas, proving complete Bose-Einstein condensation in the ground state. The dynamics of a system in the GP limit, on the other hand, can be described by the time-dependent GP equation, see \cite{ErdSchlYau2009,ErdSchlYau2010,BenOlivSchl2015,Pickl2015}. For a more extensive list of references to the mathematical analysis of dilute Bose gases we refer to \cite{Themathematicsofthebosegas,Rou2015,BenPorSchl2015}. 

While ground states provide a good description of quantum gases at very low temperatures in first approximation, the understanding of finite temperature effects in cold gases is also essential. For instance, the spectacular emergence of a peak in the momentum distribution out of a maxwellian thermal cloud in the experiments \cite{WieCor1995,Kett1995} as the temperature falls below a critical value, cannot be explained from the ground state properties alone.
For describing such phenomena one has to consider the  Gibbs state and the free energy of the system rather than the ground state and the corresponding energy. For a dilute,  homogeneous Bose gas, the leading order behavior of the free energy has been established, see \cite{Sei2008} for the lower bound and \cite{Yin2010} for the upper bound. The techniques developed in \cite{LiYng1998,RobertGPderivation} have also been extended to treat fermions, both for the ground state \cite{LiSeiSol2005} and for the free energy at finite temperature \cite{RobertFermigas}. 
We mention also the papers \cite{LewinNamRougerie2015,LewinNamRougerie2017} and \cite{FKSS2017}   where Gibbs states of  Bose gases with mean-field interactions are studied.
A general proof of Bose-Einstein condensation in dilute gases remains elusive, however.

In this paper we consider the Gibbs state of an interacting Bose gas in a harmonic trap at positive temperatures in a combination of a thermodynamic and  a GP limit. We show that in this limit the free energy becomes equal to that of  the ideal gas plus a correction given by the GP energy of the condensate. Moreover, we show that the one-particle density matrix of the system is well approximated by that of the ideal gas with the noninteracting condensate wavefunction replaced by the minimizer of the GP energy functional. This  proves, in particular, Bose-Einstein condensation at positive temperatures with the same transition temperature and the same condensate fraction as for the ideal gas to leading order.
\subsection{Notation}
\label{sec:notation}
For functions $a$ and $b$ depending on the particle number or other parameters, we use the notation $a \lesssim b$ to say that there exists a constant $C>0$ independent of the parameters such that $a \leq C b$. If $a \lesssim b$ and $b \lesssim a$ we write $a \sim b$ and $ a \simeq b$ means that $a$ and $b$ are equal to leading order in the limit considered. 
\subsection{The model}
\label{sec:model}
We consider a system of $N$ bosons trapped in a three-dimensional harmonic potential with trap frequency $\omega$.
The one-particle Hilbert space is $\mathcal{H}_1 = L^2(\mathbb{R}^3)$ and that for the whole system is the $N$-fold symmetric tensor product $\mathcal{H}_N = \otimes_{\text{s}}^N L^2(\mathbb{R}^3)$. On $\mathcal{H}_N$ we define the Hamiltonian of the system by\footnote{In our units the mass is $m=\half$ and $\hbar=1$ so that $\hbar^2/2m=1$ and $ m/2=\hbox{$\frac 14$}$.  Moreover, Boltzmann's constant $k_{\text{B}}$ is taken to be 1.}
\begin{equation}
H_N = \sum_{i=1}^N \left( -\Delta_i +\hbox{$\frac 14$}\omega^2 x_i^2 - \hbox{$\frac 32$} \omega \right) + \sum_{1 \leq i < j \leq N} v_N(x_i-x_j). \label{eq:main1}
\end{equation}
In this formula $\Delta_i$ denotes the Laplacian acting on the $i$-th particle and we have  subtracted 
$\hbox{$\frac 32$} \omega$ for convenience so that the ground state energy of the harmonic oscillator is zero. The interaction potential is given by 
\beq v_N(x) = \omega N^2 v( \omega^{1/2} N x )\label{eq:intpot} 
\eeq
with a nonnegative, radial and measurable function $v$, independent of $N$. 
A simple scaling argument shows that if $a_v$ is the (dimensionless) scattering length of $v$ then the scattering length $a_N$ of $v_N$ is 
\beq 
a_N=a_v\, \omega^{-1/2}  N^{-1}\,.\label{eq:scaling}
 \eeq 
The scattering length is a combined measure of the range and the strength of a potential and its definition is recalled in Section~\ref{sec:scatteringlength}. To be able to include hard-core interactions, we allow $v$ to take the value $+\infty$. If $v$ happens to be infinite on a set of nonzero measure the domain of the Hamiltonian has to be restricted to functions that vanish on this set. We require that $v$ is integrable outside some finite ball in order for the scattering length $a_v$ to be finite.

To motivate the scaling \eqref{eq:intpot} we recall that the energy per particle of a dilute gas of density $\rho$ with 
 interaction potential $v_N$ is to first approximation proportional to  $\rho a_N$ \cite{LiYng1998, RobertGPderivation}. If  $\rho\sim N\/\ell_{\text{osc}}^{-3}$, where    $\ell_{\text{osc}} \sim \omega^{-1/2}$ denotes the length scale of the trap, we see that
\beq \rho a_N\sim \omega
\label{eq:int} \eeq
i.e., the interaction energy per particle 
is of the order of the spectral gap $\omega$ as $N\to\infty$. Note also that the dimensionless \lq\lq gas parameter\rq\rq\ $\rho a_N^3$ tends to zero as $N^{-2}$ so the scaling \eqref{eq:intpot} amounts to considering a special case of a dilute limit.

\subsection{The Gross-Pitaevskii energy functional and the GP limit}
\label{sec:GP}
The GP energy functional $\mathcal{E}^{\text{GP}}$ with trapping potential as in Eq.~\eqref{eq:main1} and scattering length $a$ is defined by
\begin{equation}
\mathcal{E}^{\text{GP}}(\phi) = \int_{\mathbb{R}^3} \left( \vert \nabla \phi(x) \vert^2 + \left( \tfrac 14 \omega^2 x^2 - \tfrac 32\omega \right) \vert \phi(x) \vert^2 + 4 \pi a \vert \phi(x) \vert^4 \right) \text{d}x.
\label{eq:mainresult1}
\end{equation} 
Its ground state energy  is
\begin{equation}
E^{\text{GP}}(N,a,\omega) = \inf_{\Vert \phi \Vert^2_{L^2\left(\mathbb{R}^3 \right)}=N} \mathcal{E}^{\text{GP}}(\phi).
\label{eq:mainresult2}
\end{equation}
The unique minimizer of $\mathcal{E}^{\text{GP}}$ will be denoted by $\phi^{\text{GP}}_{N,a}$. To keep the notation simple, we suppress its dependence on $\omega$. The energy and the minimizer satisfy the scaling relations
\beq 
E^{\text{GP}}(N,a,\omega)=\omega N E^{\text{GP}}(1,N \omega^{1/2} a,1),\quad \quad\phi^{\text{GP}}_{N,a}=N^{1/2}\phi^{\text{GP}}_{1,Na} \label{eq:GPscaling}
 \eeq
For a detailed discussion of the mathematical properties of the GP functional and its relation to the ground state of the Hamiltonian \eqref{eq:main1} we refer to \cite{LiSei2002} (see also \cite{LiSei2006,Themathematicsofthebosegas,RobertGPderivation,NRS}) where the following is proved: 

In the {\em GP limit}, where $a=a_N$ as in Eq.~\eqref{eq:scaling} and $N\to\infty$, the ground state energy per particle of the many-body Hamiltonian \eqref{eq:main1} converges to $ \omega E^{\text{GP}}(1,N \omega^{1/2} a_N,1)= \omega E^{\text{GP}}(1,a_v,1 )$. Moreover, the normalized one-particle density matrix of the ground state wavefunction converges in trace norm to the projector onto $\phi^{\text{GP}}_{1,a_v \omega^{-1/2}}$. 

Since the GP minimizer differs considerably from the gaussian ground state of the harmonic oscillator  if $a_v$ is large, the interaction can leave a clear mark on the density profile of the gas despite the high dilution imposed by the GP limit, as seen in experiments \cite{Hau1998, Dalfovo_etal1999}.  In fact, for large $a_v$ (\lq\lq Thomas-Fermi limit\rq\rq) the profile has approximately an inverse parabolic shape of extension $\sim a_v^{2/5}\ell_{\text{osc}}$.

Our goal is to generalize these results to equilibrium states at positive temperatures when the GP limit is combined with the natural thermodynamic limit in the trap.  The definition of the latter and the heuristics behind our main results can be deduced from a comparison of the length scales involved in the problem, as discussed next.

\subsection{Length scales, thermodynamic limit, heuristics}
\label{sec:lengthscales}

For a noninteracting gas at inverse temperature $\beta=T^{-1}$ the following length scales are relevant:
\begin{itemize}
\item The extension $\ell_{\text{osc}}\sim\omega^{-1/2}$ of the ground state of the harmonic oscillator.
\item The thermal de Broglie wavelength $\ell_{\text{th}}\sim\beta^{1/2}$.
\item The extension of the thermal cloud in the trap, $R_{\text{th}}
\sim \omega^{-1}\beta^{-1/2}$, obtained by equating the potential energy $\omega^2R_{\text{th}}^2$ and the thermal kinetic energy $\beta^{-1}$.
\item The mean particle distance $d_{\text{th}}\sim N^{-1/3}R_{\text{th}}$ in the thermal cloud.
\end{itemize}

The {\em thermodynamic limit} is defined by keeping the ratio $d_{\text{th}}/\ell_{\text{th}}$ fixed as $N\to\infty$, i.e. by the condition
\beq
N(\beta\omega)^3\quad\text{fixed.}
\eeq
The thermodynamic limit requires in particular
$(\beta\omega)\sim N^{-1/3}\to 0$ and thus $\omega\ll T$.
If $d_{\text{th}}\lesssim \ell_{\text{th}}$, i.e., $N(\beta\omega)^{3}\gtrsim 1$, thermal de Broglie wave packets overlap and condensation can be expected. This heuristics is confirmed by the standard analysis of the ideal Bose gas in the harmonic trap \cite{PitaevskiiStringari,PethickSmith,ChatterjeeDiakonis}: 

Bose-Einstein condensation takes place if the temperature $T$ is smaller than the critical temperature $T_{\text{c}}$ given by
\begin{equation}
T_{\text{c}}(N,\omega) = \omega \left( \frac{N}{\zeta(3)} \right)^{1/3}.
\label{eq:idealbosegas2}
\end{equation}  
Here $\zeta$ is the zeta-function and $\zeta(3) =1.202\dots$ To be more precise, denote by $N_0(\beta,N,\omega)$ the expected number of particles occupying the ground state of the harmonic oscillator in the canonical ensemble. If $N\to\infty$ with $N(\beta\omega)^3$ fixed
the condensate fraction is given by
\begin{equation}
\lim \frac{N_0(\beta,N,\omega)}{N} =  \left[ 1 - \left( \frac{T}{T_{\text{c}}} \right)^3 \right]_+
\label{eq:idealbosegas3}
\end{equation}
with $[t]_+ = \max\lbrace t,0 \rbrace$. 
The condition for the right-hand side of Eq.~\eqref{eq:idealbosegas3} to be larger than zero, i.e., $T< T_{\text c}$, is equivalent to
\beq\label{eq:beccond}
\lim N(\beta\omega)^3>1.202.
\eeq
In this case the ground state of the harmonic oscillator is macroscopically occupied in the ideal Bose gas.
For $T>T_c$ on the other hand, i.e., if
\beq\label{eq:beccond2}
\lim N(\beta\omega)^3<1.202,
\eeq
there is no condensation.
These formulas are most conveniently derived in the grand canonical ensemble.

For an assessment of the effects of interactions the following observation is crucial: The length scales $R_{\text{th}}$ and $\ell_{\text{osc}}$ become separated if $N\to\infty$,
\beq
\ell_{\text{osc}}/R_{\text{th}}\sim (\beta\omega)^{1/2}\sim N^{-1/6}.\label{eq:scalesep}
\eeq
The average density of the condensate,
\beq \rho_0\sim N_0/\ell_{\text{osc}}^3\label{eq:rho0},\eeq
and that of the thermal cloud,
\beq \rho_{\text{th}}\sim N_{\text{th}}/R_{\text{th}}^3\sim (\beta\omega)^{3/2} N_{\text{th}}/\ell_{\text{osc}}^3 \label{eq:rhocloud},\eeq
with $N_{\text{th}}:=N-N_0$, are therefore widely different in the condensation regime $0<T<T_c$ where $N_0$ and $N_{\text{th}}$ are comparable:
\beq
 \rho_{\text{th}}/\rho_0\sim (\beta\omega)^{3/2}\sim N^{-1/2}.\label{eq:scalesep2}
\eeq
The same holds for the ratios of the interaction energy per particle, $a_N\rho_{\text{th}}$ and $a_N\rho_0$ respectively. We remark that in the presence of a condensate described by a GP minimizer with large $a_v$ it would be more precise to replace $\ell_{\text{osc}}$ in Eq.~\eqref{eq:rho0}--\eqref{eq:rhocloud} by $a_v^{2/5}\ell_{\text{osc}}$. Also, since the ball $|x|\lesssim a_v^{2/5}\ell_{\text{osc}}$ is essentially excluded for the thermal cloud, 
$R_{\text{th}}$ in Eq.~\eqref{eq:rhocloud} should  be replaced approximately by
$R_{\text{th}}+a_v^{2/5}\ell_{\text{osc}}$.  As long as $a_v$ stays $O(1)$, however,  the asymptotic behavior in Eq.~\eqref{eq:scalesep2} remains valid.

The {\em separation of scales} expressed by Eq.~\eqref{eq:scalesep}  leads to the following expectations for the combined GP and thermodynamic limit:
\begin{itemize}
\item The thermal cloud of the ideal gas remains essentially intact.

\item BEC takes place for $T<T_c$ and the condensate can be described by the GP minimizer, residing close to the center of the trap.
\end{itemize}
Transforming this heuristic picture into a mathematical proof is the subject of this paper.
In order to state our main results precisely we need a few more definitions.

\subsection{Gibbs state, free energy and the concept of BEC}
\label{sec:Gibbs}

The canonical Gibbs state for the Hamiltonian \eqref{eq:main1} is
 \beq \Gamma^\mathrm{G}_{N} = Z(\beta,N,\omega)^{-1} e^{-\beta H_N}\label{eq:Gibbsstate}\eeq
with $Z(\beta,N,\omega)=\tr_{\mathcal{H}_N} \!\left[ e^{- \beta H_N } \right]$ the canonical partition function.
The free energy of the system at inverse temperature $\beta=T^{-1}$ is given by
\begin{equation}
F(\beta, N, \omega) = -\tfrac{1}{\beta} \ln\left( \tr_{\mathcal{H}_N} \left[ e^{- \beta H_N } \right] \right).
\label{eq:main2}
\end{equation}
The trace in Eqs.~\eqref{eq:Gibbsstate}--\eqref{eq:main2} is taken over $\mathcal{H}_N$, that is, over the subspace of permutation symmetric functions in $L^2(\mathbb{R}^{3N})$. In the following we will drop the index $\mathcal{H}_N$ and just write $\tr$ for this trace. By $F_0(\beta,N,\omega)$ we denote the free energy of the ideal Bose gas in the harmonic trap, that is, the one for $v = 0$.

 A useful characterization of the free energy is via the Gibbs variational principle. Denote by $\mathcal{S}_N$ the set of states on $\mathcal{H}_N$ that have finite energy with respect to $H_N$. In other words, consider the set of all linear operators $\Gamma$ on $\mathcal{H}_N$ with $0 \leq \Gamma \leq 1$, $\tr[\Gamma] = 1$ and $\tr[H_N \Gamma] < \infty$.\footnote{Here and elsewhere in the paper, we shall interpret $\tr H \Gamma$ for positive operators $H$ and states $\Gamma$ as $\tr H^{1/2} \Gamma H^{1/2}$, which is always well-defined if one allows the value $+\infty$. In particular, finiteness of $\tr H\Gamma$ does not require that $H\Gamma$ is trace class, only that $H^{1/2}\Gamma H^{1/2}$ is.}  Also denote by $S(\Gamma) = -\tr[\Gamma \ln(\Gamma)]$ the entropy of a state $\Gamma \in \mathcal{S}_N$. The free energy functional is defined to be 
\begin{equation}
\mathcal{F}_N(\Gamma) = \tr[H_N \Gamma] - TS(\Gamma),
\label{eq:main3}
\end{equation} 
and using this definition, the free energy can be written as 
\beq \label{gibbsv}
F(\beta,N,\omega) = \min_{\Gamma \in \mathcal{S}_N} \mathcal{F}_N(\Gamma).
\eeq 
The unique minimizer of $\mathcal{F}_N$ is given by the canonical Gibbs state defined in Eq.~\eqref{eq:Gibbsstate}.

The reduced one-particle density matrix  of a state $\Gamma_N \in \mathcal{S}_N$ is defined via the integral kernel
\begin{equation}
\gamma_N(x,y) = \tr_{\mathcal{H}_N}[ a^{*}_y a_x \Gamma_N ]. \label{eq:mainresult3}
\end{equation}
Here $a^{*}_x$ and $a_x$ denote creation and the annihilation operators (actually operator-valued distributions) of a particle at point $x$,  fulfilling the canonical commutation relation $[a_x,a_y^*] = \delta(x-y)$.  Alternatively, $\gamma_N$ can be defined as $N$ times the partial trace of $\Gamma_N$ over $N-1$ particle variables. 

Finally, {\em Bose-Einstein condensation} for a sequence of states $\Gamma_N$ means, by definition, that
\beq
\liminf_{N\to\infty}\frac 1N\sup_{\Vert \psi\Vert_2=1}\langle \psi,\gamma_N \psi\rangle >0.\label{eq:BECdef}
\eeq

\subsection{The main theorem}
\label{sec:mainresult}

\begin{theorem}
\label{thm:main}
Assume that $v$ is a nonnegative, radial and measurable function which is integrable outside some finite  ball. Let $H_N$ be the Hamiltonian \eqref{eq:main1} with interaction potential $v_N$ given by  Eq.~\eqref{eq:intpot}.  Let $F(\beta,N,\omega)$ be the corresponding free energy, $F_0(\beta,N,\omega)$ the free energy of the ideal gas, and $N_0(\beta,N,\omega)$ the expected number of particles occupying the ground state of the harmonic oscillator in the canonical Gibbs state of the ideal Bose gas. In the combined thermodynamic and  GP limit, that is, for $N \to\infty $, $(\beta \omega)^{-3} \sim N$ and $a_N$ as in Eq.~\eqref{eq:scaling} with $a_{v}$ fixed, we have 
\begin{equation}
\lim \tfrac{1}{\omega N} \left\vert F(\beta,N,\omega) - F_0(\beta,N,\omega) - E^{\mathrm{GP}}(N_0,a_N,\omega) \right\vert = 0.
\label{eq:mainresult4}
\end{equation}
Moreover, for any sequence of states $\Gamma_N \in \mathcal{S}_N$ with 
\begin{equation}
\lim \tfrac{1}{\omega N} \left\vert \mathcal{F}_N(\Gamma_N) - F_0(\beta,N,\omega) -E^{\mathrm{GP}}(N_0,a_N,\omega) \right\vert = 0
\label{eq:mainresult5}
\end{equation}
we have
\begin{equation}
\lim \tfrac{1}{N} \left\Vert \gamma_N - \left( \gamma_{N,0} - N_0 \vert \varphi_0 \rangle\langle \varphi_0 \vert +  \vert \phi^{\mathrm{GP}}_{N_0, a_N} \rangle\langle \phi^{\mathrm{GP}}_{N_0, a_N} \vert \right) \right\Vert_1 = 0.
\label{eq:mainresult6}
\end{equation} 
Here $\gamma_{N,0}$ denotes the one-particle density matrix of the noninteracting canonical Gibbs state, $\varphi_0$ is the normalized ground state wavefunction of the harmonic oscillator Hamiltonian $h = -\Delta + \hbox{$\frac 14$}\omega^2 x^2 -  \hbox{$\frac 32$}\omega$, and $\Vert \cdot \Vert_1$ stands for the trace norm. 
Finally, 
\beq\label{eq:mainresult7}
\lim \tfrac{1}{N} \left\Vert
 \gamma_N-\vert \phi^{\mathrm{GP}}_{N_0, a_N} \rangle\langle \phi^{\mathrm{GP}}_{N_0, a_N} \vert
\right\Vert= 0
\eeq
where $\Vert \cdot \Vert$ is the operator norm. In particular, Bose-Einstein condensation takes place  with the same transition temperature $T_c$ and the same condensate fraction as for the ideal gas to leading order, with the GP minimizer macroscopically occupied while the occupation of every state orthogonal to it is $o(N)$.
\end{theorem}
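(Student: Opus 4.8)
The plan is to prove \eqref{eq:mainresult4} by matching upper and lower bounds for $F$, and then to establish \eqref{eq:mainresult6} and \eqref{eq:mainresult7} for every approximate minimizer (i.e.\ every $\Gamma_N$ with \eqref{eq:mainresult5}) via an analysis of its one-particle density matrix. The algebraic backbone is the exact identity, valid for every $\Gamma\in\mathcal{S}_N$,
\begin{equation}
\mathcal{F}_N(\Gamma)-F_0(\beta,N,\omega)=\tr\Bigl[{\textstyle\sum_{i<j}}v_N(x_i-x_j)\,\Gamma\Bigr]+T\,\mathcal{D}(\Gamma\Vert\Gamma_{N,0}),
\label{eq:plan-relent}
\end{equation}
where $\Gamma_{N,0}=Z_0^{-1}e^{-\beta H_N^0}$ is the canonical Gibbs state of the ideal gas ($H_N^0$ being the Hamiltonian \eqref{eq:main1} with $v=0$) and $\mathcal{D}$ the quantum relative entropy; \eqref{eq:plan-relent} follows from $\ln\Gamma_{N,0}=-\beta H_N^0-\ln Z_0$ together with $F_0=-\beta^{-1}\ln Z_0$. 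Both terms on the right are nonnegative, so $F\ge F_0$ is immediate; \eqref{eq:mainresult4} amounts to the statement that the infimum over $\Gamma$ of the right-hand side of \eqref{eq:plan-relent} equals $E^{\mathrm{GP}}(N_0,a_N,\omega)$ up to $o(\omega N)$. Once the upper bound is established, \eqref{eq:plan-relent} applied to the Gibbs state $\Gamma^{\mathrm{G}}_N$ yields the a priori bounds $\tr[\sum_{i<j}v_N\,\Gamma^{\mathrm{G}}_N]\lesssim\omega N$ and $\mathcal{D}(\Gamma^{\mathrm{G}}_N\Vert\Gamma_{N,0})\lesssim\beta\omega N\sim N^{2/3}$, which feed the subsequent steps.

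\textbf{Upper bound.} I would test the Gibbs variational principle \eqref{gibbsv} with a state mimicking the expected equilibrium: $n_0$ particles in a cut-off version of the normalized GP minimizer $\phi:=\phi^{\mathrm{GP}}_{N_0,a_N}/\sqrt{N_0}$, the remaining $N-n_0$ particles distributed over the excited oscillator modes exactly as in $\Gamma_{N,0}$, the resulting wavefunction dressed by a Jastrow factor $\prod_{i<j}f(x_i-x_j)$ with $f$ the zero-energy scattering solution of $v_N$ cut off at a length $b$ satisfying $a_N\ll b\ll\ell_{\mathrm{osc}}$, and the whole family averaged over $n_0$ with mean $N_0$. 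The entropy of this trial state equals the ideal-gas entropy up to lower order (the condensate mode is pure, and the mode replacement $\varphi_0\to\phi$ and the Jastrow dressing are negligible perturbations for the entropy); its kinetic-plus-trap energy reproduces the ideal-gas value plus the GP kinetic and trap energy of $\phi^{\mathrm{GP}}_{N_0,a_N}$; the interaction among the condensate particles contributes $4\pi a_N\int|\phi^{\mathrm{GP}}_{N_0,a_N}|^4$, the appearance of the scattering length in place of $\tfrac12\int v_N$ being produced by the Jastrow correlations exactly as in the $T=0$ upper bounds of \cite{RobertGPderivation,LiSei2002}; and the condensate--thermal and thermal--thermal interactions are $o(\omega N)$ by the scale separation $\rho_{\mathrm{th}}/\rho_0\sim N^{-1/2}$ of \eqref{eq:scalesep2}. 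This yields $F\le F_0(\beta,N,\omega)+E^{\mathrm{GP}}(N_0,a_N,\omega)+o(\omega N)$.

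\textbf{Lower bound --- the main obstacle.} One must show, for the Gibbs state (and more generally for any $\Gamma\in\mathcal{S}_N$ obeying the a priori bounds), that the right-hand side of \eqref{eq:plan-relent} is at least $E^{\mathrm{GP}}(N_0,a_N,\omega)-o(\omega N)$. The plan: (i) bound the relative-entropy term from below by a quantity depending only on the one-particle density matrix $\gamma$ of $\Gamma$, using that $\Gamma_{N,0}$ is quasi-free, thereby turning \eqref{eq:plan-relent} into a Gibbs-type free-energy functional of $\gamma$ alone in which deviations of $\gamma$ from $\gamma_{N,0}$ on the excited modes and of the condensate occupation from $N_0$ are penalized; (ii) bound the interaction term from below by a Dyson-type lemma in the spirit of \cite{LiYng1998,RobertGPderivation}, trading a controlled fraction of the kinetic energy for the replacement of the singular, possibly hard-core, $v_N$ by a soft potential of the same scattering length $a_N$, and then, following \cite{LiSei2002,RobertGPderivation}, extracting an energy $\gtrsim 4\pi a_N\int\rho^2$ expressed through the density $\rho$ of $\Gamma$; (iii) localize by Neumann bracketing to a ball of radius $\sim\ell_{\mathrm{osc}}$ about the trap center, using that outside it $\rho\lesssim\rho_{\mathrm{th}}$ so the interaction contribution there is $o(\omega N)$; and (iv) read off the resulting free-energy functional of $\gamma$: by the scale separation its minimum decouples into the ideal-gas free energy $F_0$ plus the GP minimization for the condensate, and since the entropy penalty of step~(i) forces the condensate occupation to equal $N_0+o(N)$ while $E^{\mathrm{GP}}$ depends smoothly on its mass argument through \eqref{eq:GPscaling}, one reaches $\ge F_0(\beta,N,\omega)+E^{\mathrm{GP}}(N_0,a_N,\omega)-o(\omega N)$. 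The hard point --- which I expect to be the main obstacle --- is to carry out the localization and the scattering-length extraction \emph{simultaneously with} the entropy estimate, so that the $-TS$ term is never overwhelmed, and to do so for a general singular $v_N$ at positive temperature; this amounts to lifting the ground-state machinery of \cite{LiYng1998,RobertGPderivation,LiSei2002} and the homogeneous finite-temperature analysis of \cite{Sei2008,Yin2010} to the present inhomogeneous, canonical-ensemble setting.

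\textbf{One-particle density matrix.} Let $\Gamma_N$ satisfy \eqref{eq:mainresult5}; then every inequality used in the lower bound is saturated up to $o(\omega N)$, which rigidly constrains $\gamma_N$. To obtain \eqref{eq:mainresult6} I would first pin down the condensate by a Feynman--Hellmann/convexity argument: perturb $H_N$ by $\mp\varepsilon\sum_{i}(|\phi\rangle\langle\phi|)_i$ with $\phi=\phi^{\mathrm{GP}}_{N_0,a_N}/\sqrt{N_0}$, establish the free-energy asymptotics of the perturbed problem uniformly for $\varepsilon$ in a suitable vanishing range, and use convexity in $\varepsilon$ together with the two-sided bounds at $\varepsilon=0$ to conclude $\langle\phi,\gamma_N\phi\rangle=N_0+o(N)$, as well as that the blocks of $\gamma_N$ mixing $\phi$ with $\phi^{\perp}$ are $o(N)$ in trace norm; here the scale separation is again needed to ensure that the perturbation acts on the thermal cloud only to order $o(\varepsilon N)$. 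On $\phi^{\perp}$, testing in the same way against the projections onto the excited oscillator modes and using the near-saturation of the step-(i) relative-entropy bound together with the dilution of the thermal cloud gives that $\gamma_N$ agrees with $\gamma_{N,0}-N_0|\varphi_0\rangle\langle\varphi_0|$ up to $o(N)$ in trace norm; summing the pieces yields \eqref{eq:mainresult6}. Finally, since $\Vert A\Vert\le\Vert A\Vert_1$ and the largest eigenvalue of $\gamma_{N,0}-N_0|\varphi_0\rangle\langle\varphi_0|$ --- the occupation of the first excited oscillator level --- is of order $(\beta\omega)^{-1}\sim N^{1/3}=o(N)$, \eqref{eq:mainresult7} follows from \eqref{eq:mainresult6}, and the GP minimizer is the unique macroscopically occupied mode, establishing Bose-Einstein condensation with the transition temperature $T_c$ of \eqref{eq:idealbosegas2} and the ideal-gas condensate fraction \eqref{eq:idealbosegas3} to leading order.
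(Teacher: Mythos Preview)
Your proposal differs from the paper's proof in its organizing principle, and the difference is exactly where you yourself flag the ``main obstacle.''

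\textbf{Lower bound.} You try to run the entropy and the interaction estimates simultaneously on the whole of $\mathbb{R}^3$: reduce $\mathcal{D}(\Gamma\Vert\Gamma_{N,0})$ to a one-particle functional, soften $v_N$ by a Dyson lemma, extract $4\pi a_N\int\rho^2$, and then argue that the resulting one-body free-energy functional decouples into $F_0+E^{\mathrm{GP}}$. The paper avoids this altogether by \emph{geometric localization in Fock space}: with a smooth partition $j_1^2+j_2^2=1$, $j_1$ supported in $B(2R)$ with $\omega^{-1/2}\ll R\ll\omega^{-1}\beta^{-1/2}$, one produces states $\Gamma_{j_1},\Gamma_{j_2}$ with reduced density matrices $j_i^{\otimes k}\gamma^{(k)}j_i^{\otimes k}$ and $S(\Gamma)\le S(\Gamma_{j_1})+S(\Gamma_{j_2})$. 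After inserting the GP chemical potential $\mu^{\mathrm{GP}}$, the two problems decouple: inside $B(2R)$ one simply quotes the $T=0$ ground-state results (no positive-temperature Dyson argument needed), and outside one drops $v\ge0$ and lands on the ideal gas. The entropy of $\Gamma_{j_1}$ is controlled by sacrificing an $\epsilon$-fraction of the kinetic energy. This sidesteps precisely the difficulty you identify; your sketch does not indicate how you would get past it.

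\textbf{Upper bound.} Your trial state puts condensate and thermal particles in the same spatial region and relies on the Jastrow factor to make condensate--thermal interactions finite. For hard-core $v$ this couples the correlations between the two clouds to the entropy estimate in a way that is delicate to control. The paper instead separates them spatially (condensate in $B(R)$, thermal cloud in $B(R+\ell)^c$, a thin vacuum annulus in between with $\ell$ larger than the hard-core radius), so the condensate--thermal interaction is manifestly finite and $o(\omega N)$, and the Jastrow factor is needed only among thermal particles.

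\textbf{One-particle density matrix.} Your Feynman--Hellmann route is a reasonable alternative, but the paper does something different and more direct. The same Fock-space localization gives two pieces $j_1\gamma_Nj_1$ and $j_2\gamma_Nj_2$. For the first, saturation of the lower bound inside $B(2R)$ plus the $T=0$ BEC results and the strict convexity of $M\mapsto E^{\mathrm{GP}}(M,a_N,\omega)$ force $j_1\gamma_Nj_1\to N_0 P^{\mathrm{GP}}$ in trace norm. For the second, the paper proves a \emph{novel coercivity inequality} for the bosonic relative entropy,
\[
\mathcal{S}(\gamma,\gamma_0)\;\gtrsim\;\tr\Bigl[\tfrac{1}{1+\gamma_0}\bigl(\tfrac{\gamma}{\sqrt{1+\gamma}}-\tfrac{\gamma_0}{\sqrt{1+\gamma_0}}\bigr)^2\Bigr]
\quad\text{and}\quad
\mathcal{S}(\gamma,\gamma_0)\;\gtrsim\;\frac{[\tr(\gamma-\gamma_0)]^2}{\tr[(\gamma+\gamma_0)(1+\gamma_0)]},
\]
which, combined with $S(\Gamma)\le s(\gamma)$, converts the near-saturation of the outer free energy into $\Vert j_2\gamma_Nj_2-\tilde\gamma_0^{\mathrm{gc}}\Vert_1=o(N)$. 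The off-diagonal piece $j_1^2\gamma_Nj_2^2$ is then shown to be $o(N)$ by a short Cauchy--Schwarz argument. Your perturbation scheme would instead require redoing the free-energy asymptotics uniformly in the perturbation parameter; feasible, but more work than the coercivity lemma.

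In short: the missing idea in your outline is the spatial Fock-space localization, which is what lets the paper recycle the zero-temperature machinery inside and reduce to the ideal gas outside, rather than having to merge a finite-temperature Dyson argument with the entropy bound.
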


\subsection{Remarks}
\label{sec:remarks}

\textbf {1.} 
The bounds leading to Theorem~\ref{thm:main} are uniform in $(\beta \omega)^{-3} N^{-1}$ as long as this quantity remains in a compact interval $[c,d]$, $0 \leq c<d<\infty$ of the real line. That is, we need not require $\lim T/T_{\mathrm{c}}$ to exist, only to stay bounded. In particular, Theorem~\ref{thm:main} continuously extrapolates to the known result at $T=0$.

\textbf {2.}
We have the following uniformity of our bounds in the scattering length: Assume $v(x) = a_v^{-2} \tilde{v}(x/a_v)$ for some potential $\tilde{v}$ with scattering length equal to one. By scaling, the scattering length of $v$ is given by $a_v$. Then the bounds in Theorem~\ref{thm:main} are uniform for $a_v \in (0,d]$ with $0 < d < \infty$.

\textbf {3.} 
The free energy $F_0(\beta,N,\omega)$ of the ideal Bose gas in the harmonic trap is 
of order $N\beta^{-1}\sim\omega (\beta \omega)^{-4}$, see Section~\ref{sec:freeenergyidealgas} below. The GP energy, on the other hand, is of order $N\omega \sim \omega (\beta \omega)^{-3} $ which is the scale up to which we have to control the free energy $F(\beta,N,\omega)$ of the interacting gas.

\textbf {4.} 
Theorem~\ref{thm:main} is stated and proven for the explicit choice of $\frac 14\omega^2 x^2$ as a trapping potential. This choice is mainly for notational simplicity, but is also motivated by the fact that the harmonic trap is the physically most relevant one. Rotational symmetry is not important, however. The treatment of an anisotropic trap requires only slight notational modifications and the interpretation of $\omega$ as the geometric mean of the principal frequencies of the parabolic potential.

\textbf {5.} 
The impressive cover picture of the first Bose-Einstein condensates in the July 1995 issue of Science, where the paper \cite{WieCor1995} appeared, shows the momentum distribution rather than the spatial distribution of the trapped gas. The momentum distribution of the thermal cloud is approximately an isotropic maxwellian of width $\sim \beta^{-1/2}$. The condensate momentum distribution, which is the modulus squared of the Fourier transform of the GP minimizer, is anisotropic because the trap was anisotropic. The width of the peak in momentum space is $\sim \omega^{1/2}$ and thus narrower than the thermal cloud by a factor
$(\beta\omega)^{1/2}$. Our Theorem confirms this picture rigorously for the first time.

\textbf {6.} 
The techniques used in the proof of Theorem~\ref{thm:main} carry over with moderate adjustments to the case of trapping potentials behaving as $|x|^{\alpha}$ with $\alpha<\infty$ for large $|x|$. The key point is that such potentials still lead to an asymptotic power law behavior of the eigenvalues of the related Schr\"odinger operator and cause a separation of length scales between the condensate and the thermal cloud.  It should be noted, however, that if $ \alpha>2$ the exponent 1/6 in \eqref{eq:scalesep} is replaced by a smaller exponent and a clear  separation of scales thus requires even larger values of $N$.\footnote{Even if  $N=10^6$ the scale separation in \eqref{eq:scalesep} is only by a factor 1/10.} If $\alpha\to \infty$ the whole system is confined in a box and the condensate and the thermal cloud are no longer spatially separated. Treating such systems will require a different approach from the one of the present paper. This is an important open problem because traps with very large $\alpha$ have recently become available in experiments \cite{Gau2013,Lopes2017}.

\textbf {7.} 
The fact that the transition temperature for BEC and the condensate fraction for the interacting gas stay the same as for the ideal gas relies essentially on the diluteness of the system,  expressed through the  scaling \eqref{eq:intpot}, and the $N\to\infty$ limit.
Under less restrictive conditions finite size corrections can be expected and have been seen in experiments \cite{Tamm_etal 2011}. Extending our results to capture these effects requires a proof of BEC beyond the GP limit, 
a difficult unsolved problem.
 
\textbf {8.} 
Since we work in the canonical ensemble, explicit expressions for the free energy $F_0(\beta,N,\omega)$, the one-particle density matrix $\gamma_{N,0}$ and the condensate fraction $N_0/N$ in the ideal Bose gas are not available. However, Theorem~\ref{thm:main} remains valid if these expressions are replaced by their corresponding grand canonical versions, which we recall in Section~\ref{sec:freeenergyidealgas}. This is due to the fact that the difference between the two ensembles is negligible in the limit of consideration here (see Section~\ref{sec:freeenergyidealgas} and the discussion in the Appendix for details).

\textbf {9.}
Theorem~\ref{thm:main} is also valid if we replace $F(\beta, N, \omega)$ by its grand canonical analogue $F^{\mathrm{gc}}(\beta, N, \omega)$ where a chemical potential $\mu$ is chosen such that the expected number of particles  equals $N$. We know from the Gibbs variational principle that $F^{\mathrm{gc}}(\beta, N, \omega) \leq F(\beta, N, \omega)$ holds. Hence the upper bound for the canonical free energy directly implies the upper bound for its grand canonical version. On the other hand, the proofs of the lower bound for the free energy and of the asymptotics of the one-particle density matrix are carried out in a way that is directly applicable to the grand canonical ensemble. 

\textbf {10.}
The strategy of the proof of Theorem~\ref{thm:main} also applies in two space dimensions with the obvious modifications, compare with \cite[Chapter~6.2]{Themathematicsofthebosegas}. 

\subsection{Supplementary 1: The scattering length}
\label{sec:scatteringlength}
Let us quickly summarize the basic facts about the scattering length. A more detailed discussion can be found in \cite[Appendix~C]{Themathematicsofthebosegas}. Our assumptions on $v$ guarantee that the zero energy scattering equation
\begin{equation}
-\Delta f(x) + \frac{1}{2} v(x) f(x) = 0 \quad \text{ with } \quad \lim_{|x| \to \infty} f(x) = 1
\label{eq:scatteringlength1}
\end{equation}  
has a unique solution. It satisfies
\begin{equation}
f(x) \simeq 1 - \frac{a}{|x|} \quad \text{ for } | x | \to \infty 
\label{eq:scatteringlength2}
\end{equation}
for some constant $a > 0$ which is called the scattering length of $v$. The scattering length has the natural interpretation of a combined measure for the range and the strength of the interaction potential $v$. If $v$ happens to be a hard core potential with range $r$ for example, one finds $a = r$, whereas for weak, integrable potentials $a\approx(8\pi)^{-1}\int_{\mathbb{R}^3} v(x) \text{d} x$.  For dilute quantum gases where collisions can be described in a low energy approximation, the leading order contribution to the scattering amplitude comes from $s$-wave scattering. In this approximation particles are scattered in every direction with the same probability and the scattering cross-section is given by $4 \pi a^2$.

In the case of nonnegative potentials, the scattering length can be characterized via the following variational principle. Denote by $X$ the set of functions in $H^1_{\textrm{loc}}(\mathbb{R}^3)$ with $\phi(x) \to 1$ for $|x| \to \infty$. Then the scattering length is given by
\begin{equation}
4 \pi a = \inf_{\phi \in X} \int_{\mathbb{R}^3} \left( \vert \nabla \phi(x) \vert^2 + \frac{1}{2} v(x) \vert \phi(x) \vert^2 \right) \text{d}x.
\label{eq:scatteringlength3}
\end{equation} 
Eq.~\eqref{eq:scatteringlength3} implies $8 \pi a \leq \int_{\mathbb{R}^3} v(x) \text{d}x$. By a trial state argument one can improve this inequality and show that it is strict if $v$ is not identically zero. 

\subsection{Supplementary 2: The chemical potential and the free energy  of the ideal Bose gas}
\label{sec:freeenergyidealgas}
For typical quantities of interest, as for example the free energy, there do not exist simple closed form expressions in the canonical ensemble. Nevertheless, in the thermodynamic limit as defined in Section~\ref{sec:lengthscales}, these quantities are very close to those computed in the grand canonical ensemble, which allows for explicit computations. We start by introducing several grand canonical quantities, and subsequently discuss their relation to their canonical versions.

The ideal Bose gas in the harmonic trap is described by the one-particle Hamiltonian
\begin{equation}
h=-\Delta + \frac{\omega^2 x^2}{4} - \frac{3 \omega}{2}.
\label{eq:idealbosegas1a}
\end{equation}
The expected number of particles in the condensate and the thermal cloud are given by
\begin{equation}
N_0^{\mathrm{gc}} = \frac{1}{e^{-\beta \mu_0}-1} \quad \text{ and } \quad N_{\mathrm{th}}^{\mathrm{gc}} = \sum_{n=1}^{\infty} \frac{g(n)}{e^{-\beta \mu_0} e^{\beta \omega n }-1},
\label{eq:idealbosegasparticlenumbers}
\end{equation}
respectively. Here $g(n)=(n+1)(n+2)/2$ is the degeneracy of the energy level $\omega n$ of $h$. For $\beta \omega \ll 1$ the sum in the above equation can be interpreted as a Riemann sum and one finds $N^{\mathrm{gc}}_{\mathrm{th}} \sim (\beta \omega)^{-3}$. The expected number of particles in the gas is $\overline{N} = N_0^{\mathrm{gc}} + N_{\mathrm{th}}^{\mathrm{gc}}$. It will be adjusted by the chemical potential $\mu_0$ such that $\overline{N} = N$ holds. To be more precise, we choose $\mu_0= \mu_0(\beta,N,\omega)$ such that
\begin{equation}
N = \sum_{n=0}^{\infty} \frac{g(n)}{e^{-\beta \mu_0} e^{\beta \omega n }-1}.
\label{eq:idealbosegas1}
\end{equation}
If $N_0^\mathrm{gc}\gg 1$  one has $-(\beta \mu_0)^{-1} \simeq N_0^{\mathrm{gc}}$. If $T<T_{\mathrm{c}}(1 - \epsilon)$ for some  $\epsilon > 0$, the chemical potential  behaves as $-\mu_0 \simeq T (N (1-(T/T_{\mathrm{c}})^3))^{-1}$, see Eq.~\eqref{eq:idealbosegas3}. On the other hand, for $T > T_{\mathrm{c}}(1 + \epsilon)$ one has $-\mu_0 \simeq \eta T$, where $\eta$ is  the unique solution of the equation
\begin{equation}
\left( \frac{T}{T_{\mathrm{c}}} \right)^3 \frac{1}{2} \int_{0}^{\infty} \frac{x^2}{e^{x+\eta}-1} \text{d}x = \zeta(3).
\label{eq:idealbosegas1c}
\end{equation}
The grand canonical free energy is given by  
\begin{equation}
F_0^{\mathrm{gc}}(\beta,N,\omega)=\frac{1}{\beta} \sum_{n=0}^{\infty} g(n) \ln\left( 1 - e^{\beta \mu_0} e^{-\beta \omega n} \right) + \mu_0 N.
\label{eq:gcfree} 
\end{equation}
In the same way as for the expected number of particles in the thermal cloud, the sum in the above equation can be interpreted as a Riemann sum and one finds $F_0^{\mathrm{gc}}(\beta,\mu_0,\omega) \sim -\omega (\beta \omega)^{-4}$, compare with \cite[Eqs.~(10.19)--(10.22)]{PitaevskiiStringari}. Note, however, that the error one makes by approximating the sum in Eq.~\eqref{eq:gcfree} by an integral is of order $\omega (\beta \omega)^{-3} \sim \omega N$ which is the order of the GP energy. Hence, we cannot do this replacement and have to work with the sum. The same is true in the case of $N^{\mathrm{gc}}_{\mathrm{th}}$. 

The canonical free energy $F_0$ is given by \eqref{eq:main2} with $v=0$ in $H_N$. We shall show in Corollary~\ref{cor:freeenergy} in the Appendix that $ | F_0(\beta,N,\omega) - F_0^{\mathrm{gc}}(\beta,N,\omega) | \leq T O( \ln N)$, so
\begin{equation}
F_0(\beta,N,\omega) = \frac{1}{\beta} \sum_{n=0}^{\infty} g(n) \ln\left( 1 - e^{\beta \mu_0} e^{-\beta \omega n} \right) + \mu_0 N + T O( \ln (N))
\label{eq:idealbosegas1b}
\end{equation}
with the last term much smaller than the main contribution in the thermodynamic limit. Moreover, Corollary~\ref{lem:particlenumbers} tells us that $\vert N_0 - N_0^{\mathrm{gc}} \vert \lesssim (\beta \omega)^{-3/2} (\ln N)^{1/2} + (\beta \omega)^{-1} \ln N$. The same bound holds for the expected numbers of particles in the thermal cloud.

\subsection{The proof strategy}
\label{sec:proofstrategy}
The rigorous mathematical implementation of the heuristic picture behind Theorem~\ref{thm:main} is technically rather involved.  For the convenience of the reader we describe here briefly the main steps before turning to the  proof in the remaining sections.

Section~\ref{sec:upperbound} contains an upper bound on the free energy $F(\beta,N,\omega)$ that has the correct asymptotic form \eqref{eq:mainresult4}. It utilizes the Gibbs variational principle \eqref{gibbsv}, so the task is to construct an appropriate trial state. The latter consists on the one hand of a pure state, describing the condensate particles, which are confined to a ball of some size $R$ with $\omega^{-1/2}\ll R \ll \omega^{-1}\beta^{-1/2}$, i.e., large compared to the oscillator length but small compared to the  length scale of the thermal cloud. On the other hand, the remaining particles, constituting the thermal cloud, will be described by a suitably modified Gibbs state confined to the complement of the ball. For the condensate, we can use the known zero-temperature results to obtain the GP energy. Even though the particle interactions affect the free energy to the order we are interested in only through the condensate, we cannot simply use the noninteracting Gibbs state as a trial state for the thermal cloud, since we want to allow for nonintegrable interaction potentials (e.g., having a hard-core) which can have infinite energy in that state. We thus  have to appropriately modify the trial state, avoiding configurations where the particles are too close. This creates some technical complications which have to be dealt with carefully. For various bounds, it turns out to be necessary to compare certain expressions for the ideal Bose in the canonical and grand canonical ensembles, respectively. The relevant estimates  are collected in Appendix A.

In Section~\ref{sec:lowerbound} we shall give a lower bound on the free energy, which together with the upper bound proves Eq.~\eqref{eq:mainresult4}. We shall use the technique of Fock-space localization to spatially divide the system into two, one confined to a ball of radius $R$ (chosen as above to satisfy $\omega^{-1/2}\ll R \ll \omega^{-1}\beta^{-1/2}$) and one confined to the complement. Inside the ball, the effect of the positive temperature is of lower order, and we can again utilize the known zero-temperature results to obtain a bound on the energy of these particles, as well as on the one-particle density matrix, which displays Bose-Einstein condensation into the GP minimizer. For the system in the complement of the ball, we can drop the interaction terms (using their positivity) to obtain the free energy of the ideal gas as a lower bound. 

To obtain information on the one-particle density matrix of the interacting Gibbs state (or approximate Gibbs state) we develop in Section~\ref{sec:densitymatrix} a novel lower bound on the free energy functional for an ideal Bose gas quantifying its coercivity. More precisely, in Lemma~\ref{lem:relativeentropy} we show that any approximate minimizer of the Gibbs free energy functional is, in a suitable sense, close to the actual minimizer. In combination with the result on Bose-Einstein condensation for the system inside the ball of radius $R$, this allows us to prove Eqs.~\eqref{eq:mainresult6} and~\eqref{eq:mainresult7}. 

Finally, Appendix~A collects certain properties of the ideal Bose gas that we need in our proofs. 

\section{Proof of the upper bound}
\label{sec:upperbound}
\subsection{The variational ansatz}
\label{subsec:ansatz}
In this section we construct a trial state $\Gamma_N$ whose free energy has the correct asymptotics \eqref{eq:mainresult4}. In the case of the ideal Bose gas in the harmonic trap, the characteristic length scale of the condensate is $\omega^{-1/2}$ while for the thermal cloud it is $\omega^{-1/2} (\beta \omega)^{-1/2}$ which is much larger in the limit we consider. The main idea of the proof is based on the expectation that in the GP limit this picture does not change if an interaction is turned on. What also does not change to leading order is the expected number of particles in the condensate and in the thermal cloud. Since the thermal cloud is therefore  much more dilute than the condensate, the free energy of the particles outside the condensate is not affected by the interaction to the same order of magnitude as the condensate. The following analysis makes this intuition precise. 

The first step in the construction of our trial state $\Gamma_N$ is to decompose  space into three  disjoint parts, a ball $B(R)$ with radius $R>0$, an annulus $A(R,R+\ell)$ with radii $R$ and $R+\ell$ and the complement of a ball with radius $R+\ell$. All those sets are assumed to be centered around zero. For later convenience we will refer to them as {\em Region~1, 2 and 3}, respectively. The length $R$ will be chosen such that $\omega^{-1/2} \ll R \ll \omega^{-1/2} (\beta \omega)^{-1/2}$, i.e.,  between the length scale of the condensate and the one of the thermal cloud. Choosing $R$ like this, we will be able to spatially separate the system into two parts, a condensate living in $B(R)$ and a thermal cloud living in $B(R+\ell)^{\mathrm{c}}$ without affecting each of them too much. In Region~$2$ there will be no particles in the trial state. The length $\ell \sim \omega^{-1/2} N^{-1}$ is chosen such that there is no hard core interaction between particles in Region~$1$ and particles in Region~$3$. 

The one-particle Hilbert space and the Fock space naturally decompose as 
\begin{equation}
L^2(\mathbb{R}^3) \cong  \underbrace{L^2(B(R))}_{\mathcal{H}_1} \oplus \underbrace{L^2(A(R,R+\ell))}_{\mathcal{H}_2} \oplus \underbrace{L^2(B(R+\ell)^{\mathrm{c}})}_{\mathcal{H}_3}
\label{eq:upperbound1}
\end{equation}
and $\mathcal{F}(L^2(\mathbb{R}^3)) \cong \mathcal{F}(\mathcal{H}_1) \otimes \mathcal{F}(\mathcal{H}_2) \otimes \mathcal{F}(\mathcal{H}_3)$, respectively. The heuristics in Section~\ref{sec:lengthscales} tells us that we can neglect the contribution from the thermal cloud in Region~$1$ since it is too dilute to contribute with a macroscopic number of particles. The condensate will be described by the ground state of the Hamiltonian
\begin{equation}
H^{\mathrm{D}}_{\leq R} = \sum_{i=1}^{N_0} \left( - \Delta^\mathrm{D}_{i,\leq R} + \frac{\omega^2 x_i^2}{4} - \frac{3 \omega}{2} \right) + \sum_{1 \leq i < j \leq N_0} v_N(x_i - x_j)
\label{eq:upperbound2}
\end{equation}
acting on $L_{\mathrm{sym}}^2(B(R)^{N_0})$,   the space of permutation symmetric square integrable functions depending on $N_0$ variables. Here, $\Delta^{\mathrm{D}}_{i,\leq R}$ denotes the Laplacian on $B(R)$ with Dirichlet boundary conditions, acting on the $i$-th particle, and  $N_0=N_0(\beta,N,\omega)$ is the expected number of particles in the condensate of an ideal Bose gas in the canonical ensemble. Strictly speaking $N_0$ is not necessarily an integer and we should rather choose $\lceil N_0 \rceil$, the smallest integer larger than  or equal to $N_0$, in the definition of $H^{\mathrm{D}}_{\leq R}$ instead. In the end, this will lead to $1/N$ corrections which do not cause any additional difficulties, however. In order not to complicate the presentation unnecessarily, we therefore assume that $N_0$ is an integer. By $\Psi^{\mathrm{D}}_{\leq R}$ we denote the unique  ground state wavefunction of $H^{\mathrm{D}}_{\leq R}$ with energy $E^{\mathrm{D}}_{\leq R}$. The above construction will allow us to use existing results for the ground state of a Bose gas in a trap. 

In Region~$3$ on the other hand, the condensate does not contribute to the free energy to leading order because its extension $\omega^{-1/2}$ is much smaller than $R$. To describe the thermal cloud, we define the  one-particle Hamiltonian 
\begin{equation}
h^{\mathrm{D}}_{\geq R+\ell} = - \Delta^{\mathrm{D}}_{\geq R + \ell} + \frac{\omega^2 x}{4} - \frac{3 \omega}{2},
\label{eq:upperbound3a}
\end{equation}
where $\Delta^{\mathrm{D}}_{i,\geq R + \ell}$ denotes the Laplacian on $B(R+\ell)^{\mathrm{c}}$ with Dirichlet boundary conditions. We also define the noninteracting $N_\mathrm{th}$-particle operator with energy cut-off $\Lambda$
\begin{equation}
H^{\mathrm{D},\Lambda}_{\geq R+\ell} = \sum_{i=1}^{N_{\mathrm{th}}} \mathds{1}\left( h^{\mathrm{D}}_{\geq R+\ell,i} \leq \Lambda \right) h^{\mathrm{D}}_{\geq R+\ell,i}
\label{eq:upperbound3b}
\end{equation}
acting on $L^2_{\mathrm{sym}}((B(R+\ell)^{\mathrm{c}})^{N_{\mathrm{th}}})$ with $N_{\mathrm{th}} = N - N_0$ and $h^{\mathrm{D}}_{\geq R+\ell,i}$  the operator in Eq.~\eqref{eq:upperbound3a} acting on the $i$-th particle. By $\mathds{1}( h^{\mathrm{D}}_{\geq R+\ell,i} \leq \Lambda )$ we denote the spectral projection onto the subspace of $\mathcal{H}_3$ where $h^{\mathrm{D}}_{\geq R+\ell,i}$ is at most $\Lambda$. The cut-off $\Lambda$ in Eq.~\eqref{eq:upperbound3b} is introduced for technical reasons, which will be explained in the text preceding Lemma~\ref{lem:denominators} in Subsection \ref{subsec:thermalcloud_upper} below. Let the many-particle projection $P_{N_{\mathrm{th}}}$ be defined by
\begin{equation}
P_{N_{\mathrm{th}}} = \prod_{i=1}^{N_{\mathrm{th}}} \mathds{1}\left( h^{\mathrm{D}}_{\geq R+\ell,i} \leq \Lambda \right).
\label{eq:upperbound3c}
\end{equation}
Its range consists of linear combinations of symmetrized products of eigenfunctions of $h^{\mathrm{D}}_{\geq R+\ell}$, where each of these one-particle functions has energy at most $\Lambda$. By
\begin{equation}
\Gamma_{\geq R + \ell}^{\mathrm{D},\Lambda} = \frac{e^{-\beta H^{\mathrm{D},\Lambda}_{\geq R+\ell}} P_{N_{\mathrm{th}}}}{\tr \left[e^{ - \beta H^{\mathrm{D},\Lambda}_{\geq R+\ell} }P_{N_{\mathrm{th}}} \right]}
\label{eq:upperbound4}
\end{equation}
we denote the canonical Gibbs state associated with the Hamiltonian $H^{\mathrm{D},\Lambda}_{\geq R+\ell}$. 

Since the interaction potential may include a hard core repulsion between the particles we have to add a correlation structure to the state $\Gamma_{\geq R + \ell}^{\mathrm{D},\Lambda}$. For this purpose we define the Jastrow-type function \cite{Jastrow}
\begin{equation}
F(x_1,\ldots,x_{N_{\mathrm{th}}}) = \prod_{1 \leq i < j \leq N_{\mathrm{th}}} f_b( x_i - x_j  ) \quad \text{ with } \quad f_b(x) = \begin{cases} f_0(|x|)/f_0(b) & \text{ for } |x| < b \\ 1 & \text{ for } |x| \geq b,  \end{cases}
\label{eq:upperbound5}
\end{equation}
where $b$ is a parameter to be determined and $f_0(|x|)$ is the unique solution of the zero-energy scattering equation \eqref{eq:scatteringlength1} with $v$ replaced by $v_N$. 
Since $f_0$ is an increasing function, $0 \leq f_b(x) \leq 1$ for all $x\in\mathbb{R}^3$. 
The parameter $b$ will be chosen to be larger than the scattering length $a_N$ but of the same order of magnitude.

We expand $\Gamma_{\geq R + \ell}^{\mathrm{D},\Lambda} = \sum_{\alpha = 1}^\infty \lambda_{\alpha} \vert \Psi_{\alpha} \rangle\langle \Psi_{\alpha} \vert$ where we choose the functions $\Psi_{\alpha}$ as symmetrized products of eigenfunctions of the one-particle Hamiltonian $h^{\mathrm{D}}_{\geq R +\ell}$ with energy at most $\Lambda$. The energy of $\Psi_{\alpha}$ is denoted by $E_{\alpha}$, that is, $H^{\mathrm{D}}_{\geq R + \ell} \Psi_{\alpha} = E_{\alpha} \Psi_{\alpha}$. The modified state $\tilde{\Gamma}_{\geq R + \ell}^{\mathrm{D},\Lambda}$ is defined as
\begin{equation}
\tilde{\Gamma}_{\geq R + \ell}^{\mathrm{D},\Lambda} = \sum_{\alpha = 1}^\infty \lambda_{\alpha} \vert \Phi_{\alpha} \rangle\langle \Phi_{\alpha} \vert \quad \text{ where } \quad \Phi_{\alpha} = \frac{F \Psi_{\alpha}}{\left\Vert F \Psi_{\alpha} \right\Vert}.
\label{eq:upperbound6}
\end{equation}
Here and in the following, $\Vert \Psi \Vert$ denotes the $L^2$-norm of $\Psi$. 

After these preparations we can now finally define our trial state $\Gamma_N$ on $\mathcal{F}(\mathcal{H}_1) \otimes \mathcal{F}(\mathcal{H}_2) \otimes \mathcal{F}(\mathcal{H}_3)$ to be
\begin{equation}
\Gamma_N = \vert \Psi^{\mathrm{D}}_{\leq R} \rangle\langle \Psi^{\mathrm{D}}_{\leq R} \vert \otimes \vert \Omega \rangle\langle \Omega \vert \otimes \tilde{\Gamma}_{\geq R+ \ell}^{\mathrm{D},\Lambda},
\label{eq:upperbound7}
\end{equation}
where $\Omega$ denotes the Fock space vacuum in $\mathcal{F}(\mathcal{H}_2)$. 

Let
\begin{equation}
H^{\mathrm{D}}_{\geq R + \ell} = \sum_{i=1}^{N_{\mathrm{th}}} h^{\mathrm{D}}_{\geq R+\ell,i}
\label{eq:upperbound3}
\end{equation}
be the noninteracting $N_\mathrm{th}$-particle Hamiltonian in the region $B(R+\ell)^{\mathrm{c}}$ without the cut-off. Because the first two factors in Eq.~\eqref{eq:upperbound7} do not contribute to the entropy of $\Gamma_N$, its free energy with respect to the original Hamiltonian \eqref{eq:main1} is given by
\begin{equation}
\tr \left( H_N \Gamma_N \right) - T S(\Gamma_N) = E^{\mathrm{D}}_{\leq R} + \tr\left[ \left( H^{\mathrm{D}}_{\geq R + \ell} + V_{33} \right) \tilde{\Gamma}_{\geq R + \ell}^{\mathrm{D},\Lambda} \right] - T S\left( \tilde{\Gamma}_{\geq R + \ell}^{\mathrm{D},\Lambda} \right) + \tr\left( V_{13} \Gamma_N \right).
\label{eq:upperbound8}
\end{equation}
Here $V_{ij}$ denotes the interaction between Regions $i$ and $j$. The remaining part of this section will be devoted to finding an appropriate upper bound to the right-hand side of Eq.~\eqref{eq:upperbound8}, which is an upper bound for $F(\beta,N,\omega)$ by the Gibbs variational principle \eqref{gibbsv}. In order to simplify the notation, we will from now on replace $R+\ell$ everywhere by $R$. Since $\ell \ll R$ the number $\ell$ does not enter the proofs explicitly, except when we bound the interaction energy between the condensate and the thermal cloud.

In the rest of the proof of the upper bound we assume that $(\beta\omega)^{-1} \lesssim N^{1/3}$ (i.e., $T\lesssim T_c$), 
as well as $\Lambda \gg T$ and $\omega^{-1/2} \ll R \leq \lambda \omega^{-1} \beta^{-1/2}$ for some $\lambda >0$ that we choose small enough. 

\subsection{Preparatory lemmas}
\label{subsec:prep}
The thermal cloud in Region~3, that is, in $B(R)^{\mathrm{c}}$, is described by an ideal Bose gas with the one-particle Hamiltonian $\mathds{1}(h^{\mathrm{D}}_{\geq R} \leq \Lambda) h^{\mathrm{D}}_{\geq R}$. We would like to relate its canonical free energy to that of the gas living in all of $\mathbb{R}^3$  and without a cut-off. To that end, we will first compare it to the grand canonical free energy with the help of Corollary~\ref{cor:freeenergy}. Here explicit formulas are available which allow us to quantify the change in energy caused by the Dirichlet boundary conditions at $\partial B(R)$ and by the energy cut-off $\Lambda$. As a preparation we state and prove in this subsection four Lemmas.

The first one, Lemma~\ref{lem:traces1},  is a general statement  allowing to compare traces of functions of Schr\"odinger operators with different boundary conditions. 
Lemma~\ref{lem:traces2} 
estimates the differences between traces of functions of Schr\"odinger operators with Neumann boundary conditions acting on $L^2(B(R)^{\mathrm{c}})$ and those acting on $L^2(\mathbb{R}^3)$ without boundary conditions. Together these two lemmas are used in the sequel to quantify the difference between the grand canonical free energy  of the system living in Region~3 and that of the system living in $\mathbb{R}^3$, and also the difference of the expected particle numbers. Lemmas~\ref{lem:traces1} and~\ref{lem:traces2}  also enter the proof of Lemma~\ref{lem:chemicalpotential} which concerns the effects of the boundary condition at $\partial B(R)$ and the cut-off $\Lambda$ on the chemical potential. 

To show that the interaction energy in the thermal cloud and between the thermal cloud and the condensate is of lower order we need an estimate on the $L^{\infty}$-norm of the density of the canonical ideal gas in Region~3. Using Proposition~\ref{cor:density}, we can estimate the canonical density in terms of the grand canonical density. To close the argument, we need a bound on the $L^{\infty}$-norm of the latter  showing that the system is dilute in a suitable sense. Such a bound is given in Lemma~\ref{lem:densitybound} whose proof uses also Lemma~\ref{lem:chemicalpotential}.
 
\begin{lemma}
\label{lem:traces1}
Let $f \in \mathcal{C}^2((0,\infty),\mathbb{R})$ be a convex, monotone decreasing and nonnegative function. We assume $|f'(x)| \lesssim x^{-3}$ for $x \rightarrow 0$ as well as
\begin{equation}
\int_0^{\infty}  f(x) \left( x^{1/2} + x^2 \right) \text{d}x < \infty \quad \text{ and } \quad \int_0^{\infty} | f'(x) | x^{2} \text{d}x < \infty.
\label{eq:lemupperbound10a}
\end{equation}
Denote by $h^{\mathrm{N/D}}_{\geq R}$ the harmonic oscillator Hamiltonian $-\Delta_{\geq R}^{\mathrm{N/D}} + \frac{\omega ^2 x^2}{4} - \frac{3 \omega}{2}$ acting on functions in $L^2(B(R)^{\mathrm{c}})$ with  Neumann/Dirichlet boundary conditions at $\partial B(R)$ and choose $\mu$ with $ \mu \leq C \omega$ for some $C>0$. We then have
\begin{equation}
\tr\left[ f(\beta (h_{\geq R}^{\mathrm{N}}-\mu)) \right] \leq \tr\left[ f \left( \beta (h_{\geq R}^{\mathrm{D}} -\mu) \right) \right] + O\left( \frac{1}{\beta^2 \omega^3 R^2} \right)  + O\left( \frac{R^3}{\beta^{3/2}} \right). 
\label{eq:lemupperbound10}
\end{equation} 
\end{lemma}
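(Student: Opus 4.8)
The plan is to compare the two traces by constructing a bounded map between the relevant Hilbert spaces and using the convexity and monotonicity of $f$ together with the variational (min-max) characterization of eigenvalues, in the spirit of the classical Dirichlet--Neumann bracketing. Concretely, Neumann boundary conditions give the smaller quadratic form domain restriction, hence $h_{\geq R}^{\mathrm{N}} \leq h_{\geq R}^{\mathrm{D}}$ in the sense of quadratic forms (after extending Dirichlet eigenfunctions by zero), so the ordered eigenvalues satisfy $\lambda_k^{\mathrm{N}} \leq \lambda_k^{\mathrm{D}}$ for all $k$. Since $f$ is monotone decreasing, $f(\beta(\lambda_k^{\mathrm{N}}-\mu)) \geq f(\beta(\lambda_k^{\mathrm{D}}-\mu))$, which unfortunately goes the \emph{wrong} way for a direct bound. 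So the real content is that the \emph{difference} of the two traces is controlled: one must show $\sum_k \bigl[ f(\beta(\lambda_k^{\mathrm{N}}-\mu)) - f(\beta(\lambda_k^{\mathrm{D}}-\mu)) \bigr]$ is bounded by the two error terms on the right-hand side of \eqref{eq:lemupperbound10}. By the mean value theorem this difference is $\sum_k |f'(\xi_k)| \,\beta\,(\lambda_k^{\mathrm{D}}-\lambda_k^{\mathrm{N}})$ for some intermediate point $\xi_k$, so I need (i) an estimate on the ``spectral shift'' $\sum_k (\lambda_k^{\mathrm{D}}-\lambda_k^{\mathrm{N}})$ weighted against $|f'|$, and (ii) control of $|f'(\xi_k)|$ using the hypothesis $|f'(x)|\lesssim x^{-3}$ near $0$ and the decay encoded in \eqref{eq:lemupperbound10a}.

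The key quantitative input is a bound on the accumulated eigenvalue shift. Here I would use the standard fact that the trace of the heat-kernel difference (or resolvent difference) between Dirichlet and Neumann Laplacians on $B(R)^{\mathrm c}$ is a boundary term, governed by the surface area $|\partial B(R)| \sim R^2$. More precisely, writing everything via the layer-cake / integral representation $f(\beta(h-\mu)) = \int_0^\infty \mathds{1}(\beta(h-\mu) \leq t)\,(-\mathrm{d} f)(t)$ after an appropriate change of variables, the problem reduces to estimating $\tr[\mathds{1}(h_{\geq R}^{\mathrm N} \leq E)] - \tr[\mathds{1}(h_{\geq R}^{\mathrm D}\leq E)]$, i.e. the difference of counting functions $N^{\mathrm N}(E) - N^{\mathrm D}(E)$. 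By Weyl-type estimates with boundary corrections for the harmonic oscillator Hamiltonian restricted to the exterior of a ball, this difference is $O(R^2 \cdot (\text{bulk density of states at energy }E))$ plus lower order; integrating against $\mathrm d f$ and using the two integrability conditions in \eqref{eq:lemupperbound10a} (the $x^{1/2}$ weight handles the phase-space volume growth, the $x^2$ weight the boundary term, and $\int |f'| x^2$ the derivative form) then produces exactly the two stated error terms: $O((\beta^2\omega^3 R^2)^{-1})$ from the region where the harmonic confinement dominates and the spectral gap is $\sim\omega$, and $O(R^3 \beta^{-3/2})$ from the complementary high-energy regime where the operator behaves like a free Laplacian on a domain of boundary size $R^2$ (so its contribution scales like $R^2$ times a one-dimensional free trace $\sim \beta^{-1/2}\cdot(\text{something})$, after accounting for the correct powers).

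An alternative, perhaps cleaner route that I would actually carry out: use the explicit operator inequality $f(\beta(h^{\mathrm N}-\mu)) \leq f(\beta(h^{\mathrm D}-\mu)) + \bigl[f(\beta(h^{\mathrm N}-\mu)) - f(\beta(h^{\mathrm N}_{\text{free-like}}-\mu))\bigr]$-type decompositions, peeling off the harmonic part: on $B(R)^{\mathrm c}$ with $R \ll \omega^{-1}\beta^{-1/2}$, split energies at a threshold $\Lambda_0 \sim \beta^{-1}$. Below the threshold, only finitely many ($\sim (\beta\omega)^{-3}$ in the bulk, but near $\partial B(R)$ only $\sim R^2 \beta^{-1}$) eigenvalues matter and their Dirichlet--Neumann shift is $O(R^{-2})$ per mode near the boundary; summing and multiplying by $\sup|f'| \cdot \beta$ gives $O(\beta^2 \omega^3 R^2)^{-1}$ after inserting the density of states. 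Above the threshold, $f$ and $f'$ are tiny by \eqref{eq:lemupperbound10a}, and the crude bound $\lambda_k^{\mathrm D} - \lambda_k^{\mathrm N} \leq$ (a shift of index by $O(R^2 E^{1/2})$) combined with $\int f(x) x^{1/2}\mathrm d x <\infty$ yields the $O(R^3\beta^{-3/2})$ term. The main obstacle I anticipate is making the boundary-layer eigenvalue-shift estimate rigorous and uniform in $\mu \leq C\omega$ — i.e. getting a clean, dimensionally correct bound on $N^{\mathrm N}(E) - N^{\mathrm D}(E)$ for the exterior-of-ball harmonic oscillator — since off-the-shelf Weyl remainder estimates are usually stated for the free Laplacian or for bounded domains, and here the domain is unbounded with a growing potential; I would handle this by a partition of unity separating a neighborhood of $\partial B(R)$ (where the potential is essentially constant $\sim \omega^2 R^2 \ll \beta^{-1}$, so one compares with free Dirichlet/Neumann Laplacians on a half-space and picks up the $R^2$ surface factor) from the far region (where Dirichlet and Neumann conditions at $\partial B(R)$ are exponentially irrelevant).
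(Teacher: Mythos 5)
Your intuition about where the errors come from (a boundary layer of width $\sim R$ near $\partial B(R)$, separating a ``harmonic-confinement'' regime from a ``free-Laplacian'' regime) is right, and the scalings you predict for the two error terms match the statement. But the primary route you propose does not close, and the obstacle you flag at the end — establishing a clean, uniform bound on $N^{\mathrm N}(E)-N^{\mathrm D}(E)$ for the exterior-of-ball harmonic oscillator — is precisely the missing ingredient, not a finishing technicality. A two-term Weyl law with surface corrections for an \emph{unbounded} exterior domain with a \emph{growing} potential, uniform over the energy range needed (and over $\mu \leq C\omega$), is a nontrivial result that you cannot cite off the shelf, and proving it would be at least as much work as the lemma itself. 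The mean-value-theorem pairing $\sum_k |f'(\xi_k)|\beta(\lambda_k^{\mathrm D}-\lambda_k^{\mathrm N})$ has the same problem dressed differently: you need uniform control of the cumulative index shift, which is again the two-term Weyl remainder.

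The paper avoids this entirely by \emph{not} trying to compare the two counting functions globally. Instead it writes $f$ via its convexity as $f(x)=\int_0^\infty f''(E)[E-x]_+\,\mathrm d E$, and applies the IMS localization formula to the trace of $[E-\beta(h^{\mathrm N}_{\geq R}-\mu)]_+$ with a partition $j_1^2+j_2^2=1$ where $j_1$ is supported in $B(2R)$ and $j_2$ in $B(R)^{\mathrm c}$. This \emph{reduces} the Neumann trace, from the start, to a sum of two traces: one for the Dirichlet operator $h^{\mathrm D}_{\geq R}$ shifted by the localization error $3R^{-2}$, and one for an explicit operator on the annulus $A(R,2R)$ with mixed Neumann/Dirichlet conditions. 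The surface effect is thereby \emph{captured} in the annulus operator rather than \emph{estimated} as a counting-function remainder. Each piece is then bounded using only \emph{one-term} (leading-order) Weyl asymptotics: the annulus operator's eigenvalues are $\gtrsim \alpha^{2/3}/R^2$, giving the $O(R^3\beta^{-3/2})$ term after a Riemann-sum comparison; and the $3R^{-2}$ shift on the Dirichlet piece is undone by convexity of $f$, producing a $\tr[f'(\cdot)]$ correction of size $O((\beta^2\omega^3 R^2)^{-1})$ using only crude lower bounds on the Dirichlet eigenvalues ($e_\alpha(h^{\mathrm D}_{\geq R})\geq \max\{e_\alpha(h),\,\omega^2 R^2/4 - 3\omega/2\}$). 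No two-term Weyl law appears anywhere. Your closing sentence about ``a partition of unity separating a neighborhood of $\partial B(R)$'' is in fact the right instinct, but in your sketch it is relegated to a subroutine for the Weyl remainder; in the actual proof it is the opening move that makes the Weyl remainder unnecessary. That restructuring is the idea your proposal is missing.
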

\begin{proof}
By the assumptions on $f$ we can write
\begin{equation}
f(x)= \int_0^{\infty}f^{''}(E)[E-x]_+ \text{d}E,
\label{eq:lemupperbound11b}
\end{equation}
where $[x]_{+}=\max\lbrace x, 0 \rbrace$. Since $f$ is convex, $f^{''}$ is nonnegative.

Let $j_1,j_2 \in \mathcal{C}^{\infty}(\mathbb{R}^3)$ be such that $j_1(x)^2+j_2(x)^2 = 1$ for all $x \in \mathbb{R}^3$. We further assume that $j_1(x)$ equals one for $x \in B(R)$ and zero for $x \in B(2R)^c$, as well as $| \nabla j_1(x) |^2 + | \nabla j_2(x) |^2 \leq 3 R^{-2}$. An application of the IMS localization formula (see e.g. \cite{SimonSchroedingerOps}) and the inequality $j_i(x) \mathds{1}(h^{\mathrm{N}}_{\geq R} \leq E) j_i(x) \leq \mathds{1}$ tell us that
\begin{align}
\tr \left[ E - \beta \left(h^{\mathrm{N}}_{\geq R}-\mu \right) \right]_{+} &= \sum_{i = 1}^2 \tr \left[ j_{i} \left( E - \beta \left(h_{\geq R}^{\mathrm{N}}-\mu \right) + \sum_{l=1}^2 (\nabla j_{l})^2 \right) j_{i} \mathds{1}\left( \beta \left( h^{\mathrm{N}}_{\geq R} - \mu \right) \leq E \right) \right] \nonumber \\
&\leq \tr \left[ \chi_1 \left( E - \beta \left(h^{\mathrm{N,D}}_{\geq R ,\leq 2R}-\mu \right) + \sum_{l=1}^2 (\nabla j_{l})^2 \right) \chi_1 \right]_{+}  \label{eq:lemupperbound12}  \\
&\hspace{0.85cm}+ \tr \left[ \chi_2 \left( E - \beta \left(h_{\geq R}^{\mathrm{D}}-\mu \right) + \sum_{l=1}^2 (\nabla j_{l})^2 \right) \chi_2 \right]_{+} \nonumber
\end{align}
holds. Here $\chi_i$ denotes the characteristic function of the support of $j_i$ and $h^{\mathrm{N,D}}_{\geq R ,\leq 2R}$ is the harmonic oscillator Hamiltonian in the annulus $A(R,2R)$ with Neumann boundary conditions on $\partial B(R)$ and Dirichlet boundary conditions on $\partial B(2R)$. To arrive at Eq.~\eqref{eq:lemupperbound12}, we used that the cut-off functions $j_i$ introduce additional Dirichlet boundary conditions for the operators under the trace. Together with Eq.~\eqref{eq:lemupperbound11b} and $| \nabla j_1(x) |^2 + | \nabla j_2(x) |^2 \leq 3 R^{-2}$, we conclude that
\begin{equation}
\tr\left[ f(\beta (h_{\geq R}^{\mathrm{N}}-\mu)) \right] \leq \tr\left[ f \left( \beta \left(h_{\geq R}^{\mathrm{D}} - \mu - 3 R^{-2} \right) \right) \right] + \tr\left[ f\left( \beta \left(  h^{\mathrm{N,D}}_{\geq R,\leq 2R} - \mu - 3 R^{-2} \right) \right) \right] \label{eq:lemupperbound13}
\end{equation}
holds.

Let us continue with the first term on the right-hand side of the above equation. Using the convexity of $f$, we see that
\begin{align}
\tr\left[ f \left( \beta \left(h_{\geq R}^{\mathrm{D}}-\mu - 3 R^{-2} \right) \right) \right] & \leq  \tr\left[ f \left( \beta (h_{\geq R}^{\mathrm{D}} -\mu) \right) \right] \label{eq:lemupperbound14} \\
& \quad -3 \beta R^{-2} \tr\left[ f' \left( \beta \left( h^{\mathrm{D}}_{\geq R} - \mu - 3 R^{-2} \right) \right) \right]. \nonumber
\end{align}
In order to give an upper bound for the second term on the right-hand side of Eq.~\eqref{eq:lemupperbound14}, we note that the $\alpha$-th eigenvalue $e_{\alpha}(h_{\geq R}^{\mathrm{D}})$ of $h_{\geq R}^{\mathrm{D}}$ can be bounded from below by $e_{\alpha}(h_{\geq R}^{\mathrm{D}}) \geq \max\lbrace e_{\alpha}(h), \omega^2 R^2/4 -3 \omega/2 \rbrace$. Here $e_{\alpha}(h)$ denotes the $\alpha$-th eigenvalue of the harmonic oscillator Hamiltonian $h$ in \eqref{eq:idealbosegas1a} acting on $L^2(\mathbb{R}^3)$. We thus have
\begin{align}
\tr\left[ f' \left( \beta \left(h^{\mathrm{D}}_{\geq R} - \mu - 3 R^{-2} \right) \right) \right] &\geq \sum_{\alpha=0}^{\alpha_0} f'\left(\beta \left( \frac{\omega^2 R^2}{4} - \frac{3\omega}{2} - \mu - 3 R^{-2} \right) \right) \label{eq:lemupperbound15} \\
&\hspace{2cm} + \sum_{\alpha > \alpha_0} f'\left(\beta \left(e_{\alpha}(h) -\mu - 3 R^{-2} \right) \right) \nonumber
\end{align}
for any $\alpha_0 \in \mathbb{N}$. If we choose $\alpha_0 \sim 1$ such that $e_{\alpha_0}(h) - \mu - 3 R^{-2} \geq \omega$ and use that $|f'(x)| \lesssim x^{-3}$ for $x \rightarrow 0$, 
we see that the first term on the right-hand side of Eq.~\eqref{eq:lemupperbound15} is at most of the order  $(\beta \omega^2 R^2)^{-3}$. Our assumptions on $R$ imply $(\beta \omega^2 R^2)^{-3} \ll (\beta \omega)^{-3}$. To treat the second term, we recall that the eigenvalue $\omega n$ of $h$ is $g(n)$-fold degenerate, where $g(n) = (n+2)(n+1)/2$. Hence, for an appropriately chosen integer $m_0 > 0$, we can write the second term on the right-hand side of Eq.~\eqref{eq:lemupperbound15} as
\begin{align}
\sum_{\alpha>\alpha_0} f'\left(\beta \left( e_{\alpha}(h) - \mu - 3 R^{-2} \right) \right) &= \sum_{n > m_0} g(n) f'\left( \beta \left(\omega n - \mu - 3 R^{-2} \right) \right) \label{eq:lemupperbound16} \\
&\gtrsim \frac{1}{(\beta \omega)^3} \int_{0}^{\infty} x^{2} f'(x) \text{d}x. \nonumber
\end{align}
To obtain the last line, we used that the sum in the line above can be interpreted as a Riemann sum approximating the integral in the last line, as well as the bound $e_{\alpha}(h) - \mu -  3 R^{-2} \geq \omega$. We now collect the results of Eqs.~\eqref{eq:lemupperbound14}--\eqref{eq:lemupperbound16} and obtain
\begin{equation}
\tr\left[ f \left( \beta \left(h_{\geq R}^{\mathrm{D}}-\mu - 3 R^{-2} \right) \right) \right] \leq \tr\left[ f \left( \beta \left(h_{\geq R}^{\mathrm{D}}-\mu \right) \right) \right] + O\left( \frac{1}{\beta^2 \omega^3 R^2} \right) \label{eq:lemupperbound16b}
\end{equation}
as an upper bound on the first term on the right-hand side of Eq.~\eqref{eq:lemupperbound13}.

It remains to give a bound on the second term on the right-hand side of Eq.~\eqref{eq:lemupperbound13}. 
Using the Weyl asymptotics  \cite[Satz~XI]{weyl}  for  the  eigenvalues of the Laplacian in $A(R,2R)$,  one sees that
\begin{equation}
e_{\alpha} \left(h^{\mathrm{N,D}}_{\geq R ,\leq 2R}\right) - \mu - 3  R^{-2} \geq \frac{\omega^2 R^2}{8} + \frac{C \alpha^{2/3}}{R^2} \label{eq:lemupperbound17}
\end{equation}
holds for some appropriately chosen constant $C>0$. This allows us to estimate the second term on the right-hand side of Eq.~\eqref{eq:lemupperbound13} by
\begin{align}
\tr\left[ f\left( \beta \left( h^{\mathrm{N,D}}_{\geq R,\leq 2R}-\mu - 3R^{-2} \right) \right) \right] &\leq \sum_{\alpha = 0}^{\infty} f\left( \beta \left( \frac{\omega^2 R^2}{8} + \frac{C \alpha^{2/3}}{R^2} \right) \right) \label{eq:lemupperbound18} \\
&\lesssim \frac{R^3}{\beta^{3/2}} \int_0^{\infty} x^{1/2} f(x) \text{d}x. \nonumber
\end{align} 
In combination, Eqs.~\eqref{eq:lemupperbound13},~\eqref{eq:lemupperbound16b} and~\eqref{eq:lemupperbound18} yield Eq.~\eqref{eq:lemupperbound10}. 
\end{proof}
\begin{lemma}
\label{lem:traces2}
Let $f : \mathbb{R}_+ \to \mathbb{R}$ be a nonnegative and monotone decreasing measurable function. We also assume that
\begin{equation}
\int_0^{\infty}  f(x) \sqrt{x} \text{d}x < \infty \quad \text{ as well as } \quad f(x) \lesssim x^{-1} \text{ for } x \to 0.
\label{eq:lemupperbound110}
\end{equation}
Denote by $h^{\mathrm{N}}_{\geq R}$ the harmonic oscillator Hamiltonian $-\Delta_{\geq R}^{\mathrm{N}} + \frac{\omega ^2 x^2}{4} - \frac{3 \omega}{2}$ acting on functions in $L^2(B(R)^{\mathrm{c}})$ with Neumann boundary conditions at $\partial B(R)$ and choose $\mu \leq c \omega$ with $c<1$. Then
\begin{equation}
\tr\left[ f(\beta (h_{\geq R}^{\mathrm{N}}-\mu)) \right] \geq \tr\left[ \mathds{1}\left( h \geq \omega \right)  f \left( \beta (h - \mu) \right) \right] - O\left( \frac{R^3}{\beta^{3/2}} \right)
\label{eq:lemupperbound111}
\end{equation} 
where $h$ is defined in \eqref{eq:idealbosegas1a}.
\end{lemma}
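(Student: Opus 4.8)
The plan is to reduce the inequality to a pointwise lower bound on spectral counting functions, exploiting that $f$ is monotone decreasing. Writing $f(x)=\int_0^\infty \mathds{1}(x<s)\,(-\mathrm{d}f(s))$ — legitimate since $f\ge0$ is decreasing and $f(\infty)=0$ by the integrability assumption in \eqref{eq:lemupperbound110}, with $-\mathrm{d}f$ the associated nonnegative Stieltjes measure — one gets, for any self-adjoint $A$ with compact resolvent and $A\ge\mu$,
\begin{equation}
\tr\!\left[f(\beta(A-\mu))\right]=\int_0^\infty N_A(\mu+s/\beta)\,(-\mathrm{d}f(s)),\qquad N_A(\lambda):=\#\{\text{eigenvalues of }A\text{ below }\lambda\}.
\end{equation}
So it suffices to bound $N_{h^{\mathrm{N}}_{\geq R}}$ from below by the counting function of $h$ with its ground state removed, up to an error that integrates to $O(R^3/\beta^{3/2})$.

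For this I would use \emph{Neumann bracketing} at the sphere $\partial B(R)$: placing a Neumann wall there decouples $h$ on $L^2(\mathbb{R}^3)$ into its Neumann realizations $h^{\mathrm{N}}_{\leq R}$ on $L^2(B(R))$ and $h^{\mathrm{N}}_{\geq R}$ on $L^2(B(R)^{\mathrm{c}})$ and lowers the operator in the form sense (the form domain grows, the values on $H^1(\mathbb{R}^3)$ are unchanged), so by min-max $N_h(\lambda)\le N_{h^{\mathrm{N}}_{\leq R}}(\lambda)+N_{h^{\mathrm{N}}_{\geq R}}(\lambda)$ for all $\lambda$. Since the lowest eigenvalue of $h^{\mathrm{N}}_{\geq R}$ is at least $\tfrac14\omega^2R^2-\tfrac32\omega\gg\omega$, evaluating this at small $\lambda$ forces $e_0(h^{\mathrm{N}}_{\leq R})\le0$; and since $h$ has a simple ground state at energy $0$ and no spectrum in $(0,\omega)$, removing the ground state on both sides gives $N_{h^{\mathrm{N}}_{\geq R}}(\lambda)\ge N^{\geq\omega}_h(\lambda)-M(\lambda)$, where $N^{\geq\omega}_h(\lambda):=\#\{\alpha\ge1:e_\alpha(h)<\lambda\}$ and $M(\lambda):=\#\{\alpha\ge1:e_\alpha(h^{\mathrm{N}}_{\leq R})<\lambda\}$. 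Integrating against $-\mathrm{d}f$ and noting that $\int_0^\infty N^{\geq\omega}_h(\mu+s/\beta)\,(-\mathrm{d}f(s))=\tr[\mathds{1}(h\ge\omega)f(\beta(h-\mu))]$ (the arguments of $f$ there being at least $(1-c)\beta\omega>0$) yields the claimed inequality with error term $\mathcal{R}:=\int_0^\infty M(\mu+s/\beta)\,(-\mathrm{d}f(s))$.

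It remains to prove $\mathcal{R}=O(R^3/\beta^{3/2})$, for which I would combine two inputs. A crude Weyl bound: dropping the nonnegative harmonic term, $e_\alpha(h^{\mathrm{N}}_{\leq R})\ge e_\alpha(-\Delta^{\mathrm{N}}_{\leq R})-\tfrac32\omega$, hence $M(\lambda)\lesssim\#\{e_\alpha(-\Delta^{\mathrm{N}}_{\leq R})<\lambda+\tfrac32\omega\}\lesssim R^3(\lambda+\omega)^{3/2}$ for $\lambda\ge0$, the Neumann surface correction and the rank-one deficit being absorbed because $\omega R^2\gg1$. And an exponential localization estimate: since $R\gg\omega^{-1/2}$, inserting the Neumann wall at $\partial B(R)$ moves the low oscillator levels only by an amount exponentially small in $\omega R^2$, so $h^{\mathrm{N}}_{\leq R}$ has exactly one eigenvalue below $c\omega$ and $e_1(h^{\mathrm{N}}_{\leq R})\ge\tfrac12(1+c)\omega$; consequently $M(\mu+s/\beta)=0$ for $s\le s_*:=\beta(e_1(h^{\mathrm{N}}_{\leq R})-\mu)\gtrsim\beta\omega$. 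Splitting $\mathcal{R}$ at $s_*$ removes the contribution near the origin, and for $s>s_*$ one has $M(\mu+s/\beta)\lesssim R^3(s/\beta)^{3/2}$, so
\begin{equation}
\mathcal{R}\ \lesssim\ \beta^{-3/2}R^3\int_0^\infty s^{3/2}\,(-\mathrm{d}f(s))\ =\ \tfrac32\,\beta^{-3/2}R^3\int_0^\infty s^{1/2}f(s)\,\mathrm{d}s\ =\ O(R^3/\beta^{3/2}),
\end{equation}
the integration by parts being legitimate because $s^{3/2}f(s)\to0$ at both endpoints, which follows from $f(x)\lesssim x^{-1}$ near $0$ and $\int_0^\infty f(x)\sqrt{x}\,\mathrm{d}x<\infty$.

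The main obstacle is exactly this last interplay: $f$ is only assumed to decay like $x^{-1}$ at the origin, so $\int_0 f\,\mathrm{d}x$ may diverge, and a single eigenvalue of the inner operator $h^{\mathrm{N}}_{\leq R}$ lying at or below $\mu$ — other than its ground state, which is already compensated by the $\mathds{1}(h\ge\omega)$ on the right-hand side — would make $\mathcal{R}$ infinite and the bound vacuous. Ruling this out uniformly over all $\mu\le c\omega$ with $c<1$ is what forces the exponential-localization input together with the (implicit) hypothesis $R\gg\omega^{-1/2}$; everything else is routine semiclassical bookkeeping.
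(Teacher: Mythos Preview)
Your argument is correct, and the core mechanism --- Neumann bracketing at $\partial B(R)$ and a Weyl bound on the inner operator --- is the same as the paper's. The genuine difference is how the low end of the $h^{\mathrm N}_{\leq R}$ spectrum is dealt with. You need the spectral localization input $e_1(h^{\mathrm N}_{\leq R})\ge\tfrac12(1+c)\omega$ to guarantee $M(\mu+s/\beta)=0$ for small $s$ (otherwise, as you note, a single stray eigenvalue below $\mu$ would make $\mathcal R$ infinite). That estimate is true, but it is an extra nontrivial ingredient (Agmon-type decay of oscillator eigenfunctions, or an equivalent argument).

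The paper sidesteps this entirely by splitting at a different place. It works directly with the sum $\sum_{\alpha\ge1}f(\beta(e_\alpha(h)-\mu))$ and cuts it at $\alpha_0\sim(\omega R^2)^{3/2}$ rather than at $\alpha_0=1$. The first $\alpha_0$ terms are bounded using only $f(x)\lesssim x^{-1}$ together with the \emph{exact} gap $e_\alpha(h)-\mu\ge(1-c)\omega$ for $\alpha\ge1$, yielding a contribution $O(\alpha_0(\beta\omega)^{-1})=O(\omega^{1/2}R^3\beta^{-1})$, which is $\ll R^3\beta^{-3/2}$ since $\beta\omega\ll1$. For $\alpha>\alpha_0$ one invokes $e_\alpha(h)\ge e_\alpha(h^{\mathrm N}_R)$; the choice of $\alpha_0$ is arranged so that (i) the crude Weyl lower bound $e_\alpha(h^{\mathrm N}_{\leq R})\ge C\alpha^{2/3}R^{-2}-\tfrac32\omega$ already exceeds $\omega+\mu$, and (ii) all eigenvalues of $h^{\mathrm N}_{\geq R}$ lie above $e_{\alpha_0}(h^{\mathrm N}_R)\lesssim\omega^{3/2}R$, so the tail sum over $h^{\mathrm N}_R$ splits cleanly into the full $h^{\mathrm N}_{\geq R}$ trace plus a controlled $h^{\mathrm N}_{\leq R}$ tail. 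This buys a fully elementary proof --- no localization estimate needed --- at the cost of a somewhat less transparent bookkeeping. Your route is arguably more conceptual but imports more machinery.
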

\begin{proof}
Denote by $h^{\mathrm{N}}_{\leq R}$ the harmonic oscillator Hamiltonian inside $B(R)$ with Neumann boundary conditions. Using the Weyl asymptotics \cite[Satz~XI]{weyl} of the eigenvalues of the Neumann Laplacian $-\Delta^{\mathrm{N}}_{\leq R}$ inside $B(R)$, we see that there exists a $C>0$ such that 
\begin{equation}
e_{\alpha}\left(h^{\mathrm{N}}_{\leq R} \right) \geq C \frac{\alpha^{2/3}}{R^2} - \frac{3 \omega}{2}
\label{eq:lemupperbound112}
\end{equation}
for $\alpha \geq 0$. Choose $\alpha_0$ to be the smallest positive integer for which the right-hand side of Eq.~\eqref{eq:lemupperbound112} is larger than or equal to $\omega + \mu$. Our assumption $\mu \leq c \omega$  implies $\alpha_0 \lesssim \omega^{3/2} R^3$. Let $h^{\mathrm{N}}_R = h^{\mathrm{N}}_{\leq R} \oplus h^{\mathrm{N}}_{\geq R}$. Using $e_{\alpha}(h^{\mathrm{N}}_R) \leq e_{\alpha}(h)$ and the monotonicity of $f$, the trace on the right-hand side of Eq.~\eqref{eq:lemupperbound111} can be bounded from above by
\begin{equation}
\sum_{\alpha=1}^{\infty} f\left( \beta \left( e_{\alpha}(h) - \mu \right) \right) \leq \sum_{\alpha=1}^{\alpha_0} f\left( \beta \left( e_{\alpha}(h) - \mu \right) \right) + \sum_{\alpha > \alpha_0}  f\left( \beta \left( e_{\alpha} \left(h^{\mathrm{N}}_R \right) - \mu \right) \right).
\label{eq:lemupperbound113}
\end{equation}
The asymptotic behavior \eqref{eq:lemupperbound110}  of $f$ at $0$ and $\mu \leq c \omega$ with $c<1$ imply that the first term on the right-hand side of the above equation can be bounded from above by $\alpha_0 f(\beta(\omega - \mu)) \lesssim \omega^{1/2} \beta^{-1} R^3$. Since $\omega^{1/2} \beta^{-1} \ll \beta^{-3/2}$, this error term is much smaller than $R^3 \beta^{-3/2}$. 

Next, we investigate the second term on the right-hand side of Eq.~\eqref{eq:lemupperbound113}. The $\alpha_0$-th eigenvalue of $h^{\mathrm{N}}_R$ is bounded from above by
\begin{equation}
e_{\alpha_0}( h^{\mathrm{N}}_R ) \leq e_{\alpha_0}( h ) \lesssim \omega \alpha_0^{1/3} \lesssim \omega^{3/2} R.
\label{eq:lemupperbound114}
\end{equation}
To obtain the second inequality, we used that the eigenvalue $\omega n$ of $h$ is $(n+1)(n+2)/2$-fold degenerate. On the other hand, $e_{\alpha}(h^{\mathrm{N}}_{\geq R}) \geq \frac{\omega^2 R^2}{4} - \frac{3 \omega}{2}$ which is much larger than the right-hand side of Eq.~\eqref{eq:lemupperbound114} by the assumption $\omega^{-1/2} \ll R$. Hence, the second term on the right-hand side of Eq.~\eqref{eq:lemupperbound113} can be written as
\begin{equation}
\sum_{\alpha > \alpha_0}  f\left( \beta \left( e_{\alpha} \left(h^{\mathrm{N}} \right) - \mu \right) \right) = \sum_{\alpha = 0}^{\infty}  f\left( \beta \left( e_{\alpha} \left(h^{\mathrm{N}}_{\geq R} \right) - \mu \right) \right) + \sum_{\alpha > \alpha_0}  f\left( \beta \left( e_{\alpha} \left(h^{\mathrm{N}}_{\leq R} \right) - \mu \right) \right). 
\label{eq:lemupperbound115}
\end{equation}
It remains to estimate the second term on the right-hand side of Eq.~\eqref{eq:lemupperbound115}. We invoke Eq.~\eqref{eq:lemupperbound112} and the monotonicity of $f$ to find 
\begin{align}
\sum_{\alpha > \alpha_0}  f\left( \beta \left( e_{\alpha} \left(h^{\mathrm{N}}_{\leq R} \right) - \mu \right) \right) &\leq \sum_{\alpha > \alpha_0}  f\left( \beta \left( \frac{C\alpha^{2/3}}{R^2} - \frac{3 \omega}{2} - \mu \right) \right) \label{eq:lemupperbound116} \\
&\lesssim \frac{R^3}{\beta^{3/2}} \int_0^{\infty} f(x) \sqrt{x} \,\text{d}x. \nonumber
\end{align}
To arrive at the second line, we used $\mu \leq c \omega$ and the fact that the sum in the first line on the right-hand side can be interpreted as a Riemann sum approximating the integral in the second line. This proves the claim.
\end{proof}
\begin{lemma}
\label{lem:chemicalpotential}
Define $\mu$ via the equation
\begin{equation}
\tr \left[\mathds{1}\left(h^{\mathrm{D}}_{\geq R} \leq \Lambda\right)  \frac{1}{e^{\beta \left(h^{\mathrm{D}}_{\geq R} - \mu) \right)} - 1} \right] = N_{\mathrm{th}}
\label{eq:lemmamu1}
\end{equation}
(with $N_\mathrm{th}$ defined after Eq.~\eqref{eq:rhocloud}). 
Then
\begin{equation}
0 \leq \mu - \mu_0 \lesssim \omega \left( \frac{1}{\omega R^2} + \beta^{1/2} \omega^{2} R^3 + \frac{e^{-\beta \Lambda /4 }}{\beta \omega} + (\beta \omega)^2 \left(N_{\mathrm{th}} - N_{\mathrm{th}}^{\mathrm{gc}} \right) \right)
\label{eq:lemmamu2}
\end{equation}
holds. Here $N_{\mathrm{th}}^{\mathrm{gc}}$ denotes the expected number of particles in the grand canonical thermal cloud, defined in \eqref{eq:idealbosegasparticlenumbers}. 
\end{lemma}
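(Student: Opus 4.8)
The plan is to use that $\mu$ is the unique solution of a monotone equation and to locate it relative to the grand-canonical chemical potential $\mu_0$ by sandwiching the trace in \eqref{eq:lemmamu1}, regarded as a function of the spectral parameter, between grand-canonical expressions. Write $\mathcal{G}(\nu):=\tr[\mathds{1}(h^{\mathrm{D}}_{\geq R}\leq\Lambda)(e^{\beta(h^{\mathrm{D}}_{\geq R}-\nu)}-1)^{-1}]$ and $\mathcal{N}(\nu):=\tr[\mathds{1}(h\geq\omega)(e^{\beta(h-\nu)}-1)^{-1}]=\sum_{n\geq1}g(n)(e^{\beta(\omega n-\nu)}-1)^{-1}$; both are continuous, strictly increasing, and (in the case of $\mathcal{N}$) convex on the interval below the bottom of the corresponding spectrum, and $\mathcal{G}(\mu)=N_{\mathrm{th}}$, $\mathcal{N}(\mu_0)=N_{\mathrm{th}}^{\mathrm{gc}}$. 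Throughout I would work with $f(x)=(e^x-1)^{-1}$, which is convex, monotone decreasing, positive and meets the hypotheses of both Lemma~\ref{lem:traces1} and Lemma~\ref{lem:traces2}. A preliminary step is the a priori bound $\mu\leq c\omega$ for some fixed $c<1$, needed so that the two comparison lemmas may be applied at $\nu=\mu$; it follows from $\mathcal{G}(c\omega)\geq N_{\mathrm{th}}$, for which one uses that $h^{\mathrm{D}}_{\geq R}$ has $\sim(\beta\omega)^{-3}$ eigenvalues below a fixed multiple of $T$, together with the estimates below.

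The lower bound $\mu\geq\mu_0$ is the easier half. Dropping the cut-off only decreases the trace, so $\mathcal{G}(\mu_0)\leq\tr[f(\beta(h^{\mathrm{D}}_{\geq R}-\mu_0))]$. Dirichlet bracketing $h^{\mathrm{D}}_{\leq R}\oplus h^{\mathrm{D}}_{\geq R}\geq h$ together with $e_0(h^{\mathrm{D}}_{\leq R})<\omega<e_0(h^{\mathrm{D}}_{\geq R})$ gives the eigenvalue-counting inequality $\#\{\alpha:e_\alpha(h^{\mathrm{D}}_{\geq R})\leq\lambda\}\leq\#\{\alpha:e_\alpha(\mathds{1}(h\geq\omega)h)\leq\lambda\}$ for every $\lambda$; since $f(\beta(\cdot-\mu_0))$ is decreasing, this yields $\tr[f(\beta(h^{\mathrm{D}}_{\geq R}-\mu_0))]\leq\mathcal{N}(\mu_0)=N_{\mathrm{th}}^{\mathrm{gc}}$. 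Combined with $N_{\mathrm{th}}^{\mathrm{gc}}\leq N_{\mathrm{th}}$ (established in the Appendix) we get $\mathcal{G}(\mu_0)\leq N_{\mathrm{th}}=\mathcal{G}(\mu)$, hence $\mu_0\leq\mu$ by monotonicity of $\mathcal{G}$.

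For the upper bound I would run the two lemmas in the opposite direction: for any $\nu\leq c\omega$, Lemma~\ref{lem:traces2} applied to $h^{\mathrm{N}}_{\geq R}$ and then Lemma~\ref{lem:traces1} relating the Neumann and Dirichlet traces give
\begin{equation*}
\mathcal{N}(\nu)\leq\tr[f(\beta(h^{\mathrm{N}}_{\geq R}-\nu))]+O\!\left(\tfrac{R^{3}}{\beta^{3/2}}\right)\leq\tr[f(\beta(h^{\mathrm{D}}_{\geq R}-\nu))]+\mathcal{E},\qquad\mathcal{E}:=O\!\left(\tfrac{1}{\beta^{2}\omega^{3}R^{2}}\right)+O\!\left(\tfrac{R^{3}}{\beta^{3/2}}\right).
\end{equation*}
Restoring the cut-off at $\nu=\mu$ costs at most $\tr[\mathds{1}(h^{\mathrm{D}}_{\geq R}>\Lambda)f(\beta(h^{\mathrm{D}}_{\geq R}-\mu))]\lesssim(\beta\omega)^{-3}e^{-\beta\Lambda/4}$, because each term is $\leq2e^{-\beta(e_\alpha-\mu)}$ and $\sum_\alpha e^{-\beta e_\alpha(h^{\mathrm{D}}_{\geq R})/2}\leq\tr[e^{-\beta h/2}]\sim(\beta\omega)^{-3}$. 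Hence $\mathcal{N}(\mu)\leq\mathcal{G}(\mu)+\mathcal{E}+O((\beta\omega)^{-3}e^{-\beta\Lambda/4})=N_{\mathrm{th}}+\mathcal{E}+O((\beta\omega)^{-3}e^{-\beta\Lambda/4})$, and subtracting $\mathcal{N}(\mu_0)=N_{\mathrm{th}}^{\mathrm{gc}}$,
\begin{equation*}
\mathcal{N}(\mu)-\mathcal{N}(\mu_0)\leq(N_{\mathrm{th}}-N_{\mathrm{th}}^{\mathrm{gc}})+\mathcal{E}+O\!\left((\beta\omega)^{-3}e^{-\beta\Lambda/4}\right).
\end{equation*}

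Finally I would convert this into a bound on $\mu-\mu_0$ via convexity of $\mathcal{N}$, which gives $\mathcal{N}(\mu)-\mathcal{N}(\mu_0)\geq(\mu-\mu_0)\,\mathcal{N}'(\mu_0)$, together with the crucial lower bound
\begin{equation*}
\mathcal{N}'(\mu_0)=\beta\sum_{n\geq1}g(n)\frac{e^{\beta(\omega n-\mu_0)}}{(e^{\beta(\omega n-\mu_0)}-1)^{2}}\gtrsim\frac{1}{\beta^{2}\omega^{3}}.
\end{equation*}
This is the step I expect to be the main obstacle: the bound does \emph{not} come from the single level $n=1$ (that would give only $\mathcal{N}'(\mu_0)\gtrsim(\beta\omega^{2})^{-1}$, a factor $T/\omega\sim N^{1/3}$ too small, and would make the final error terms $N^{1/3}$ times too large). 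One must instead observe that for $1\leq n\lesssim(\beta\omega)^{-1}$ the summand $\beta g(n)e^{\beta\omega n}(e^{\beta\omega n}-1)^{-2}$ is $\sim(\beta\omega^{2})^{-1}$ essentially independently of $n$, so the whole band of low-lying levels contributes comparably and summing $\sim(\beta\omega)^{-1}$ of them produces $(\beta^{2}\omega^{3})^{-1}$. Dividing the previous display by $\mathcal{N}'(\mu_0)$ gives $\mu-\mu_0\lesssim\beta^{2}\omega^{3}\big[\mathcal{E}+(N_{\mathrm{th}}-N_{\mathrm{th}}^{\mathrm{gc}})+(\beta\omega)^{-3}e^{-\beta\Lambda/4}\big]$, and since $\beta^{2}\omega^{3}(\beta^{2}\omega^{3}R^{2})^{-1}=\omega(\omega R^{2})^{-1}$, $\beta^{2}\omega^{3}R^{3}\beta^{-3/2}=\omega\beta^{1/2}\omega^{2}R^{3}$, $\beta^{2}\omega^{3}(\beta\omega)^{-3}e^{-\beta\Lambda/4}=\omega(\beta\omega)^{-1}e^{-\beta\Lambda/4}$ and $\beta^{2}\omega^{3}(N_{\mathrm{th}}-N_{\mathrm{th}}^{\mathrm{gc}})=\omega(\beta\omega)^{2}(N_{\mathrm{th}}-N_{\mathrm{th}}^{\mathrm{gc}})$, this is exactly \eqref{eq:lemmamu2}. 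The remaining points needing care are the bookkeeping in the Neumann/Dirichlet comparisons (verifying that $f=(e^{x}-1)^{-1}$ satisfies the hypotheses of Lemmas~\ref{lem:traces1}--\ref{lem:traces2} and that the errors there are the stated $\mathcal{E}$) and making the a priori bound $\mu\leq c\omega$ uniform, so that those lemmas genuinely apply at $\nu=\mu$.
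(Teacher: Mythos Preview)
Your strategy is the paper's: compare $\mathcal{G}$ to the full-space thermal trace via Lemmas~\ref{lem:traces1} and~\ref{lem:traces2}, then convert the resulting particle-number inequality into a bound on $\mu-\mu_0$ by convexity, using that the slope is $\gtrsim(\beta^2\omega^3)^{-1}$. The paper orders things slightly differently---Lemma~\ref{lem:traces1} at $\mu$, then convexity on the \emph{Neumann} trace (exploiting $-f'\geq f$ pointwise, which is the slick form of your band-summing argument for $\mathcal{N}'(\mu_0)$), then Lemma~\ref{lem:traces2} at $\mu_0$---but the content is the same.

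Two small gaps in your write-up. First, because you apply Lemma~\ref{lem:traces2} at $\nu=\mu$ rather than at $\mu_0<0$, you need the a~priori bound $\mu\leq c\omega$ with $c<1$; your justification (``$h^{\mathrm D}_{\geq R}$ has $\sim(\beta\omega)^{-3}$ eigenvalues below a fixed multiple of $T$'') does not by itself yield $\mathcal{G}(c\omega)\geq N_{\mathrm{th}}$ with the correct constant. It can be made to work by first running your chain at $\nu=c\omega$, where the hypotheses of both lemmas hold trivially, and using $\mathcal{N}(c\omega)-N_{\mathrm{th}}^{\mathrm{gc}}\gtrsim c(\beta\omega)^{-2}$, which dominates both $N_{\mathrm{th}}-N_{\mathrm{th}}^{\mathrm{gc}}$ and the error $\mathcal{E}$; the paper sidesteps this by needing Lemma~\ref{lem:traces2} only at $\mu_0$. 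Second, your lower-bound step invokes ``$N_{\mathrm{th}}^{\mathrm{gc}}\leq N_{\mathrm{th}}$ (established in the Appendix)'', but the Appendix does not establish this: Remark~\ref{rem:app} gives $N_{\mathrm{th}}\lesssim N_{\mathrm{th}}^{\mathrm{gc}}$ (the other direction, with a constant) and Corollary~\ref{lem:particlenumbers} only bounds $|N_{\mathrm{th}}-N_{\mathrm{th}}^{\mathrm{gc}}|$. Your refined counting $e_\alpha(h^{\mathrm D}_{\geq R})\geq e_{\alpha+1}(h)$ is correct and gives $\mathcal{G}(\mu_0)\leq N_{\mathrm{th}}^{\mathrm{gc}}$, but you still have to bridge $N_{\mathrm{th}}^{\mathrm{gc}}$ and $N_{\mathrm{th}}$.
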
 
\begin{proof}
Since $e_{\alpha}(h) \leq e_{\alpha}(h^{\mathrm{D}}_{\geq R})$, the chemical potential can only increase if we cut out $B(R)$ and impose Dirichlet boundary conditions at $\partial B(R)$. Since the cut-off $\Lambda$ also causes the chemical potential to increase, we have $\mu \geq \mu_0$.

In order to find an upper bound on $\mu - \mu_0$, we start by estimating the influence of the cut-off $\Lambda$ and write
\begin{equation}
\tr \left[ \mathds{1}\left(h^{\mathrm{D}}_{\geq R} > \Lambda\right)  \frac{1}{e^{\beta \left(h^{\mathrm{D}}_{\geq R} - \mu) \right)} - 1} \right] = \sum_{\substack {\alpha \in \mathbb{N} : \\ e_{\alpha}(h^{\mathrm{D}}_{\geq R}) > \Lambda}} \frac{1}{e^{\beta \left(e_{\alpha}\left( h^{\mathrm{D}}_{\geq R} \right) - \mu) \right)} - 1}  \label{eq:lemmamu1b}
\end{equation}
To be able to proceed, we need a rough upper bound on the chemical potential. Certainly it cannot be larger than the lowest eigenvalue of $h^{\mathrm{D}}_{\geq R}$. 
A simple trial state argument gives an upper bound $\lesssim \omega^2 R^2$  on this lowest eigenvalue and thereby on $\mu$. Since $R \leq \lambda \omega^{-1} \beta^{-1/2}$, this implies $\mu \leq \Lambda/4$ for $\Lambda/T$ large enough. Hence
\begin{equation}
\frac{1}{e^{\beta \left(e_{\alpha}\left( h^{\mathrm{D}}_{\geq R} \right) - \mu \right)} - 1} \leq 2 e^{- \beta \Lambda/4} e^{-\beta e_{\alpha} \left( h^{\mathrm{D}}_{\geq R} \right) /2}
\label{eq:lemmamu1c}
\end{equation}
which holds as long as $(1-e^{-3\beta \Lambda/4})^{-1} \leq 2$ and $e_{\alpha}( h^{\mathrm{D}}_{\geq R} ) \geq \Lambda$. Using Eq.~\eqref{eq:lemmamu1c}, Eq.~\eqref{eq:lemmamu1b},  and the inequality $e_{\alpha}( h^{\mathrm{D}}_{\geq R} ) \geq e_{\alpha}( h )$ with $h$ defined in Eq.~\eqref{eq:idealbosegas1a} we see that
\begin{equation}
\tr \left[ \mathds{1}\left(h^{\mathrm{D}}_{\geq R} > \Lambda\right)  \frac{1}{e^{\beta \left(h^{\mathrm{D}}_{\geq R} - \mu) \right)} - 1} \right] \leq  2 e^{- \beta \Lambda/4} \tr \left[  e^{-\beta h^{\mathrm{D}}_{\geq R} /2} \right] \leq 2 e^{- \beta \Lambda/4} \tr \left[  e^{-\beta h/2} \right]
\label{eq:lemmamu1d}
\end{equation}
holds. The trace on the right-hand side of the above equation can be bounded by a constant times $(\beta \omega)^{-3}$, and hence
\begin{equation}
\tr \left[ \frac{1}{e^{\beta \left(h^{\mathrm{D}}_{\geq R} - \mu) \right)} - 1} \right] - \tr \left[ \mathds{1}\left(h^{\mathrm{D}}_{\geq R} \leq \Lambda\right)  \frac{1}{e^{\beta \left(h^{\mathrm{D}}_{\geq R} - \mu) \right)} - 1} \right] \lesssim \frac{e^{-\beta \Lambda/4}}{(\beta \omega)^3} \,,
\label{eq:lemmamu1f}
\end{equation} 
which estimates the effect of the cut-off $\Lambda$.

It remains to quantify the influence of the boundary conditions at $\partial B(R)$. Together with Eq.~\eqref{eq:lemmamu1} and Eq.~\eqref{eq:lemmamu1f}, an application of Lemma~\ref{lem:traces1} with the choice $f(x)=(e^x-1)^{-1}$ tells us that
\begin{equation}
N_{\mathrm{th}} \geq \tr \left[ \frac{1}{e^{\beta(h^{\mathrm{N}}_{\geq R} - \mu)}-1}  \right] - O\left( \frac{e^{-\beta \Lambda/4}}{(\beta \omega)^3} \right) - O\left( \frac{1}{\beta^2 \omega^3 R^2} \right) - O\left( \frac{R^3}{\beta^{3/2}} \right). \label{eq:lemmamu6}
\end{equation}
Let us have a closer look at the first term on the right-hand side of Eq.~\eqref{eq:lemmamu6}. By convexity of $x\mapsto (e^x-1)^{-1}$ and $e^x/(e^x-1)^2 \geq 1/(e^x-1)$ for $x>0$, we have
\begin{align}
\tr \left[ \frac{1}{e^{\beta(h^{\mathrm{N}}_{\geq R} - \mu)}-1} - \frac{1}{e^{\beta(h^{\mathrm{N}}_{\geq R} - \mu_0)}-1} \right] &\geq \beta(\mu-\mu_0) \tr\left[ \frac{ e^{\beta(h^{\mathrm{N}}_{\geq R}-\mu_0)}}{\left(e^{\beta(h^{\mathrm{N}}_{\geq R} - \mu_0)}-1\right)^2} \right] \label{eq:lemmamu7} \\
&\geq \beta (\mu - \mu_0) \tr\left[  \frac{ 1 }{e^{\beta(h^{\mathrm{N}}_{\geq R} - \mu_0)}-1} \right]. \nonumber
\end{align}
Using that $\mu_0 <0$, we know from Lemma~\ref{lem:traces2} with the choice $f(x)=(e^x-1)^{-1}$  that
\begin{equation}
\tr\left[ \frac{ 1 }{e^{\beta(h^{\mathrm{N}}_{\geq R} - \mu_0)}-1}\right] \geq \tr\left[ \mathds{1}\left( h \geq \omega \right)  f \left( \beta (h - \mu_0) \right) \right] - O\left( \frac{R^3}{\beta^{3/2}} \right)
\label{eq:lemmamu8}
\end{equation}
holds. Note that the trace on the right-hand side of the above equation equals $N_{\mathrm{th}}^{\mathrm{gc}} \sim (\beta \omega)^{-3}$, the expected number of particles in the grand canonical thermal cloud. It dominates the error term in \eqref{eq:lemmamu8} if $R \leq \lambda \omega^{-1} \beta^{-1/2}$ with $\lambda$ small enough.  
In combination, Eqs.~\eqref{eq:lemmamu6}--\eqref{eq:lemmamu8} thus imply \eqref{eq:lemmamu2}. 
\end{proof}

\begin{lemma}
\label{lem:densitybound}
Denote by  
\begin{equation}
\varrho^{\mathrm{D},\Lambda}_{\geq R}(x) = \left[ \mathds{1}\left( h^{\mathrm{D}}_{\geq R} \leq \Lambda \right) \frac{1}{e^{\beta \left( h^{\mathrm{D}}_{\geq R} - \mu \right)}-1} \right](x,x)
\label{eq:densitybound1}
\end{equation}
the one-particle density in the grand canonical ensemble in Region~3 with cut-off $\Lambda$ and let the chemical potential $\mu$ be chosen as in \eqref{eq:lemmamu1} such that
\begin{equation}
\int_{\mathbb{R}^3} \varrho^{\mathrm{D},\Lambda}_{\geq R}(x) \text{d}x = N_{\mathrm{th}}.
\label{eq:densitybound2}
\end{equation}
Assume that $e^{-\beta \Lambda/4} \lesssim \beta \omega$ and  $\beta \omega \ln N \lesssim 1$ holds. Then, for $\omega^{-1/2} \ll R \leq \lambda \omega^{-1} \beta^{-1/2}$ with $\lambda >0$ small enough, 
\begin{equation}
\sup_{x \in \mathbb{R}^3} \varrho^{\mathrm{D},\Lambda}_{\geq R}(x) \lesssim \frac{1}{\beta^{3/2}}\,.
\label{eq:densitybound3}
\end{equation}
\end{lemma}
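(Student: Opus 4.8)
The plan is to bound $\varrho^{\mathrm{D},\Lambda}_{\geq R}(x)$ from above by the diagonal of the \emph{full} (uncut\nobreakdash-off) Bose occupation kernel, then expand the latter into a geometric series and estimate the resulting heat kernels on $B(R)^{\mathrm c}$ by the free heat kernel. First I would discard the cut\nobreakdash-off: writing the spectral decomposition $h^{\mathrm D}_{\geq R}=\sum_{\alpha}e_{\alpha}\,|\phi_{\alpha}\rangle\langle\phi_{\alpha}|$ with $(\phi_{\alpha})$ an orthonormal basis of $L^{2}(B(R)^{\mathrm c})$, the kernel in \eqref{eq:densitybound1} equals $\sum_{\alpha:\,e_{\alpha}\le\Lambda}|\phi_{\alpha}(x)|^{2}\bigl(e^{\beta(e_{\alpha}-\mu)}-1\bigr)^{-1}$, which is bounded above by the same sum over \emph{all} $\alpha$, i.e.\ by $\varrho_{\geq R}(x):=\bigl[(e^{\beta(h^{\mathrm D}_{\geq R}-\mu)}-1)^{-1}\bigr](x,x)$. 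Here one notes that $h^{\mathrm D}_{\geq R}\geq\tfrac14\omega^{2}R^{2}-\tfrac32\omega>0$ for $R\gg\omega^{-1/2}$, so $h^{\mathrm D}_{\geq R}$ has purely discrete spectrum, and since $\mu<e_{0}(h^{\mathrm D}_{\geq R})$ by the defining equation \eqref{eq:lemmamu1}, the occupation kernel is a well-defined nonnegative function. Expanding $(e^{y}-1)^{-1}=\sum_{k\ge1}e^{-ky}$ for $y>0$ and summing the (nonnegative) diagonal kernels term by term gives
\[
\varrho_{\geq R}(x)=\sum_{k=1}^{\infty}e^{k\beta\mu}\,\bigl[e^{-k\beta h^{\mathrm D}_{\geq R}}\bigr](x,x).
\]

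The next step is a pointwise heat\nobreakdash-kernel bound: for $x\in B(R)^{\mathrm c}$ and $t>0$,
\[
\bigl[e^{-t h^{\mathrm D}_{\geq R}}\bigr](x,x)\ \le\ e^{-t\left(\omega^{2}R^{2}/4-3\omega/2\right)}(4\pi t)^{-3/2}.
\]
This follows from the Feynman--Kac formula (equivalently, the Trotter product formula together with the domain monotonicity of the Dirichlet heat kernel): writing $e^{-th^{\mathrm D}_{\geq R}}=e^{3\omega t/2}e^{-t(-\Delta^{\mathrm D}_{\geq R}+\omega^{2}|\cdot|^{2}/4)}$, every Brownian bridge contributing to the diagonal of the Dirichlet semigroup stays in $B(R)^{\mathrm c}$, where $\tfrac14\omega^{2}|\cdot|^{2}\geq\tfrac14\omega^{2}R^{2}$, and dropping the confinement only increases the bridge mass to the free value $(4\pi t)^{-3/2}$. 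Inserting $t=k\beta$ yields
\[
\varrho_{\geq R}(x)\ \le\ \frac{1}{(4\pi\beta)^{3/2}}\sum_{k=1}^{\infty}\frac{e^{-k\theta}}{k^{3/2}},\qquad \theta:=\beta\Bigl(\tfrac14\omega^{2}R^{2}-\tfrac32\omega-\mu\Bigr),
\]
so if $\theta>0$ the sum is at most $\zeta(3/2)$ and we obtain $\varrho^{\mathrm{D},\Lambda}_{\geq R}(x)\le\varrho_{\geq R}(x)\lesssim\beta^{-3/2}$ uniformly in $x$, which is the claim.

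It remains to verify $\theta>0$, which is where Lemma~\ref{lem:chemicalpotential} enters. Combining \eqref{eq:lemmamu2} with $\mu_{0}<0$ and the bound $|N_{\mathrm{th}}-N_{\mathrm{th}}^{\mathrm{gc}}|\lesssim(\beta\omega)^{-3/2}(\ln N)^{1/2}+(\beta\omega)^{-1}\ln N$ from Section~\ref{sec:freeenergyidealgas}, I would check that each of the four terms in \eqref{eq:lemmamu2}, after multiplication by $\omega$, is small compared to $\omega^{2}R^{2}$: the term $\omega(\omega R^{2})^{-1}$ because $\omega R^{2}\gg1$; the term $\beta^{1/2}\omega^{3}R^{3}=(\beta\omega^{2}R^{2})^{1/2}\,\omega^{2}R^{2}\lesssim\lambda\,\omega^{2}R^{2}$ because $\beta\omega^{2}R^{2}\le\lambda^{2}$; the term $\beta^{-1}e^{-\beta\Lambda/4}\lesssim\omega$ by the hypothesis $e^{-\beta\Lambda/4}\lesssim\beta\omega$; and the last term $\lesssim\omega\bigl[(\beta\omega\ln N)^{1/2}+\beta\omega\ln N\bigr]\lesssim\omega$ by the hypothesis $\beta\omega\ln N\lesssim1$. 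Hence $\mu\le\mu_{0}+C\lambda\,\omega^{2}R^{2}+C\omega\le\tfrac18\omega^{2}R^{2}$ for $\lambda$ small and $N$ large, so $\theta\ge\tfrac18\beta\omega^{2}R^{2}>0$.

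The only point requiring care is this last bookkeeping step — tracking the size of $\mu$ through Lemma~\ref{lem:chemicalpotential} so as to keep $\mu$ well below the bottom $\sim\omega^{2}R^{2}$ of the spectrum of $h^{\mathrm D}_{\geq R}$ — but no real obstacle arises: the heat-kernel estimate is elementary, and the geometric series is manifestly of the order $\beta^{-3/2}$ expected for a dilute thermal cloud of de Broglie wavelength $\sim\beta^{1/2}$, since no Bose enhancement can occur when the fugacity relative to the spectral bottom stays bounded away from the critical value.
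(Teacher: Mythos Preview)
Your proof is correct and follows essentially the same route as the paper: drop the cut-off, expand the Bose factor as a geometric series of heat kernels, use Feynman--Kac together with domain monotonicity to bound each $[e^{-k\beta h^{\mathrm D}_{\geq R}}](x,x)$ by the free heat kernel $(4\pi k\beta)^{-3/2}$, and verify via Lemma~\ref{lem:chemicalpotential} that $\mu$ stays below $\tfrac14\omega^{2}R^{2}-\tfrac32\omega$ so that the series converges. The only cosmetic difference is that the paper absorbs the chemical potential into the effective potential $V(x)=\tfrac14\omega^{2}x^{2}-\tfrac32\omega-\mu$ and checks $V\geq 0$ on $B(R)^{\mathrm c}$, whereas you pull out the explicit factor $e^{-k\theta}$; the two are equivalent.
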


\begin{proof}
We start by noting that we obtain an upper bound on $\varrho^{\mathrm{D},\Lambda}_{\geq R}(x)$ if we drop the cut-off $\Lambda$ on the right-hand side of Eq.~\eqref{eq:densitybound1}. To be able to continue, we need an upper bound on the chemical potential $\mu$ which we are going to derive now. Lemma~\ref{lem:chemicalpotential} tells us that $\mu - \mu_0$ is bounded from above by the right-hand side of Eq.~\eqref{eq:lemmamu2}. Since $\mu_0 < 0$ the same expression also bounds $\mu$, that is, 
\begin{equation}
\mu \lesssim  \frac{1}{ R^2} + \beta^{1/2} \omega^{3} R^3 + \frac{e^{-\beta \Lambda /4 }}{\beta } + \beta^2 \omega^3 \left( N_{\mathrm{th}} - N_{\mathrm{th}}^{\mathrm{gc}} \right) \,. \label{eq:densitybound3b}
\end{equation}
The first term on the right-hand side of Eq.~\eqref{eq:densitybound3b} is much smaller than $\omega$ because $R \gg \omega^{-1/2}$. The second term  is smaller than $\lambda \omega^2 R^2$ because $R \leq \lambda \omega^{-1} \beta^{-1/2}$. For the third term we have $\beta^{-1} e^{-\beta \Lambda /4} \ll \omega^2 R^2$ by assumption and with Lemma~\ref{lem:particlenumbers} the fourth term can be estimated by $\beta^2 \omega^3 \vert N_{\mathrm{th}} - N_{\mathrm{th}}^{\mathrm{gc}} \vert \lesssim \omega (\beta \omega \ln N)^{1/2} + \omega (\beta \omega \ln N)$. This is much smaller than $\lambda \omega^2 R^2$ because of $\omega^2 R^2 \gg \omega$ and our assumption $\beta \omega \ln N \lesssim 1$. We conclude that $\mu \lesssim \lambda \omega^2 R^2$ holds. 

Let $V(x) = \frac{\omega^2 x^2}{4} - \frac{3 \omega}{2} - \mu$. By choosing $\lambda$ small enough, the upper bound on $\mu$ above allows to conclude that $V(x) \geq 0$ holds for $|x|\geq R$. Using the Feynman-Kac formula, see e.g. \cite{SimonFunct,BratelliRobinson2}, we have
\begin{align}
\left[ \frac{1}{e^{\beta \left( h^{\mathrm{D}}_{\geq R} - \mu \right)}-1} \right](x,y) &= \sum_{\alpha=1}^{\infty} e^{-\beta \left( h^{\mathrm{D}}_{\geq R} - \mu \right) \alpha}(x,y) \label{eq:densitybound4} \\
&= \sum_{\alpha=1}^{\infty} \int \mathds{1}_{\Omega}\left( q \right) \exp\left( - \int_0^{\beta \alpha} V(q(s)) \text{d}s \right) \text{d}W_{x,y}(q) \nonumber
\end{align}
for all $x,y \in B(R)^{\mathrm{c}}$. Here $\text{d}W_{x,y}$ denotes the Wiener measure on paths with startpoint $x$ and endpoint $y$ and $\Omega$ is the set of those paths that do not leave $B(R)^{\mathrm{c}}$. By $\mathds{1}_{\Omega}(q)$ we denote its characteristic function. Since $V(x) \geq 0$, the exponential function in the second line on the right-hand side of Eq.~\eqref{eq:densitybound4} is bounded from above by $1$. We can also drop the characteristic function of $\Omega$ in the Wiener integral to obtain an upper bound. Together with Eq.~\eqref{eq:densitybound4} this implies
\begin{equation}
\left[ \frac{1}{e^{\beta \left( h^{\mathrm{D}}_{\geq R} - \mu \right)}-1} \right](x,y) \leq \sum_{\alpha=1}^{\infty} e^{\beta \alpha \Delta}(x,y)
\label{eq:densitybound5}
\end{equation}
for all $x,y \in B(R)^{\mathrm{c}}$. Here $e^{\beta \alpha \Delta}(x,y)$ denotes the heat kernel of the Laplacian on $\mathbb{R}^3$ 
which is bounded from above by $(4 \pi \beta \alpha)^{-3/2}$, see e.g. \cite{LiebLoss}. Hence,
\begin{equation}
\left[ \frac{1}{e^{\beta \left( h^{\mathrm{D}}_{\geq R} - \mu \right)}-1} \right](x,y) \leq \left( \frac{1}{4 \pi \beta} \right)^{3/2} \sum_{\alpha=1}^{\infty} \frac{1}{\alpha^{3/2}}
\label{eq:densitybound6}
\end{equation}
which proves the claim.
\end{proof}

\subsection{The thermal cloud}
\label{subsec:thermalcloud_upper}
With the above preparations at hand we start our discussion of the free energy of the trial state in Eq.~\eqref{eq:upperbound8}  by considering the part representing the energy of the thermal cloud. In terms of the spectral decomposition of the density matrix $\tilde{\Gamma}_{\geq R}^{\mathrm{D},\Lambda}$ this energy   can be written as
\begin{equation}
\tr\left[ \left( H^{\mathrm{D}}_{\geq R} + V_{33} \right) \tilde{\Gamma}_{\geq R}^{\mathrm{D},\Lambda} \right] = \sum_{\alpha=1}^{\infty} \lambda_{\alpha} \frac{\langle F \Psi_{\alpha}, \left( H_{\geq R}^{\mathrm{D}} + V_{33} \right) F \Psi_{\alpha} \rangle}{\langle F \Psi_{\alpha}, F \Psi_{\alpha} \rangle}.
\label{eq:upperbound9}
\end{equation}
Bearing in mind that all eigenfunctions $\Psi_{\alpha}$ of $H^{\mathrm{D}}_{\geq R} $ can be chosen to be real-valued, we integrate by parts once to rewrite the kinetic energy for the $i$-th coordinate as
\begin{equation}
\int_{\mathbb{R}^{3 N_{\mathrm{th}}}} \overline{ F \Psi_{\alpha} } \nabla_i^2 F \Psi_{\alpha} \text{d}X = \int_{\mathbb{R}^{3 N_{\mathrm{th}}}} \left[ F^2 \left( \Psi_{\alpha} \nabla_i^2 \Psi_{\alpha} \right) - \Psi_{\alpha}^2 \left( \nabla_i F \right)^2 \right] \text{d}X\,,
\label{eq:upperbound10}
\end{equation}
where $\text{d}X$ is short for $\text{d}(x_1,\ldots,x_{N_\mathrm{th}})$. 
For the energy of a single $\Psi_{\alpha}$ this implies
\begin{align}
&\left\langle F \Psi_{\alpha}, \left[ \sum_{i=1}^{N_{\mathrm{th}}} \left( -\Delta_i + \frac{\omega^2 x_i^2}{4} - \frac{3 \omega}{2} \right) + \sum_{1 \leq i < j \leq N_{\mathrm{th}}} v_N(x_i-x_j) \right] F \Psi_{\alpha} \right\rangle =
\label{eq:upperbound11} \\
&\int_{\mathbb{R}^{3 N_{\mathrm{th}}}} \left\lbrace F^2 \Psi_{\alpha} \underbrace{\left[ \sum_{i=1}^{N_{\mathrm{th}}} \left( -\Delta_i + \frac{\omega^2 x_i^2}{4} - \frac{3 \omega}{2} \right) \right] \Psi_{\alpha}}_{=E_{\alpha} \psi_{\alpha}} +  \left[ \sum_{i=1}^{N_{\mathrm{th}}} (\nabla_i F)^2 + \sum_{1 \leq i < j \leq N_{\mathrm{th}}} v_N(x_i - x_j) F^2 \right] \Psi_{\alpha}^2 \right\rbrace \nonumber
\end{align}
and the whole energy can be written as
\begin{align}
&\tr\left[ \left( H^{\mathrm{D}}_{\geq R} + V_{33} \right) \tilde{\Gamma}_{\geq R}^{\mathrm{D},\Lambda} \right] = \tr\left( H^{\mathrm{D}}_{\geq R} \Gamma_{\geq R}^{\mathrm{D},\Lambda} \right) \label{eq:upperbound12} \\
&\hspace{4.5cm} +  \sum_{\alpha=1}^{\infty} \lambda_{\alpha} \frac{\int_{\mathbb{R}^{3 N_{\mathrm{th}}}} \Psi_{\alpha}^2 \left[ \sum_{i=1}^{N_{\mathrm{th}}} (\nabla_i F)^2 + \sum_{1 \leq i < j \leq N_{\mathrm{th}}} v_N(x_i - x_j) F^2 \right]}{\left\Vert F \Psi_{\alpha} \right\Vert^2}. \nonumber
\end{align}
The following Lemma provides a lower bound for the norm of $F \Psi_{\alpha}$ and thereby an upper bound on  the second term on the right-hand side of Eq.~\eqref{eq:upperbound12} as long as $(\omega+\Lambda)^{3/2} a_N b^3 N_{\mathrm{th}}^2$ is small enough. We need the cut-off $\Lambda$ in the definition of the state $\tilde{\Gamma}_{\geq R }^{\mathrm{D},\Lambda}$ in the proof of Lemma~\ref{lem:denominators}. It could be avoided if we  knew that the $L^4(\mathbb{R}^3)$-norm of the eigenfunctions of the operator $h^{\mathrm{D}}_{\geq R}$ are bounded independently of the energy, which we expect to be true. (Compare with the result in \cite{kocht} for $h$ on the whole space $\mathbb{R}^3$.) The bound would most likely grow with $R$, however, which would need to be quantified. We do not have such a bound at our disposal, and therefore need the cut-off. It will be chosen such that $\omega \ll T \ll \Lambda$ holds.
\begin{lemma}
\label{lem:denominators}
There exists a constant $C > 0$ independent of $\alpha$ such that 
\begin{equation}
\left\Vert F \Psi_{\alpha} \right\Vert^2 \geq 1 - C (\omega+\Lambda)^{3/2} a_N b^2 N_{\mathrm{th}}^2.
\label{eq:upperbound13}
\end{equation}
\end{lemma}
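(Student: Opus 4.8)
The idea is that $1-F$ is supported on configurations in which some pair of particles lies within distance $b$, and that this is a rare event because $b\sim a_N$ is small. Since $0\le F\le 1$ we have $F^2\ge 2F-1$, and from $0\le f_b\le1$ we get the sub-additivity bound $1-F=1-\prod_{i<j}f_b(x_i-x_j)\le\sum_{i<j}\bigl(1-f_b(x_i-x_j)\bigr)$. Combining these,
\begin{equation}
\Vert F\Psi_\alpha\Vert^2\ \ge\ 1-2\!\int_{\mathbb{R}^{3N_{\mathrm{th}}}}\!(1-F)\,\Psi_\alpha^2\,\mathrm{d}X\ \ge\ 1-2\sum_{1\le i<j\le N_{\mathrm{th}}}\int_{\mathbb{R}^{3N_{\mathrm{th}}}}\!\bigl(1-f_b(x_i-x_j)\bigr)\Psi_\alpha^2\,\mathrm{d}X .
\end{equation}
Since $\Psi_\alpha$ is permutation symmetric, all $\binom{N_{\mathrm{th}}}{2}$ terms coincide, so it suffices to bound $\int(1-f_b(x_1-x_2))\Psi_\alpha^2\,\mathrm{d}X$ and multiply by $N_{\mathrm{th}}^2$.

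For this I would integrate out $x_3,\dots,x_{N_{\mathrm{th}}}$ and express the result through the reduced two-particle density $\rho^{(2)}_\alpha$ of $\Psi_\alpha$ (normalized to $N_{\mathrm{th}}(N_{\mathrm{th}}-1)$). Because $\Psi_\alpha$ is a symmetrized product of eigenfunctions $u_k$ of $h^{\mathrm{D}}_{\geq R}$ — i.e. a Fock-space number state — a direct computation retaining only the direct and exchange contractions gives the pointwise bound $\rho^{(2)}_\alpha(x,y)\le\rho^{(1)}_\alpha(x)\rho^{(1)}_\alpha(y)+|\gamma^{(1)}_\alpha(x,y)|^2$, where $\rho^{(1)}_\alpha=\sum_k n_k|u_k|^2$ and $\gamma^{(1)}_\alpha$ are the one-particle density and density matrix; Cauchy--Schwarz applied to $\gamma^{(1)}_\alpha(x,y)=\sum_k(\sqrt{n_k}u_k(x))\overline{(\sqrt{n_k}u_k(y))}$ yields $|\gamma^{(1)}_\alpha(x,y)|^2\le\rho^{(1)}_\alpha(x)\rho^{(1)}_\alpha(y)$, hence $\rho^{(2)}_\alpha\le2\,\rho^{(1)}_\alpha\otimes\rho^{(1)}_\alpha$. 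Therefore
\begin{equation}
\int\bigl(1-f_b(x_1-x_2)\bigr)\Psi_\alpha^2\,\mathrm{d}X\ \lesssim\ \frac{1}{N_{\mathrm{th}}^2}\iint\bigl(1-f_b(x-y)\bigr)\rho^{(1)}_\alpha(x)\rho^{(1)}_\alpha(y)\,\mathrm{d}x\,\mathrm{d}y\ \le\ \frac{\Vert\rho^{(1)}_\alpha\Vert_\infty}{N_{\mathrm{th}}}\int_{\mathbb{R}^3}\bigl(1-f_b(z)\bigr)\mathrm{d}z ,
\end{equation}
using $\Vert\rho^{(1)}_\alpha\Vert_1=N_{\mathrm{th}}$. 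The integral $\int(1-f_b)$ is handled by scattering theory: from the standard bound $1-f_0(r)\le a_N/r$ and $f_0(b)\gtrsim1$ (as $b\gtrsim a_N$) one gets $1-f_b(z)\lesssim a_N/|z|$ for $|z|<b$ and $=0$ otherwise, so $\int(1-f_b)\lesssim a_N b^2$.

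It remains to bound $\Vert\rho^{(1)}_\alpha\Vert_\infty\le N_{\mathrm{th}}\max_{e_k\le\Lambda}\Vert u_k\Vert_\infty^2$, which reduces the whole problem to the uniform eigenfunction estimate $\Vert u_k\Vert_\infty^2\lesssim(\omega+\Lambda)^{3/2}$; this is the crux and the reason the cut-off $\Lambda$ is introduced. I would prove it by writing $u_k=\mathrm{e}^{t e_k}\,\mathrm{e}^{-t h^{\mathrm{D}}_{\geq R}}u_k$ and using the Feynman--Kac representation exactly as in the proof of Lemma~\ref{lem:densitybound}: dropping the Dirichlet restriction and bounding $\tfrac14\omega^2x^2\ge0$ gives the heat-kernel bound $\mathrm{e}^{-t h^{\mathrm{D}}_{\geq R}}(x,y)\le\mathrm{e}^{3\omega t/2}(4\pi t)^{-3/2}$, hence $\Vert\mathrm{e}^{-t h^{\mathrm{D}}_{\geq R}}\Vert_{L^1\to L^\infty}\le\mathrm{e}^{3\omega t/2}(4\pi t)^{-3/2}$; together with $\Vert\mathrm{e}^{-t h^{\mathrm{D}}_{\geq R}}\Vert_{L^2\to L^2}\le1$ (since $h^{\mathrm{D}}_{\geq R}\ge0$) and the interpolation $\Vert\mathrm{e}^{-t h^{\mathrm{D}}_{\geq R}}\Vert_{L^2\to L^\infty}\le\Vert\mathrm{e}^{-(t/2) h^{\mathrm{D}}_{\geq R}}\Vert_{L^1\to L^\infty}^{1/2}\Vert\mathrm{e}^{-(t/2) h^{\mathrm{D}}_{\geq R}}\Vert_{L^2\to L^2}^{1/2}$, choosing $t\sim(\omega+\Lambda)^{-1}$ gives $\Vert u_k\Vert_\infty\le\mathrm{e}^{t e_k}\Vert\mathrm{e}^{-t h^{\mathrm{D}}_{\geq R}}\Vert_{L^2\to L^\infty}\Vert u_k\Vert_2\lesssim(\omega+\Lambda)^{3/4}$ for every $e_k\le\Lambda$. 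Assembling the estimates yields $1-\Vert F\Psi_\alpha\Vert^2\lesssim N_{\mathrm{th}}^2(\omega+\Lambda)^{3/2}a_N b^2$ with a constant independent of $\alpha$ (since all occupied modes obey $e_k\le\Lambda$), which is the claim. The main obstacle is precisely the last eigenfunction bound: without the energy cut-off one would need an $L^\infty$ (or $L^4$) bound on the eigenfunctions of $h^{\mathrm{D}}_{\geq R}$ uniform in the energy and with controlled $R$-dependence, which is not available.
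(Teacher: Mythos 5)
Your proposal is correct and follows the same overall structure as the paper's proof: expand $1-\Vert F\Psi_\alpha\Vert^2$ into a sum over pairs, pass to the two\-/particle density, dominate it by the square of the one\-/particle density, control $\int(1-f_b)\sim a_N b^2$ by scattering theory, and then invoke the energy cut\-/off $\Lambda$ to get a uniform eigenfunction estimate. The two minor algebraic detours you take at the start ($F^2\ge2F-1$ and $1-F\le\sum(1-f_b)$, as opposed to the paper's $F^2=\prod(1-\eta_b)\ge1-\sum\eta_b$ with $\eta_b=1-f_b^2$) and your Wick\-/style computation yielding $\rho^{(2)}\le2\,\rho\otimes\rho$ are equivalent to what the paper does up to normalization conventions for $\varrho^{(2)}$. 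The one genuine difference is in the key eigenfunction estimate: you prove a uniform $L^\infty$ bound $\Vert u_k\Vert_\infty^2\lesssim(\omega+\Lambda)^{3/2}$ via a Feynman--Kac heat\-/kernel argument (the same technique the paper uses for Lemma~\ref{lem:densitybound}), and then split $\iint(1-f_b)\rho\rho$ as $\Vert\rho\Vert_\infty\Vert\rho\Vert_1\Vert1-f_b\Vert_1$; the paper instead uses a uniform $L^4$ bound $\Vert\varphi_j^{\mathrm D}\Vert_{L^4}\lesssim\Vert\nabla\varphi_j^{\mathrm D}\Vert^{3/4}\lesssim(\omega+\Lambda)^{3/8}$ obtained from H\"older interpolation between $L^2$ and $L^6$ plus the Sobolev inequality, and splits via Young's inequality as $\Vert\eta_b\Vert_1\Vert\rho\Vert_2^2$. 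Both routes land on the same exponent $(\omega+\Lambda)^{3/2}$; the paper's is a bit more elementary (no heat\-/kernel machinery), while yours gives the stronger $L^\infty$ information. One small inaccuracy in your write\-/up: the interpolation inequality $\Vert e^{-th}\Vert_{L^2\to L^\infty}\le\Vert e^{-(t/2)h}\Vert_{L^1\to L^\infty}^{1/2}\Vert e^{-(t/2)h}\Vert_{L^2\to L^2}^{1/2}$ is not quite the standard one; the clean statement is $\Vert e^{-th}\Vert_{L^2\to L^\infty}^2=\sup_x e^{-2th}(x,x)\le\Vert e^{-2th}\Vert_{L^1\to L^\infty}$ (or, via Riesz--Thorin in the target exponent, $\Vert e^{-th}\Vert_{L^1\to L^2}\le\Vert e^{-th}\Vert_{L^1\to L^1}^{1/2}\Vert e^{-th}\Vert_{L^1\to L^\infty}^{1/2}$ plus self\-/adjointness), but the exponent $(\omega+\Lambda)^{3/4}$ you extract is correct either way. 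You also correctly identify that the cut\-/off $\Lambda$ is introduced precisely to make this eigenfunction estimate uniform in $k$, which matches the paper's own motivation.
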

\begin{proof}
Spelled out in more detail, the norm of $F \Psi_{\alpha}$ reads
\begin{equation}
\left\Vert F \Psi_{\alpha} \right\Vert^2 = \int_{\mathbb{R}^{3 N_{\mathrm{th}}}} | \Psi_{\alpha} |^2 \prod_{1 \leq i < j \leq N_{\mathrm{th}}} f_b(x_i - x_j)^2 \text{d}X.
\label{eq:upperbound14}
\end{equation} 
We define $\eta_b(x) = 1 - f_b(x)^2$ and estimate
\begin{align}
\left\Vert F \Psi_{\alpha} \right\Vert^2 &\geq \int_{\mathbb{R}^{3 N_{\mathrm{th}}}} | \Psi_{\alpha} |^2 \left( 1 -  \sum_{1 \leq i < j \leq N_{\mathrm{th}}} \eta_b(x_i - x_j) \right) \text{d}X \label{eq:upperbound15} \\
&= 1 - \int_{\mathbb{R}^{6}} \eta_b(x-y) \varrho^{(2)}_{\Psi_{\alpha}}(x,y) \text{d}(x,y), \nonumber
\end{align}
where $\varrho^{(2)}_{\Psi_{\alpha}}(x,y)$ denotes the two-particle density of $\Psi_{\alpha}$. We use the fact that the $\Psi_{\alpha}$ are symmetrized products of one-particle orbitals to conclude that $\varrho^{(2)}_{\Psi_{\alpha}}(x,y) \leq  \varrho_{\Psi_{\alpha}}(x) \varrho_{\Psi_{\alpha}}(y)$ holds, where $\varrho_{\Psi_{\alpha}}$ is the one-particle density of $\Psi_{\alpha}$. This allows us to bound the integral on the right-hand side of  Eq.~\eqref{eq:upperbound15} in the following way:
\begin{equation}
\int_{\mathbb{R}^{6}} \eta_b(x-y) \varrho^{(2)}_{\Psi_{\alpha}}(x,y) \text{d}(x,y) \leq \underbrace{\int_{\mathbb{R}^3} \eta_b(x) \text{d}x}_{_{\leq \frac{4 \pi}{3} a_N b^2}} \int_{\mathbb{R}^3} \varrho_{\Psi_{\alpha}}(y)^2 \text{d}y. 
\label{eq:upperbound16}
\end{equation}
To obtain the bound for the integral of $\eta_b$, we used its explicit form and the lower bound $f_0(|x|) \geq \left[ 1-a/|x| \right]_+$, see \cite[Appendix~C]{Themathematicsofthebosegas}.

Let us have a closer look at the integral over the squared density on the right-hand side of Eq.~\eqref{eq:upperbound16}. Denote by $\lbrace \varphi_j^{\mathrm{D}} \rbrace_{j=0}^{\infty}$ a complete set of eigenfunctions of $h^{\mathrm{D}}_{\geq R}$ and estimate
\begin{equation}
\int_{\mathbb{R}^3} \varrho_{\Psi_{\alpha}}(x)^2 \text{d}x = \int_{\mathbb{R}^3} \left[ \sum_{j} \langle \varphi_j | \gamma_{\Psi_\alpha} | \varphi_j\rangle \varphi_j^{\mathrm{D}}(x)^2  \right]^2 \text{d}x \leq N_{\mathrm{th}}^2 \sup_{ \substack{j \geq 0 : \\ e_j(h^{\mathrm{D}}_{\geq R}) \leq \Lambda}} \left\Vert \varphi_j^{\mathrm{D}} \right\Vert_{L^4(\mathbb{R}^3)}^4.
\label{eq:upperbound17}
\end{equation}
In the above equation $e_j(h^{\mathrm{D}}_{\geq R})$ is the $j$-th eigenvalue of $h^{\mathrm{D}}_{\geq R}$. By the H\"older and Sobolev inequalities and the normalization of the functions $\varphi_j^{\mathrm{D}}$, we have $\Vert \varphi_j^{\mathrm{D}} \Vert_{L^4(\mathbb{R}^3)} \lesssim \Vert \nabla \varphi_j^{\mathrm{D}} \Vert^{3/4}$. On the other hand, 
$\Vert \nabla \varphi_j^{\mathrm{D}} \Vert \leq (3\omega/2 + e_j(h^{\mathrm{D}}_{\geq R}))^{1/2} \leq (3\omega/2+ \Lambda)^{1/2}$. This implies
\begin{equation}
\int_{\mathbb{R}^3} \varrho_{\Psi_{\alpha}}(x)^2 \text{d}x \lesssim N_{\mathrm{th}}^2 (\omega+\Lambda)^{3/2}.
\label{eq:upperbound17b}
\end{equation}
Together with Eqs.~\eqref{eq:upperbound15}--\eqref{eq:upperbound16} this proves the claim.
\end{proof}

Next we analyze the numerator of the second term on the right-hand side of Eq.~\eqref{eq:upperbound12}. We compute
\begin{equation}
\nabla_i F(x_1,\ldots,x_{N_{\mathrm{th}}}) = \sum_{\substack{l = 1 \\ l \neq i}}^{N_{\mathrm{th}}} \frac{F(x_1,\ldots,x_{N_{\mathrm{th}}})}{f_b(x_l-x_i)} \nabla f_b(x_l-x_i).
\label{eq:upperbound18}
\end{equation} 
The square of this expression is given by
\begin{align}
\left( \nabla_i F \right)^2 &= \sum_{\substack{l=1 \\ l \neq i}} \frac{F^2}{f_b(x_l - x_i)^2} \left[ \nabla f_b(x_l - x_i) \right]^2 \label{eq:upperbound19} \\
&\hspace{2cm} + \sum_{\substack{k,l=1 \\ l,k \neq i \\ k \neq l}} \frac{F^2}{f_b(x_l-x_i) f_b(x_k - x_i)} \nabla f_b(x_l-x_i) \nabla f_b(x_k-x_i). \nonumber
\end{align}
These terms need to be inserted into the numerator of the second term on the right-hand side of Eq.~\eqref{eq:upperbound12} and we start with the first term on the right-hand side of the above equation. Introducing the function $\xi(x) = \left[ \nabla f_b(x) \right]^2 + \tfrac{1}{2} v_N(x) f_b(x)^2$ and noting that $0 \leq f_b \leq 1$ and $\sum_{\alpha=1}^{\infty} \lambda_{\alpha} \varrho^{(2)}_{\Psi_{\alpha}}(x,y) = \varrho^{(2)}_{\Gamma_{\geq R}^{\mathrm{D},\Lambda}}(x,y)$, we obtain
\begin{align}
&\sum_{\alpha=1}^{\infty} \lambda_{\alpha} \sum_{1 \leq i < j \leq N_{\mathrm{th}}} \int_{\mathbb{R}^{3 N_{\mathrm{th}}}} \left\lbrace \frac{2F^2}{f_b(x_i-x_j)^2} \left[ \nabla f_b(x_i - x_j) \right]^2 + v_N(x_i - x_j) F^2 \right\rbrace \Psi_{\alpha}^2 \text{d}X \label{eq:upperbound20} \\
&\hspace{8cm} \leq 2 \int_{\mathbb{R}^6} \xi(x-y) \varrho^{(2)}_{\Gamma_{\geq R}^{\mathrm{D},\Lambda}}(x,y) \text{d}(x,y). \nonumber
\end{align}
Similarly, the off-diagonal terms in Eq.~\eqref{eq:upperbound19} are bounded from above by
\begin{equation}
6 \int_{\mathbb{R}^9} \varrho^{(3)}_{\Gamma_{\geq R}^{\mathrm{D},\Lambda}}(x,y,z) \vert \nabla f_b(x-y) \nabla f_b(z-y) \vert \text{d}(x,y,z). \label{eq:upperbound21}
\end{equation}
Combining this with Eqs.~\eqref{eq:upperbound12}--\eqref{eq:upperbound13}, we obtain for an appropriately chosen constant $C>0$ and $ C (\omega+\Lambda)^{3/2} a_N b^2 N_{\mathrm{th}}^2 < 1$
\begin{equation}
\tr\left[ \left( H^{\mathrm{D}}_{\geq R} + V_{33} \right) \tilde{\Gamma}_{\geq R}^{\mathrm{D},\Lambda} \right] \leq \tr\left( H^{\mathrm{D},\Lambda}_{\geq R} \Gamma_{\geq R}^{\mathrm{D}} \right) + \frac{A}{1 - C (\omega+\Lambda)^{3/2} a_N b^2 N_{\mathrm{th}}^2} \label{eq:upperbound22} 
\end{equation}
as an upper bound for the energy of the thermal cloud, where
\begin{align}
A & = 2 \int_{\mathbb{R}^6} \xi(x-y) \varrho^{(2)}_{\Gamma_{\geq R}^{\mathrm{D},\Lambda}}(x,y)  \text{d}(x,y) \label{eq:upperbound22b} \\
&\quad  + 6 \int_{\mathbb{R}^9} \left\vert \nabla f_b(x-y) \nabla f_b(z-y) \right\vert \varrho^{(3)}_{\Gamma_{\geq R}^{\mathrm{D},\Lambda}}(x,y,z) \text{d}(x,y,z) \,.\nonumber
\end{align}
With the help of Proposition~\ref{cor:density} and Lemma~\ref{lem:densitybound}, one readily estimates the first term on the right-hand side of Eq.~\eqref{eq:upperbound22b} by
\begin{equation}
\int_{\mathbb{R}^6} \xi(x-y) \varrho^{(2)}_{\Gamma_{\geq R}^{\mathrm{D},\Lambda}}(x,y)  \text{d}(x,y) \label{eq:upperbound23}  \lesssim \frac{N_{\mathrm{th}}}{ \beta^{3/2}} \underbrace{\int_{\mathbb{R}^3}\xi(x) \text{d}x}_{\frac{4 \pi a_N}{1-\frac{a_N}{b}} + \int_{|x|>b} v_N(x) \text{d}x }  \lesssim \frac{a_N N_{\mathrm{th}}}{\beta^{3/2}}. 
\end{equation}
The terms below the curly brackets are obtained from the explicit form of $f_b$, see \cite[Appendix~C]{Themathematicsofthebosegas}. To obtain the final bound we used 
$b \geq c a_N$ for some $c>1$. Note that we assumed $e^{-\beta \Lambda/4} \lesssim \beta \omega$ and $\beta \omega \ln N \lesssim 1$ in order to be able to apply Lemma~\ref{lem:densitybound}.  

The second term on the right-hand side of Eq.~\eqref{eq:upperbound22b} can be treated with a rough bound that we derive now. Let $a_\alpha$ and $a^*_\alpha$ denote the usual creation and annihilation operators on Fock space corresponding to an eigenfunction $\varphi_\alpha^{\mathrm D}$ of $h^{\mathrm D}_{\geq R}$. An application of the Cauchy-Schwarz inequality tells us that 
\begin{align} \nonumber
\varrho^{(3)}_{\Gamma_{\geq R}^{\mathrm{D},\Lambda}}(x,y,z) &=  \sum_{\alpha_1,\alpha_2,\alpha_3}\left| \frac{1}{6}  \sum_{\sigma \in S_3} \varphi^{\mathrm{D}}_{\sigma(\alpha_1)}(x) \varphi^{\mathrm{D}}_{\sigma(\alpha_2)}(y) \varphi^{\mathrm{D}}_{\sigma(\alpha_3)}(z) \right|^2   \left\langle a_{\alpha_1}^* a_{\alpha_2}^* a_{\alpha_3}^* a_{\alpha_3} a_{\alpha_2} a_{\alpha_1}  \right\rangle_{\Gamma_{\geq R}^{\mathrm{D},\Lambda}}  \\  \label{eq:upperbound23a} 
&\leq  \sum_{\alpha_1,\alpha_2,\alpha_3} \varphi^{\mathrm{D}}_{\alpha_1}(x)^2 \varphi^{\mathrm{D}}_{\alpha_2}(y)^2 \varphi^{\mathrm{D}}_{\alpha_3}(z)^2 \left\langle a_{\alpha_1}^* a_{\alpha_2}^* a_{\alpha_3}^* a_{\alpha_3} a_{\alpha_2} a_{\alpha_1} \right\rangle_{\Gamma_{\geq R}^{\mathrm{D},\Lambda}}, 
\end{align}
where $S_3$ denotes the group of permutations of three elements. Using $abc \leq \tfrac{1}{3} (a^3+b^3+c^3)$, we obtain
\begin{equation}
\varrho^{(3)}_{\Gamma_{\geq R}^{\mathrm{D},\Lambda}}(x,y,z) \leq \frac 13\sum_{\alpha_1,\alpha_2,\alpha_3} \left( \varphi^{\mathrm{D}}_{\alpha_1}(x)^6 + \varphi^{\mathrm{D}}_{\alpha_2}(y)^6 +\varphi^{\mathrm{D}}_{\alpha_3}(z)^6 \right) \left\langle a_{\alpha_1}^* a_{\alpha_2}^* a_{\alpha_3}^* a_{\alpha_3} a_{\alpha_2} a_{\alpha_1} \right\rangle_{\Gamma_{\geq R}^{\mathrm{D},\Lambda}}.
\label{eq:upperbound23b}
\end{equation}
We insert this bound into the second term on the right-hand side Eq.~\eqref{eq:upperbound22b} and obtain 
\begin{align}
&\int_{\mathbb{R}^9} \left\vert \nabla f_b(x-y) \nabla f_b(z-y) \right\vert \varrho^{(3)}_{\Gamma_{\geq R}^{\mathrm{D},\Lambda}}(x,y,z) \text{d}(x,y,z) \label{eq:upperbound23c} \\
&\hspace{5cm} \leq N_{\mathrm{th}}^3 \left( \int_{\mathbb{R}^3} \left| \nabla f_b(x) \right| \text{d}x \right)^2 \sup_{\substack{\alpha \geq 0 : \\ e_{\alpha}(h^{\mathrm{D}}_{\geq R}) \leq \Lambda}} \left\Vert \varphi_{\alpha}^{\mathrm{D}} \right\Vert_{L^6(\mathbb{R}^3)}^6. \nonumber
\end{align}
From Sobolev's inequality 
we infer that $\left\Vert \varphi_{\alpha}^{\mathrm{D}} \right\Vert_{L^6(\mathbb{R}^3)}^2 \leq 3\omega/2 + e_{\alpha}(h^{\mathrm{D}}_{\geq R}) \leq 3\omega/2 + \Lambda$. We also have $\int_{\mathbb{R}^3} \left| \nabla f_b(x) \right| \text{d}x \leq a_N b$. Combining this with Eqs.~\eqref{eq:upperbound22}--\eqref{eq:upperbound23} we finally obtain
\begin{equation}
\tr\left[ \left( H^{\mathrm{D}}_{\geq R} + V_{33} \right) \tilde{\Gamma}_{\geq R}^{\mathrm{D},\Lambda} \right] \leq \tr\left( H^{\mathrm{D}}_{\geq R} \Gamma_{\geq R}^{\mathrm{D},\Lambda} \right) + C \left( \frac{a_N N_{\mathrm{th}}}{\beta^{3/2}} +  \Lambda^3 a_N^2 b^2 N_{\mathrm{th}}^3 \right)
\label{eq:upperbound23d}
\end{equation}
for some appropriately chosen $C>0$ as an upper bound on the  energy of the thermal cloud.

Next, we investigate the entropy of the thermal cloud. To that end, we use \cite[Lemma~2]{RobertFermigas} which we spell out here for the sake of completeness. 

\begin{lemma}
\label{lem:upperbound4}
Let $\Gamma$ be a density matrix on some Hilbert space, with eigenvalues $\lambda_{\alpha} \geq 0$, $\alpha \geq 0$. Additionally let $\lbrace P_{\alpha} \rbrace_{\alpha=0}^{\infty}$ be a family of  one-dimensional orthogonal projections (for which $P_{\alpha} P_{\alpha'} = \delta_{\alpha,\alpha'} P_{\alpha}$ need not necessarily be true) and define $\hat{\Gamma} = \sum_{\alpha} \lambda_{\alpha} P_{\alpha}$. Then
\begin{equation}
S( \hat{\Gamma} ) \geq S(\Gamma) - \ln\left( \left\Vert \textstyle{\sum_{\alpha} } P_{\alpha} \right\Vert \right).
\label{eq:upperbound24}
\end{equation} 
\end{lemma}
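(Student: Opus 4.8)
The plan is to reduce the statement to a single application of the operator monotonicity of the logarithm. Write $P_\alpha = \vert \phi_\alpha \rangle\langle \phi_\alpha \vert$ with $\Vert \phi_\alpha \Vert = 1$ and assume $\mu := \Vert \sum_\alpha P_\alpha \Vert < \infty$, since otherwise there is nothing to prove. Let $\{e_\alpha\}$ be an orthonormal basis of $\ell^2$ indexed by the same set as the $\lambda_\alpha$, let $D$ be the diagonal operator $D e_\alpha = \lambda_\alpha e_\alpha$ on $\ell^2$, and let $W \colon \ell^2 \to \mathcal{H}$ be the bounded operator with $W e_\alpha = \phi_\alpha$ (its boundedness is exactly the assumption $\mu < \infty$, because $W W^* = \sum_\alpha P_\alpha$). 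Then $\hat\Gamma = W D W^*$, while $S(\Gamma) = - \tr[D \ln D]$ because the $\lambda_\alpha$ are the eigenvalues of $\Gamma$.

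First I would pass to the Gram matrix $N := W^* W$ on $\ell^2$, with entries $N_{\alpha\beta} = \langle \phi_\alpha, \phi_\beta \rangle$. Since $(W D^{1/2})(W D^{1/2})^* = \hat\Gamma$ and $(W D^{1/2})^*(W D^{1/2}) = D^{1/2} N D^{1/2} =: M$, the operators $\hat\Gamma$ and $M$ have the same nonzero eigenvalues, hence $S(\hat\Gamma) = - \tr[M \ln M]$. The two key observations about $M$ are then: first, $\Vert N \Vert = \Vert W^* W \Vert = \Vert W W^* \Vert = \Vert \sum_\alpha P_\alpha \Vert = \mu$, so $N \leq \mu$ and therefore $M = D^{1/2} N D^{1/2} \leq \mu D$; second, $N_{\alpha\alpha} = \Vert \phi_\alpha \Vert^2 = 1$, so $M$ has the same diagonal as $D$, i.e.\ $M_{\alpha\alpha} = \lambda_\alpha$. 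Recognizing that the lemma is equivalent to a comparison of $\tr[M\ln M]$ and $\tr[D\ln D]$ for two operators with the same diagonal and $M \leq \mu D$ is the only step that is not entirely routine.

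The core estimate is then immediate. From $M \leq \mu D$ and operator monotonicity of $t \mapsto \ln t$ one gets $\ln M \leq (\ln \mu) \mathds{1} + \ln D$; conjugating by $M^{1/2}$ and taking the trace (using cyclicity) gives
\[
\tr[M \ln M] \leq (\ln \mu)\, \tr M + \tr[M \ln D] = \ln \mu + \tr[M \ln D].
\]
Since $\ln D$ is diagonal and $M$ has diagonal entries $\lambda_\alpha$, we have $\tr[M \ln D] = \sum_\alpha \lambda_\alpha \ln \lambda_\alpha = - S(\Gamma)$. Combining the two displays, $- S(\hat\Gamma) = \tr[M \ln M] \leq \ln \mu - S(\Gamma)$, which is the assertion.

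I do not expect any real obstacle beyond bookkeeping. When the index set is infinite one must justify the trace manipulations: $M$ is trace class because $0 \leq M \leq \mu D$ with $D$ trace class, and the identities and inequalities above can be obtained by first truncating $D$ to finitely many nonzero eigenvalues and passing to the limit; the kernels of $M$ and $D$ inside the logarithm are handled by the usual $\varepsilon$-regularization together with the convention $0 \ln 0 = 0$, and the bound is of course vacuous if $S(\Gamma) = +\infty$. The only elementary facts used are that $A \leq B$ implies $M^{1/2} A M^{1/2} \leq M^{1/2} B M^{1/2}$ and that $\tr[M^{1/2}(\ln D) M^{1/2}] = \tr[M \ln D]$.
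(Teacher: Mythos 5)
Your proof is correct. Note that the paper does not actually prove this lemma; it quotes it verbatim from \cite[Lemma~2]{RobertFermigas} ``for the sake of completeness,'' so there is no in-text argument to compare against. Your derivation is nonetheless a clean, self-contained one, built on the two natural moves for this statement: first, the isospectrality of $\hat\Gamma=WDW^*$ and $M=D^{1/2}(W^*W)D^{1/2}$, which converts the non-orthogonal family $\{P_\alpha\}$ into the Gram matrix $N=W^*W$ with unit diagonal and $0\le N\le\mu\mathds{1}$, giving simultaneously $M\le\mu D$ and $M_{\alpha\alpha}=\lambda_\alpha$; second, the operator monotonicity of $t\mapsto\ln t$, followed by conjugation with $M^{1/2}$ and taking the trace. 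The technical caveats you flag---regularizing to $M+\varepsilon\le\mu D+\varepsilon$ before invoking monotonicity, the kernel inclusion $\ker D\subseteq\ker M$, the convention $0\ln 0=0$, and the vacuity of the bound when $S(\Gamma)=+\infty$ or $\Vert\sum_\alpha P_\alpha\Vert=\infty$---are exactly the right ones and suffice to make the argument rigorous. To the best of my knowledge this coincides with the route taken in the cited reference, so the proposal matches the intended proof rather than offering a genuinely different one.
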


Since $ 0 \leq F \leq 1$ we have
\begin{equation}
\sum_{\alpha=0}^{\infty} \frac{\vert F \Psi_{\alpha} \rangle\langle F \Psi_{\alpha} \vert}{\Vert F \Psi_{\alpha} \Vert^2} \leq \left( \sup_{\alpha \geq 0} \Vert F \Psi_{\alpha} \Vert^{-2} \right). \label{eq:upperbound25}
\end{equation} 
Eq.~\eqref{eq:upperbound25} together with Lemmas~\ref{lem:denominators} and~\ref{lem:upperbound4} show that
\begin{equation}
- T S \left(\tilde{\Gamma}_{\geq R}^{\mathrm{D},\Lambda} \right) \leq -T S \left(\Gamma_{\geq R}^{\mathrm{D},\Lambda} \right) + O \left( T \Lambda^{3/2} a_N b^2 N_{\mathrm{th}}^2\right) 
\label{eq:upperbound25b}
\end{equation}
holds (as long as $(\omega+\Lambda)^{3/2} a_N b^2 N_{\mathrm{th}}^2$ is small enough). 

What remains to be done is to get rid of the Dirichlet boundary condition and the cut-off in the canonical free energy $F_{\geq R}^{\mathrm{D},\Lambda}(\beta,N,\omega) = \tr\left[ H^{\mathrm{D}}_{\geq R } \Gamma^{\mathrm{D},\Lambda}_{\geq R} \right] - T S(\Gamma^{\mathrm{D},\Lambda}_{\geq R})$ of the ideal gas. For that purpose, we will first relate it to its grand canonical analogue where explicit formulas are available. Using Corollary~\ref{cor:freeenergy}, we can bound the canonical free energy from above by
\begin{equation}
F_{\geq R}^{\mathrm{D},\Lambda}(\beta,N_{\mathrm{th}},\omega) \leq \tfrac{1}{\beta} \tr\left[ \mathds{1}\left( h^{\mathrm{D}}_{\geq R} \leq \Lambda \right) \ln\left( 1- e^{-\beta \left( h^{\mathrm{D}}_{\geq R} - \mu \right)} \right) \right] + \mu N_{\mathrm{th}} + O(T \ln N_{\mathrm{th}} )
\label{eq:upperbound26}
\end{equation}
where the chemical potential $\mu$ is chosen as in \eqref{eq:lemmamu1} such that the particle number of the grand canonical Gibbs state associated  to  $\mathds{1}( h^{\mathrm{D}}_{\geq R}\leq \Lambda) h^{\mathrm{D}}_{\geq R}$ equals $N_{\mathrm{th}}$. In order to replace the chemical potential $\mu$ by $\mu_0$ we use the convexity inequalities 
\begin{align}\nonumber
&\tfrac{1}{\beta} \tr \left[ \mathds{1} \left( h^{\mathrm{D}}_{\geq R} \leq \Lambda \right) \left\lbrace \ln\left( 1 - e^{-\beta \left( h_{\geq R}^{\mathrm{D}} - \mu \right)} \right) - \ln\left( 1 - e^{-\beta \left( h_{\geq R}^{\mathrm{D}} - \mu_0 \right) } \right) \right\rbrace \right] \\
& \leq (\mu_0 - \mu) \tr \left[ \mathds{1} \left( h^{\mathrm{D}}_{\geq R} \leq \Lambda \right) \frac{1}{e^{\beta ( h_{\geq R }^{\mathrm{D}} - \mu_0 )}-1} \right] \nonumber
\\
& \leq (\mu_0 - \mu) N_\mathrm{th} + \beta (\mu_0 - \mu)^2  \tr \left[ \mathds{1} \left( h^{\mathrm{D}}_{\geq R} \leq \Lambda \right) \frac{e^{\beta ( h_{\geq R }^{\mathrm{D}} - \mu )}}{\left( e^{\beta ( h_{\geq R }^{\mathrm{D}} - \mu )}-1\right)^2} \right].  \label{eq:upperbound28}
\end{align} 
To bound the last term from above, we can drop the projection $\mathds{1}( h^{\mathrm{D}}_{\geq R}\leq \Lambda)$. Moreover, since $\mu\lesssim \lambda \omega^2 R^2$ (as argued in the proof of Lemma~\ref{lem:densitybound}) and $h_{\geq R}^\mathrm{D} \geq \omega^2 R^2/4 - 3 \omega/2$, we have $h_{\geq R}^\mathrm{D} - \mu \geq c ( h_{\geq R}^\mathrm{D} + \omega)$ for a suitable constant $c>0$ for small  $\lambda$ and large $\omega R^2$. Since the eigenvalues of $h_{\geq R}^\mathrm{D}$ are larger than the ones of $h$,  this implies that
\begin{equation}\label{eqsq}
 \tr \left[  \frac{e^{\beta ( h_{\geq R }^{\mathrm{D}} - \mu )}}{\left( e^{\beta ( h_{\geq R }^{\mathrm{D}} - \mu )}-1\right)^2} \right] \leq  \tr \left[  \frac{e^{c \beta ( h +\omega )}}{\left( e^{c \beta ( h+\omega  )}-1\right)^2} \right] \lesssim (\beta\omega)^{-3}\,.
\end{equation}

Next, we estimate the influence of the cut-off. To do so, we use the same argumentation as the one in the proof of Lemma~\ref{lem:chemicalpotential} that leads to Eq.~\eqref{eq:lemmamu1f}. We find that 
\begin{equation}
\left\vert \tr\left[ \mathds{1}\left( h^{\mathrm{D}}_{\geq R} > \Lambda \right) \ln\left( 1- e^{-\beta \left( h^{\mathrm{D}}_{\geq R} - \mu_0 \right)} \right) \right] \right\vert \lesssim \frac{e^{-\beta \Lambda/4}}{(\beta \omega)^3} \label{eq:upperbound26c}
\end{equation}
holds. To get rid of the Dirichlet  boundary conditions at $\partial B(R)$, we use Lemma~\ref{lem:traces1} and Lemma~\ref{lem:traces2} with the choice $f(x) = -\ln(1-e^{-x})$, which gives
\begin{align} \nonumber
& \tfrac{1}{\beta} \tr\left[ \ln\left( 1- e^{-\beta \left( h^{\mathrm{D}}_{\geq R} - \mu_0 \right)} \right) \right]  \\ & \leq  \tfrac{1}{\beta} \tr\left[ \mathds{1}\left( h \geq \omega \right) \ln\left( 1- e^{-\beta \left( h - \mu_0 \right)} \right) \right] + O\left( \frac{1}{\beta^3 \omega^3 R^2} \right)  \label{eq:upperbound29} 
 + O\left( \frac{R^3}{\beta^{5/2}} \right).
\end{align}
Together with Eqs.~\eqref{eq:upperbound26}--\eqref{eq:upperbound26c}, this implies the upper bound
\begin{align}\nonumber 
F_{\geq R}^{\mathrm{D},\Lambda}(\beta,N_{\mathrm{th}},\omega) &\leq \tfrac{1}{\beta} \tr\left[ \mathds{1}\left( h \geq \omega \right) \ln\left( 1- e^{-\beta \left( h - \mu_0 \right)} \right) \right] + \mu_0 N_{\mathrm{th}}  \\ \nonumber
& \quad  + O\left( \frac{e^{-\beta \Lambda/4}}{\beta (\beta \omega)^{3}} \right) + O\left( \frac{1}{\beta^3 \omega^3 R^2} \right) + O\left( \frac{R^3}{\beta^{5/2}} \right) 
 \\ 
& \quad + O (T \ln N_\mathrm{th}) + O\left( \frac{ \beta(\mu-\mu_0)^2 }{(\beta \omega)^3 } \right) 
\,. \label{eq:upperbound30}
\end{align}

In the final step we  relate the right-hand side of Eq.~\eqref{eq:upperbound30} to the canonical free energy $F_0(\beta,N,\omega)$ of the  ideal gas. First of all, we note that we can drop the spectral projection $\mathds{1}\left( h \geq \omega \right)$ in the first term on the right-hand side of Eq.~\eqref{eq:upperbound30} at the cost of an error of the size $\beta^{-1} \ln(1-e^{\beta \mu_0}) \lesssim \beta^{-1} \ln N_0^{\mathrm{gc}} \leq \beta^{-1} \ln N$. Secondly, we add the missing term $\mu_0 N_0$ which is  $O(T)$. 
Finally, 
Corollary~\ref{cor:freeenergy} tells us that $\vert F_0(\beta,N,\omega) - F_0^{\mathrm{gc}}(\beta,N,\omega) \vert \leq T O( \ln N )$.  Together with Eq.~\eqref{eq:upperbound30}, we conclude that
\begin{align}\nonumber
F_{\geq R}^{\mathrm{D}}(\beta,N_{\mathrm{th}},\omega) & \leq  F_0(\beta,N,\omega) + O\left( \frac{1}{\beta^3 \omega^3 R^2} \right) + O\left( \frac{R^3}{\beta^{5/2}} \right) + O\left( \frac{e^{-\beta \Lambda/4}}{\beta (\beta\omega)^3} \right) \\ & \quad + O\left( \beta^{-1} \ln N \right) + O\left( \frac{ \beta(\mu-\mu_0)^2 }{(\beta \omega)^3 } \right)   \,. \label{eq:upperbound31}
\end{align}

We combine Eqs.~\eqref{eq:upperbound23d},~\eqref{eq:upperbound25b} and~\eqref{eq:upperbound31} to find the final upper bound for the contribution of the thermal cloud to the free energy \eqref{eq:upperbound8} of the trial state. It reads 
\begin{align}\nonumber 
&\tr\left[ \left( H^{\mathrm{D}}_{\geq R} + V_{33} \right) \tilde{\Gamma}_{\geq R}^{\mathrm{D},\Lambda} \right] - T S\left( \tilde{\Gamma}_{\geq R}^{\mathrm{D},\Lambda} \right) \\ \nonumber &  \leq F_0(\beta,N,\omega) + O\left( \frac{1}{\beta^3 \omega^3 R^2} \right) + O\left( \frac{R^3}{\beta^{5/2}} \right)  + O\left( \frac{e^{-\beta \Lambda/4}}{\beta (\beta \omega)^3} \right) + O\left( \beta^{-1} \ln N  \right)\\ 
&  \quad + O\left( \frac{ \beta(\mu-\mu_0)^2 }{(\beta \omega)^3 } \right)   + O\left( \frac{a_N N_{\mathrm{th}}}{\beta^{3/2}}\right) + O\left( \Lambda^3 a_N^2 b^2 N_{\mathrm{th}}^3 \right).  \label{eq:upperbound28c}
\end{align}
Recall that $\mu-\mu_0$ was estimated in Lemma~\ref{lem:chemicalpotential}. 
To obtain the result, we assumed $\omega^{-1/2} \ll R \leq \lambda \omega^{-1} \beta^{-1/2}$ with $\lambda > 0$ small enough, $e^{-\beta \Lambda/4} \lesssim \beta \omega$ and $\beta \omega \ln N \lesssim 1$ as well as that $\Lambda^{3/2} a_N b^2 N_{\mathrm{th}}^2$ is small enough.

\subsection{The condensate energy and the interaction between the condensate and the thermal cloud}
We recall that $E_{\leq  R}^{\mathrm{D}}$ denotes the ground state energy of the Hamiltonian $H_{\leq R}^{\mathrm{D}}$ in  \eqref{eq:upperbound2}. In the following  we write it as $E_{\leq  R}^{\mathrm{D}}(N_0)$ to explicitly highlight its dependence on the number of particles in the condensate. The strategy here is to use existing results for the ground state energy to relate $E_{\leq  R}^{\mathrm{D}}(N_0)$ to the GP energy $E^{\mathrm{GP}}(N_0,a_N,\omega)$ defined in Eq.~\eqref{eq:mainresult2}.

If we go through the proof of the upper bound in \cite{RobertGPderivation}, we obtain 
\begin{equation}
E^{\mathrm{D}}_{\leq R}(N_0) \leq E^{\mathrm{GP},\mathrm{D}}_{\leq R}(N_0,a_N,\omega) \left( 1 + O\left( N_0^{-2/3} \right) \right)
\label{eq:upperboundcondensate1a}
\end{equation}
Here $E^{\mathrm{GP},\mathrm{D}}_{\leq R}(N_0,a_N,\omega)$ denotes the GP energy when we minimize only over functions in $H^1_0(B(R))$, that is, over functions that vanish outside $B(R)$. It is therefore sufficient to find an upper bound for $E^{\mathrm{GP},\mathrm{D}}_{\leq R}(N_0,a_N,\omega)$ in terms of the GP energy $E^{\mathrm{GP}}(N_0,a_N,\omega)$ without additional boundary conditions. Let $j_1,j_2 \in \mathcal{C}^{\infty}(\mathbb{R}^3)$ be a partition of unity in the sense that $j_1(x)^2+j_2(x)^2 = 1$ for all $x \in \mathbb{R}^3$. We assume that $j_1$ equals one for $|x| \leq R/2$ and zero for $|x| \geq R$ and that $|\nabla j_1(x)|^2 + |\nabla j_2(x)|^2 \leq 12 R^{-2}$. The IMS localization formula (see e.g. \cite{SimonSchroedingerOps}) tells us that $h = j_1 h j_1 + j_2 h j_2 - |\nabla j_1|^2 - |\nabla j_2|^2$ where $h$ is the harmonic oscillator Hamiltonian \eqref{eq:idealbosegas1a}. For the GP energy this implies
\begin{align}
E^{\mathrm{GP}}(N_0,a_N,\omega) & \geq \inf_{ \Vert \phi \Vert^2=N_0} \left\lbrace \left\langle j_1 \phi, h j_1 \phi \right\rangle + 4 \pi a_N \int_{\mathbb{R}^3} \left\vert j_1(x) \phi(x) \right\vert^4 \text{d}x + \left( \frac{\omega^2 R^2}{4} - \frac{3 \omega}{2} \right) \left\Vert j_2 \phi \right\Vert^2 \right\rbrace \nonumber \\
&\quad\quad  - 12 R^{-2} N_0. \label{eq:upperboundcondensate1}
\end{align}
By minimizing over $j_1 \phi$ and $j_2 \phi$ separately, keeping the constraint $\Vert j_1 \phi \Vert^2 + \Vert j_2 \phi \Vert^2 = N_0$, we obtain a lower bound on the right-hand side of Eq.~\eqref{eq:upperboundcondensate1}. Since $a_N \sim N^{-1}$ and $\omega^2 R^2 \gg \omega$, one easily sees that the minimum is attained if we put all $L^2$-mass into the function $j_1 \phi$. For the energy, we therefore obtain
\begin{equation}
E^{\mathrm{GP}}(N_0,a_N,\omega) \geq E^{\mathrm{GP},\mathrm{D}}_{\leq R}(N_0,a_N,\omega) - 12 R^{-2} N_0.
\label{eq:upperboundcondensate2}
\end{equation}
This finally proves
\begin{equation}
E_{\leq  R}^{\mathrm{D}}(N_0) \leq E^{\mathrm{GP}}(N_0,a_N,\omega) + O\left(\omega N_0^{1/3} \right) + O\left(N_0 R^{-2} \right),
\label{eq:upperboundcondensate3}
\end{equation}
which is the desired bound for the first term on the right-hand side of \eqref{eq:upperbound8}.

The last term in \eqref{eq:upperbound8} to consider is the interaction between the condensate and the thermal cloud,  given by
\begin{equation}
\tr\left( V_{13} \Gamma_N \right) = \int_{B(R+\ell)^{\mathrm{c}} \times B(R)} v_N(x-y) \varrho_{\tilde{\Gamma}_{\geq R+\ell}^{\mathrm{D},\Lambda}}(x) \varrho_{\Psi^{\mathrm{D}}_{\leq R}}(y) \text{d}(x,y)\,,
\label{eq:upperboundcondensate5}
\end{equation}
where $\varrho_{\tilde{\Gamma}_{\geq R+\ell}^{\mathrm{D},\Lambda}}$ is the one-particle density of the state ${\tilde{\Gamma}_{\geq R+\ell}^{\mathrm{D},\Lambda}}$  defined in \eqref{eq:upperbound6}.  
Note that we have inserted the missing $\ell$ in $B(R+\ell)$ again (compare with the discussion in Subsection~\ref{subsec:ansatz}). 
When we use that $F \leq 1$ and apply Lemma~\ref{lem:denominators}, we find
\begin{equation}
\varrho_{\tilde{\Gamma}_{\geq R+\ell}^{\mathrm{D},\Lambda}}(x) \leq \frac{\varrho_{\Gamma_{\geq R + \ell}^{\mathrm{D},\Lambda}}(x)}{1-C (\omega+\Lambda)^{3/2} a_N b^2 N_{\mathrm{th}}^2}. 
\label{eq:upperboundcondensate6}
\end{equation}
for some $C>0$ and $C (\omega+\Lambda)^{3/2} a_N b^2 N_{\mathrm{th}}<1$. An application of Proposition~\ref{cor:density} and Lemma~\ref{lem:densitybound}  tells us that 
$\varrho_{\Gamma_{\geq R+\ell}^{\mathrm{D},\Lambda}}(x) \lesssim \beta^{-3/2}$, hence
\begin{equation}
\tr\left( V_{13} \Gamma_N \right) \lesssim \beta^{-3/2} N_0 \int_{|x|\geq \ell} v_N(x) \text{d} x = \omega^{-1/2} \beta^{-3/2} \frac{N_0}{N}  \int_{|x|\geq \ell N \omega^{1/2}} v(x) \text{d} x
\end{equation}
As already mentioned at the beginning of Section~\ref{subsec:ansatz}, we choose $\ell$ such that $\ell N\omega^{1/2}$ is larger than the radius of the hard-core part of the interaction potential. We thus conclude that 
$\tr\left( V_{13} \Gamma_N \right) \lesssim \omega^{-1/2} \beta^{-3/2}$. In combination with Eq.~\eqref{eq:upperboundcondensate3}, this yields the upper bound
\begin{equation}
E^{\mathrm{D}}_{\leq R}(N_0) + \tr\left( V_{13} \Gamma_N \right) \leq E^{\mathrm{GP}}(N_0,a_N,\omega) + O\left(\omega N_0^{1/3} \right) + O\left(N_0 R^{-2} \right) + O\left( \omega^{-1/2} \beta^{-3/2} \right)
\label{eq:upperboundcondensate7}
\end{equation}
for the condensate energy plus the interaction energy between the condensate and the thermal cloud. To obtain Eq.\ \eqref{eq:upperboundcondensate7} we assumed in addition to $\omega^{-1/2} \ll R \leq \lambda \omega^{-1} \beta^{-1/2}$ with $\lambda>0$ small enough that $e^{-\beta \Lambda/4} \lesssim \beta \omega$ and $\beta \omega \ln N \lesssim 1$ holds, as required for the application of Lemma~\ref{lem:densitybound}. We also assumed that $\Lambda^{3/2} a_N b^2 N_{\mathrm{th}}^2$ is small enough.
\subsection{The final estimate for the upper bound}

We combine the results of Eqs.~\eqref{eq:upperbound28c} and~\eqref{eq:upperboundcondensate7}. Together with an application of Lemma~\ref{lem:chemicalpotential} and Corollary~\ref{lem:particlenumbers}, which allows us to obtain a bound on $| \mu - \mu_0 |$, we obtain
\begin{align}
F(\beta,N,\omega) &\leq F_0(\beta,N,\omega) + E^{\mathrm{GP}}(N_0,a_N,\omega)  + O\left( \frac{1}{\beta^3 \omega^3 R^2} \right) + O\left( \frac{R^3}{\beta^{5/2}} \right) + O\left( \frac{e^{-\beta \Lambda/4}}{\beta^4 \omega^3} \right) \label{eq:upperboundfinal1} \\
& \quad  + O\left( \frac{\ln N}{\beta} \right) + O\left( \frac{a_N N_{\mathrm{th}}}{\beta^{3/2}}\right) + O \left( \Lambda^3 a_N^2 b^2 N_{\mathrm{th}}^3 \right) +O\left(\omega N_0^{1/3} \right) + O\left(\frac{N_0}{R^{2}} \right)  \nonumber \\
&\quad + O\left( \frac{1}{\omega^{1/2} \beta^{3/2} } \right) +O\left( \frac{1}{\beta^2 \omega^3 R^4} \right) + O \left( \frac{\omega^3 R^6}{\beta} \right) \,. \nonumber
\end{align}
We assumed $\omega^{-1/2} \ll R \leq \lambda \omega^{-1} \beta^{-1/2}$ with $\lambda>0$ small enough, $e^{-\beta \Lambda/4} \lesssim \beta \omega$, $\beta \omega \ln N \lesssim 1$, $b \geq c a_N$ for some $c>1$ as well as that $\Lambda^{3/2} a_N b^2 N_{\mathrm{th}}^2$ is small enough. To conclude the proof of the upper bound we will distinguish two cases, one where $\beta \omega \lesssim (\ln N)^{-1}$ and the other one where $(\beta \omega) \gtrsim (\ln N)^{-1}$. We start with the first one.

Let us start with the terms containing the cut-off $\Lambda$. With the choices $\beta \Lambda \sim N^{\delta}$ for $\delta >0$, $c a_N \leq b \lesssim  \omega^{-1/2} N^{-1}$ with $c>1$ and the estimate $(\beta \omega)^{-1} \lesssim N^{1/3}$, we find
\begin{equation}
	O \left( \frac{e^{-\beta \Lambda/4}}{\beta^4 \omega^3} \right) + O \left( \Lambda^3 a_N^2 b^2 N_{\mathrm{th}}^3 \right) \leq O\left( \omega N^{3 \delta} \right).
	\label{eq:upperboundfinal2}
\end{equation}
That is, these terms grow only with an arbitrarily small power of $N$. Moreover, $\Lambda^{3/2} a_N b^2 N_{\mathrm{th}}^2 \lesssim N^{3\delta/2 - 1/2} \ll 1$ if $\delta < 1/3$. 
When we use $N_{\mathrm{th}} \leq N$ and $N_0\leq N$, we see that
\begin{equation}
	O\left( \frac{\ln N}{\beta} \right) + O\left( \frac{a_N N_{\mathrm{th}}}{\beta^{3/2}}\right) + O\left(\omega N_0^{1/3} \right) + O\left( \frac{1}{\omega^{1/2} \beta^{3/2} } \right)  \leq O\left( \omega N^{1/2} \right)
\label{eq:upperboundfinal3}
\end{equation}
holds. Since $(\beta \omega)^{-3} + N_0 \lesssim N$ and $\omega^{-1/2} \ll R \leq \lambda \omega^{-1} \beta^{-1/2}$ the remaining error terms are bounded from above by
\begin{equation}
	O\left( \frac{N}{R^2} \right) + O\left( \frac{R^3}{\beta^{5/2}} \right)  \,. 
	\label{eq:upperboundfinal4}
\end{equation}
Optimization yields $R = \beta^{1/2} N^{1/5}$ which is in accordance with $R \leq \lambda \omega^{-1} \beta^{-1/2}$ as long as $\beta \omega N^{1/5} \leq \lambda$. If $\beta \omega N^{1/5} > \lambda$ we choose $R =  \lambda \omega^{-1} \beta^{-1/2}$ instead. Putting all this together, we find
\begin{equation}
F(\beta,N,\omega) - F_0(\beta,N,\omega) - E^{\mathrm{GP}}(N_0,a_N,\omega)  \lesssim \omega \begin{cases}  N^{3/5}  (\beta \omega)^{-1}  & \text{ if } \beta \omega \leq \lambda N^{-1/5} \\ N \beta \omega  & \text{ if } \lambda N^{-1/5} < \beta \omega \lesssim (\ln N)^{-1} \end{cases}
\label{eq:upperboundfinal5}
\end{equation} 
which is our final bound for the case $\beta \omega \lesssim (\ln N)^{-1}$.

To obtain a better bound for relatively large $\beta\omega$ and, in particular, to cover the case $\beta \omega \gtrsim (\ln N)^{-1}$, we proceed as follows. 
Let $E(N)$ denote the ground state energy of $H_N$ in Eq.~\eqref{eq:main1}. We use the upper bound on $E(N)$ in \cite{RobertGPderivation} and estimate 
\begin{equation}
F(\beta,N,\omega) \leq \inf_{\Vert \psi \Vert =1} \langle \Psi , H_N \Psi\rangle \leq E^{\mathrm{GP}}(N,a_N,\omega)(1+O(N^{-2/3}))
\label{eq:upperboundfinal6}
\end{equation}
Since $N = N_0 + N_\mathrm{th} \leq N_0 + O( (\beta\omega)^{-3})$ (see Remark~\ref{rem:app}), we also have $E^{\mathrm{GP}}(N,a_N,\omega) \leq E^{\mathrm{GP}}(N_0,a_N,\omega)(1+ O(N^{-1} (\beta\omega)^{-3}))$.  
Moreover, $F_0(\beta,N,\omega) \gtrsim -\omega (\beta \omega)^{-4}$, see Section~\ref{sec:freeenergyidealgas} and Appendix~\ref{sec:appendix1}. We therefore have
\begin{equation}
F(\beta,N,\omega) - F_0(\beta,N,\omega) - E^{\mathrm{GP}}(N_0,a_N,\omega) \lesssim \omega \left( N^{1/3} +( \beta \omega )^{-3} + ( \beta \omega )^{-4} \right).
\label{eq:upperboundfinal7}
\end{equation}
We use Eq.~\eqref{eq:upperboundfinal5} if $N^{-1/3} \lesssim \beta \omega \leq \lambda N^{-1/5}$ and Eq.~\eqref{eq:upperboundfinal7} for $\beta\omega > \lambda N^{-1/5}$, and arrive at
\begin{equation}
F(\beta,N,\omega) \leq F_0(\beta,N,\omega) + E^{\mathrm{GP}}(N_0,a_N,\omega) + O\left( \omega N^{14/15} \right).
\label{eq:upperboundfinal8}
\end{equation}
This completes the proof of the upper bound.

\section{Proof of the lower bound}
\label{sec:lowerbound}
As for the upper bound, the main idea of the proof of the lower bound is to make use of the two different length scales on which the condensate and the thermal cloud live. Anticipating the result that their respective particle numbers are equal to those of the ideal gas to leading order, this implies that the thermal cloud is much more dilute than the condensate and therefore does not see the interaction to the same order as the condensate does. For a more detailed discussion of these issues, see Section~\ref{sec:lengthscales}. The main technique to implement this mathematically is {\em geometric localization} in Fock space which has been introduced in \cite{DerezinskiGerard} in the context of bosonic quantum field theory. It allows, for the purpose of a lower bound,  to replace the free energy of the whole system by a sum of two free energies, one of a system localized in a ball with radius $2 R$ and another one of a system living in the complement of a ball with radius $R$. At this point the two systems are still correlated, however. The overlap of the two regions comes from the fact that we have to use smooth cut-off functions. The radius $R$ is, as in the proof of the upper bound, chosen such that $\omega^{-1/2} \ll R  \ll \omega^{-1} \beta^{-1/2}$ holds. In the following step, we minimize these two free energies separately which again results in a lower bound. For this  it is necessary to drop the restriction on the particle number and work in Fock space. The minimization procedure in the ball with radius $2 R$ yields the GP energy (as for the upper bound we use the known result at zero temperature here) plus lower order corrections coming from the entropy. In the complement of the ball with radius $R$ we drop the interaction (it is positive by assumption) and obtain the free energy of an ideal gas. 
Throughout the proof of the lower bound we shall assume 
that $(\beta\omega)^{-1} \lesssim N^{1/3}$ (i.e., $T\lesssim T_c$)  as well as that $R \omega^{1/2}$ is large enough.

The case where $N_0 = o(N)$ and/or $N_0 a_N =o(1)$ is quite simple for the lower bound so we discuss it first. On the one hand, the GP energy is of order $ o(\omega N)$ in this case. On the other hand, our interaction potential is nonnegative which allows us to drop it to obtain a lower bound on the free energy, i.e., $F(\beta,N,\omega)\geq F_0(\beta,N,\omega)$, which is sufficient in this case.  
The case where $N_0 \sim N$  is considerably more difficult and will be treated in the remaining part of this Section. 

Our first task is to replace the free energy of a given state $\Gamma_N$ by the sum of the free energies of two localized versions of $\Gamma_N$ and to show that this yields a lower bound. Let $j_1,j_2 \in \mathcal{C}^{\infty}(\mathbb{R}^3)$ be a partition of unity in the sense that $j_1(x)^2+j_2(x)^2=1$ for all $x \in \mathbb{R}^3$ and choose $j_1$ such that it equals one for $x \in B(R)$ and zero for $x \in  B(2R)^{\mathrm{c}}$. We can also assume that $|\nabla j_1(x) |^2 + |\nabla j_2(x)|^2 \leq 3 R^{-2}$ holds. By $\gamma^{(k)}_{\Gamma}$ we denote the $k$-particle reduced density matrix of a state $\Gamma$ on the bosonic Fock space $\mathcal{F}(\mathcal{H})$. It is defined via the integral kernel
\begin{equation}
\gamma^{(k)}_{\Gamma}(y_1,\ldots,y_k;x_1,\ldots,x_k) = \tr\left[ a_{x_1}^* \cdots a_{x_k}^* a_{y_k} \cdots a_{y_1} \Gamma \right]. \label{eq:lowerbound3b}
\end{equation}
The following well-known Lemma, see  \cite{DerezinskiGerard,HLSThemodynamicLimit}, concerns  localizations of a given state. Because its proof is instructive we sketch it here.
\begin{lemma}
\label{lem:localization}
Let $\Gamma$ be a state on $\mathcal{F}(\mathcal{H})$ with $k$-particle density matrices $\gamma_{\Gamma}^{(k)}$, $k\geq 1$ and $j_1$, $j_2$ as above. Then there exist unique states $\Gamma_{j_1}$ and $\Gamma_{j_2}$ on $\mathcal{F}(\mathcal{H})$ with $k$-particle density matrices $j_1^{\otimes k} \gamma_{\Gamma}^{(k)} j_1^{\otimes k}$ and $j_2^{\otimes k} \gamma_{\Gamma}^{(k)} j_2^{\otimes k}$, respectively. Moreover, the entropies of these states are related by
\begin{equation}
S(\Gamma) \leq S(\Gamma_{j_1}) + S(\Gamma_{j_2}).
\label{eq:lowerbound1a}
\end{equation} 
\end{lemma}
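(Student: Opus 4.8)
The plan is to construct the two localized states explicitly on Fock space and then verify the two claimed properties (the form of their reduced density matrices and the subadditivity of entropy). The key structural fact is that a partition of unity $j_1^2 + j_2^2 = 1$ on the one-particle space $\mathcal{H}$ induces an isometry $U : \mathcal{F}(\mathcal{H}) \to \mathcal{F}(\mathcal{H} \oplus \mathcal{H})$ on the bosonic Fock space, determined by the requirement that on the one-particle sector it acts as $\psi \mapsto (j_1 \psi, j_2 \psi)$, and that it intertwines creation operators via $U a^*(\psi) = \bigl( a^*(j_1\psi) \otimes \mathds{1} + \mathds{1} \otimes a^*(j_2\psi) \bigr) U$ under the standard identification $\mathcal{F}(\mathcal{H}\oplus\mathcal{H}) \cong \mathcal{F}(\mathcal{H}) \otimes \mathcal{F}(\mathcal{H})$. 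First I would define $\tilde\Gamma = U \Gamma U^*$, a state on $\mathcal{F}(\mathcal{H}) \otimes \mathcal{F}(\mathcal{H})$, and then set $\Gamma_{j_1} = \tr_2[\tilde\Gamma]$ and $\Gamma_{j_2} = \tr_1[\tilde\Gamma]$, the two partial traces.

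Second, I would check that these states have the asserted reduced density matrices. Using the intertwining relation for $U$ and expanding $U a_{x_1}^* \cdots a_{x_k}^* a_{y_k} \cdots a_{y_1} U^*$, the terms in which all creation and annihilation operators land in the first factor produce exactly the kernel $j_1^{\otimes k}\gamma_\Gamma^{(k)} j_1^{\otimes k}$ after taking the partial trace over the second factor; the cross terms (mixing the two factors) vanish under $\tr_2$ because they leave unmatched creation/annihilation operators in the traced-out factor. This is a routine but slightly bookkeeping-heavy computation; I would present it for $k=1$ in detail and indicate that the general $k$ is identical. Uniqueness of $\Gamma_{j_1}, \Gamma_{j_2}$ follows because a state on bosonic Fock space is determined by its full family of reduced density matrices (equivalently, by all its correlation functions).

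Third, for the entropy bound, the point is that $U$ is an isometry, hence $S(\tilde\Gamma) = S(U\Gamma U^*) = S(\Gamma)$, since the von Neumann entropy is invariant under conjugation by an isometry onto its range (the eigenvalues are unchanged). Then subadditivity of the von Neumann entropy for the bipartite state $\tilde\Gamma$ on $\mathcal{F}(\mathcal{H})\otimes\mathcal{F}(\mathcal{H})$ gives $S(\tilde\Gamma) \leq S(\tr_2\tilde\Gamma) + S(\tr_1\tilde\Gamma) = S(\Gamma_{j_1}) + S(\Gamma_{j_2})$, which is precisely \eqref{eq:lowerbound1a}.

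The main obstacle is the verification that the cross terms drop out under the partial trace and that $U$ is genuinely an isometry (rather than merely a contraction) — i.e., that $j_1^2+j_2^2=1$ really does produce norm-preservation on every $n$-particle sector, which requires care with the combinatorics of the symmetrization/second-quantization map. Once that is in hand, the entropy inequality is immediate from standard properties of the von Neumann entropy, so the analytic content is entirely in the algebraic identity for $U$.
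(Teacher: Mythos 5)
Your proposal is correct and follows essentially the same route as the paper: second-quantize the isometry $\psi \mapsto (j_1\psi, j_2\psi)$ to $\mathcal{F}(\mathcal{H}\oplus\mathcal{H}) \cong \mathcal{F}(\mathcal{H}) \otimes \mathcal{F}(\mathcal{H})$, transport $\Gamma$, take partial traces, and use subadditivity of the von Neumann entropy (your single isometry $U$ is the paper's composite $U\Upsilon(J)$). The paper likewise leaves the bookkeeping of cross terms and the uniqueness-from-reduced-density-matrices fact as remarks rather than full verifications, so the level of detail matches.
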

\begin{proof}
Define the map 
\begin{equation}
J : \mathcal{H} \rightarrow \mathcal{H} \oplus \mathcal{H}, \quad \psi \mapsto (j_1 \psi, j_2 \psi),
\label{eq:lowerbound1}
\end{equation}
where $\mathcal{H} = L^2(\mathbb{R}^3)$. That is, $J$ splits the wavefunction into two parts, one supported in $B(2R)$ and one supported in $B(R)^{\mathrm{c}}$. Let $A$ be an operator on $\mathcal{H}$. We denote by $\Upsilon(A)$ the second quantization of $A$ acting on $\mathcal{F}(\mathcal{H})$. On basis vectors its action is defined by
\begin{equation}
\Upsilon(A) a^*(\psi_1) \cdots a^*(\psi_n) \vert \Omega \rangle = a^*(A \psi_1) \cdots a^*(A \psi_n) \vert \Omega \rangle, 
\label{eq:lowerbound2}
\end{equation}
where $\Omega$ denotes the Fock space vacuum. It extends to all  Fock space vectors by linearity. Let $U$ be the unitary map that identifies $\mathcal{F}(\mathcal{H} \oplus \mathcal{H})$ and $\mathcal{F}(\mathcal{H}) \otimes \mathcal{F}(\mathcal{H})$ given by
\begin{equation}
U  a^*(\psi_1) \cdots a^*(\psi_n) a^*(\varphi_1) \cdots a^*(\varphi_n)  \vert \Omega \rangle = c^*(\psi_1) \cdots c^*(\psi_n) d^*(\varphi_1) \cdots d^*(\varphi_n)  \vert \Omega_1 \otimes \Omega_2 \rangle.
\label{eq:lowerbound3}
\end{equation}
Here $\lbrace \psi_i \rbrace_{i=0}^{\infty}$ denotes a basis for the first copy of $\mathcal{H}$ and $\lbrace \varphi_i \rbrace_{0=1}^{\infty}$ is a basis for the second copy. By $c^*$/$d^*$ we denote the creation operator acting on the first/second factor of $\mathcal{F}(\mathcal{H}) \otimes \mathcal{F}(\mathcal{H})$ and $\Omega_{1}$/$\Omega_2$ is the vacuum in the first/second factor. Having those two operators at hand, we can define the $J$-extension $\Gamma_J$ of $\Gamma$ by
\begin{equation}
\Gamma_J = U \Upsilon(J) \Gamma \Upsilon(J)^* U^*.
\label{eq:lowerbound3c}
\end{equation}
It follows that the states
\begin{equation}
\Gamma_{j_1} = \tr_2 \left( \Gamma_J \right) \quad \text{ and } \quad \Gamma_{j_2} = \tr_1 \left( \Gamma_J \right)
\label{eq:lowerbound3d}
\end{equation}
have the desired property. Here $\tr_{1,2}$ denotes the partial trace over the first/second factor of $\mathcal{F}(\mathcal{H}) \otimes \mathcal{F}(\mathcal{H})$. The uniqueness part follows from the fact that states are uniquely determined by their reduced density matrices, which we are not going to discuss here. The proof of Eq.~\eqref{eq:lowerbound1a} follows from Eq.~\eqref{eq:lowerbound3d} and the subadditivity of the entropy.
\end{proof}
Before we continue, let us introduce some notation. For an operator $A$ on $\mathcal{H}$ let the operator $\text{d}\Upsilon(A)$ on $\mathcal{F}(\mathcal{H})$ be given by $\text{d}\Upsilon(A)|_{\mathcal{F}_N} = \sum_{i=1}^{{N}} A_i$ where $A_i$ stands for $A$ acting on the $i$-th particle in the the Fock space sector  $\mathcal{F}_N$ with $N$ particles. We also define 
\begin{equation}
H_{\leq 2 R}^{\mathrm{D}} = \text{d}\Upsilon\left(h_{\leq 2 R}^{\mathrm{D}} \right) + \bigoplus_{N \geq 2} \sum_{1 \leq i < j \leq N}  v_N(x_i-x_j)
\label{eq:lowerbound7b}
\end{equation}
on $\mathcal{F}(L^2(B(2R))$, where $h_{\leq 2 R}^{\mathrm{D}}$ denotes the operator \eqref{eq:idealbosegas1a} restricted to $B(2R)$ with Dirichlet boundary conditions.  By $\hat{N}$ we denote the particle number operator on Fock space.

Let $\Gamma_N \in \mathcal{S}_N$ be an $N$-particle state. Using the IMS localization formula (see e.g. \cite{SimonSchroedingerOps}) and the fact that $v(x) \geq 0$ for all $x \in \mathbb{R}^3$, we find
\begin{equation}\label{eq:lowerbound6}
\tr\left( H_N \Gamma_N \right)  \geq \tr\left( h j_1 \gamma_{\Gamma_N} j_1 \right) + \tr\left( h j_2 \gamma_{\Gamma_N} j_2 \right) + \tr\left( V j_1^{\otimes 2} \gamma_{\Gamma_N}^{(2)} j_1^{\otimes 2} \right)  \\
- 3 N R^{-2}. 
\end{equation}
Lemma~\ref{lem:localization} tells us that we can bound the entropy of $\Gamma_N$ by the ones of $\Gamma_{j_1}$ and $\Gamma_{j_2}$, the localized states related to $\Gamma_N$. It also allows us to write the energies in Eq.~\eqref{eq:lowerbound6} in terms of $\Gamma_{j_1}$ and $\Gamma_{j_2}$. This implies
\begin{equation}
\tr\left( H_N \Gamma_N \right) - TS(\Gamma_N) \geq \tr \left( H_{\leq 2 R}^{\mathrm{D}} \Gamma_{j_1} \right) - T S\left( \Gamma_{j_1} \right) + \tr\left( \text{d}\Upsilon\left( h_{\geq R}^{\mathrm{D}} \right) \Gamma_{j_2} \right) - T S\left( \Gamma_{j_2} \right) - 3 N R^{-2}
\label{eq:lowerbound6a}
\end{equation}
with $h_{\geq R}^{\mathrm{D}}$ the operator $h$ in \eqref{eq:idealbosegas1a} restricted to $B(R)^c$ with Dirichlet boundary conditions. 
The states $\Gamma_{j_1}$ and $\Gamma_{j_2}$ are related via the fact that they both are constructed by localizing the state $\Gamma_N$. In the next step we will minimize over each of them separately which results in a lower bound. We will also drop the restriction on the particle number. To do so, we have to introduce a chemical potential (and subtract it again). 

The intuition from Section~\ref{sec:lengthscales} tells us that only the condensate is affected by the interaction to leading order. Since the interaction energy per particle inside the condensate can be expected to be of order $\omega$,  while $-\mu_0 \sim \beta^{-4} \omega^{-3}$ is much smaller when $N_0\sim N$ (see Section~\ref{sec:freeenergyidealgas}), the chemical potential $\mu$ for the interacting system will be of order $\omega$, too. In fact, in the GP limit the GP chemical potential $\mu^\mathrm {GP} = \mu^{\mathrm{GP}}(N_0, a_N,\omega )$, defined by
\begin{equation}
\mu^{\mathrm{GP}}(N_0,a_N, \omega) = \frac{\text{d}E^{\mathrm{GP}}}{\text{d}N}(N_0,a_N,\omega) = \frac{E^{\mathrm{GP}}(N_0,a_N,\omega)}{N_0} + \frac{4 \pi a_N}{N_0} \int_{\mathbb{R}^3} \left\vert \phi^{\mathrm{GP}}_{N_0,a_N}(x) \right\vert^4 \text{d}x,
\label{eq:lowerbound6b}
\end{equation}
is the correct choice for $\mu$. We note that because of scaling (see Eqs.~\eqref{eq:scaling} and~\eqref{eq:GPscaling}) $\mu^{\mathrm{GP}} = \omega \mu^\mathrm{GP}(1,a_v,1)$. In particular, $\mu^\mathrm{GP} \sim \omega$ for fixed $a_v> 0$. The chemical potential $\mu^{\mathrm{GP}}$ necessarily also appears in the thermal cloud, but as we will see below, this does not affect its free energy at the level of accuracy we are interested in.

Using the explicit form of the one-particle density matrices of $\Gamma_{j_1}$ and $\Gamma_{j_2}$ (see Lemma~\ref{lem:localization}), we check that $\tr ( \gamma_{\Gamma_{j_1}} + \gamma_{\Gamma_{j_2}} ) = N$ 
holds. Together with Eq.~\eqref{eq:lowerbound6a}, this implies
\begin{align}
\tr\left( H_N \Gamma_N \right) - TS(\Gamma_N) &\geq \tr \left[ \left( H_{\leq 2 R}^{\mathrm{D}} -\mu^{\mathrm{GP}} \hat{N} \right) \Gamma_{j_1} \right] - T S\left( \Gamma_{j_1} \right)  - 3 N R^{-2} \label{eq:lowerbound7} \\
&\quad + \tr\left[ \left( \text{d}\Upsilon\left( h_{\geq R}^{\mathrm{D}} \right) - \mu^{\mathrm{GP}} \hat{N} \right) \Gamma_{j_2} \right] - T S\left( \Gamma_{j_2} \right) + \mu^{\mathrm{GP}} N. \nonumber
\end{align}
From here on we estimate the two contributions on the right-hand side of Eq.~\eqref{eq:lowerbound7} separately,  and start with  the first line.
\subsection{The condensate}
\label{sec:lowerboundcondensate}
For the minimization problem inside $B(2 R)$ we take a small amount of the kinetic energy to control the entropy, which results in a lower order contribution. The minimization problem for the remaining part of the energy  then follows from the results in \cite{RobertGPderivation} for the ground state energy. Let $0<\epsilon <1$. From the positivity of the interaction potential $v$, we conclude that 
\begin{align}
\tr \left[ \left( H_{\leq 2 R}^{\mathrm{D}} -\mu^{\mathrm{GP}} \hat{N} \right) \Gamma_{j_1} \right] - T S\left( \Gamma_{j_1} \right) &\geq (1-\epsilon) \tr \left[ \left( H_{\leq 2 R}^{\mathrm{D}} -\mu^{\mathrm{GP}} \hat{N} \right) \Gamma_{j_1} \right] \label{eq:lowerbound7a} \\
&\quad + \epsilon \tr\left[ \left( \text{d}\Upsilon\left( h_{\leq 2R}^{\mathrm{D}} \right) - \mu^{\mathrm{GP}}\hat{N} \right) \Gamma_{j_1} \right]  - T S\left( \Gamma_{j_1} \right). \nonumber 
\end{align}
We will later choose $\epsilon$ such that $\epsilon \ll 1$ holds.

We start by considering the first term on the right-hand side of Eq.~\eqref{eq:lowerbound7a}.
We denote by $E^{\mathrm{D}}_{\leq 2 R}(M)$ the ground state energy of the Hamiltonian $H_{\leq 2R}^{\mathrm{D}}$ \eqref{eq:lowerbound7b} when restricted to the $M$-particle sector of the Fock space. Also, let $E(N)$ be the ground state energy of the Hamiltonian \eqref{eq:main1}. By the variational principle for the energy, we have $E^{\mathrm{D}}_{\leq 2 R}(M) \geq E(M)$ for all $M \geq 0$ which allows us to get rid of the Dirichlet boundary conditions at $\partial B(2R)$. In particular, it implies
\begin{equation}
\tr \left[ \left( H_{\leq 2 R}^{\mathrm{D}} -\mu^{\mathrm{GP}}\right)\Gamma_{j_1} \right] \geq \inf_{0 \leq M \leq N} \left\lbrace E(M) - \mu^{\mathrm{GP}}M \right\rbrace.
\label{eq:lowerbound7b2}
\end{equation}
To bound the infimum on the right-hand side of the above equation, we distinguish two different regimes for the particle number $M$. This is necessary because we cannot relate $E(M)$ to the GP energy \eqref{eq:mainresult2} if $M$  is too small.

We first consider the case where $M \ll N_0$. Choose $\delta > 0$ such that $\delta \ll 1$ and still $\delta N_0 \gg 1$ and assume $M \leq \delta N_0$. When we drop the positive energy $E(M)$, we obtain the lower bound
\begin{equation}
E(M) - \mu^{\mathrm{GP}} M \geq -\mu^{\mathrm{GP}} \delta N_0.
\label{eq:lowerbound7d}
\end{equation}
Since the term on the right-hand side is of order $ o(\omega N_0)$ it is not important for the minimization  on the right-hand side of Eq.~\eqref{eq:lowerbound7b2}. 
For $\delta N_0 < M \leq N$ we apply \cite[Thm.~IV.1]{RobertGPderivation} and obtain $E(M) \geq E^{\mathrm{GP}}(M,a_N,\omega) (1- o(1))$. Together with Eq.~\eqref{eq:lowerbound7b2}, Eq.~\eqref{eq:lowerbound7d} and the convexity of the map $N_0 \mapsto E^{\mathrm{GP}}(N_0,a_N,\omega)$, this proves the lower bound
\begin{align}
\tr\left( \left( H^{\mathrm{D}}_{\leq 2 R} - \mu^{\mathrm{GP}}\hat{N} \right) \Gamma_{j_1} \right) &\geq \min_{\delta N_0 \leq M \leq N} \left(E^{\mathrm{GP}}(M,a_N,\omega) - \mu^{\mathrm{GP}} M \right) - o(\omega N) \nonumber \\
&\geq E^{\mathrm{GP}}(N_0,a_N,\omega) - \mu^{\mathrm{GP}} N_0 - o(\omega N) \label{eq:lowerbound8}
\end{align}
for the energy inside $B(2R)$. 

\begin{remark}
Let us make a short remark concerning the grand-canonical lower bound (compare with Remark 9 in Section~\ref{sec:remarks}) in which case one cannot restrict attention to $M \leq N$: In this case we can use the superadditivity of the energy $E(M)$ and \cite[Thm.~IV.1]{RobertGPderivation} to observe that for $\eta \in \mathbb{N}$ with $\eta\sim N_0$ we have $E(M) \geq M \eta^{-1} E(\eta ) \geq M \eta^{-1} E^{\mathrm{GP}}(\eta ,a_N,\omega)(1-o(1))$. Choosing $\eta = O(N_0)$ large enough such that $E^{\mathrm{GP}}(\eta ,a_N,\omega) - \mu^{\mathrm{GP}} \eta$ is larger than a constant times $\omega \eta$, one checks that values of $M$ with $M>\eta $ are not relevant for the computation of the minimum in Eq.~\eqref{eq:lowerbound8}. The rest of the argument remains unchanged. 
\end{remark}

Next, we estimate the contribution coming from the entropy in the region $B(2 R)$. We use $\tr[\hat{N} \Gamma_{j_1} ] \leq N$ 
to see that the term in the second line of Eq.~\eqref{eq:lowerbound7a} is bounded from below by
\begin{equation}
\epsilon \tr\left[ \left( \text{d}\Upsilon\left(h_{\leq 2 R}^{\mathrm{D}} \right) - \mu^{\mathrm{GP}} \hat{N} \right] \Gamma_{j_1} \right) - TS(\Gamma_{j_1}) \geq \tfrac{1}{\beta} \tr \left[ \ln \left( 1 - e^{-\beta \epsilon \left( h_{\leq 2 R}^{\mathrm{D}} + \frac 32\omega \right)} \right) \right] - \epsilon N \left( \tfrac 32 \omega + \mu^{\mathrm{GP}}  \right). \label{eq:lowerbound9}
\end{equation}
Note that we have added and subtracted a chemical potential of size $-\frac 32\omega$. To bound the first term on the right-hand side, we note that the  $h^{\mathrm{D}}_{\leq 2R}$ is bounded from below by $-\Delta^{\mathrm{D}}_{\leq 2R} - \frac 32 \omega$. 
Using the Weyl asymptotics  \cite[Satz~XI]{weyl}  of the eigenvalues of $-\Delta^{\mathrm{D}}_{\leq 2R}$, we see that there exists a constant $C>0$ such that the $\alpha$th eigenvalue of this operator satisfies $e_{\alpha}(-\Delta^{\mathrm{D}}_{\leq R}) \geq C (\alpha+1)^{2/3}/R^2$. This allows us to estimate 
\begin{equation}
 \tr \left[ \ln \left( 1 - e^{-\beta \epsilon \left( h_{\leq 2 R}^{\mathrm{D}} +\frac 32 \omega \right)} \right) \right]   \geq  \tr \left[ \ln \left( 1 - e^{-\beta \epsilon \left(  -\Delta^{\mathrm{D}}_{\leq 2R}  \right)} \right) \right]  \geq  
 \sum_{\alpha\geq 1} \ln \left( 1 - e^{- C \beta \epsilon  \alpha^{2/3} R^{-2} } \right). \label{eq:lemupperbound17b}
\end{equation}
The last sum  can be interpreted a Riemann sum approximating the corresponding integral, and one readily checks that it is bounded below by $-O(R^3 \beta^{-3/2} \epsilon^{-3/2}$). That is,
\begin{equation}
\epsilon \tr\left[ \left( \text{d}\Upsilon\left(h_{\leq 2 R}^{\mathrm{D}} \right) - \mu^{\mathrm{GP}} \hat{N} \right] \Gamma_{j_1} \right) - TS(\Gamma_{j_1}) \gtrsim - \frac{R^3}{\beta^{5/2}\epsilon^{3/2}} - \epsilon N \omega\,. \label{eq:lowerbound9as}
\end{equation}
Together with Eqs.~\eqref{eq:lowerbound7a},~\eqref{eq:lowerbound8} and~\eqref{eq:lowerbound9} this yields
\begin{align}\nonumber
\tr \left[ \left( H_{\leq 2 R}^{\mathrm{D}} -\mu^{\mathrm{GP}} \hat{N} \right) \Gamma_{j_1} \right] - T S\left( \Gamma_{j_1} \right) &\geq (1-\epsilon) \left[ E^{\mathrm{GP}}(N_0,a_N,\omega) - \mu^{\mathrm{GP}} N_0 \right]  \\
& \quad - o(\omega N) - O\left( \omega \epsilon N \right) - O\left( \frac{R^3}{\epsilon^{3/2} \beta^{5/2}} \right)  \label{eq:lowerbound9a}
\end{align} 
as a lower bound for the free energy inside $B(2 R)$.
\subsection{The thermal cloud}
Next we consider the  terms in the second line on the right-hand side of Eq.~\eqref{eq:lowerbound7}. Explicit minimization shows that 
\begin{equation}
\tr\left( \left( \text{d}\Upsilon(h_{\geq R}^{\mathrm{D}}) - \mu^{\mathrm{GP}} \hat{N} \right) \Gamma_{j_2} \right) - TS(\Gamma_{j_2}) \geq \tfrac{1}{\beta} \tr \left[ \ln\left(1- e^{-\beta \left( h_{\geq R}^{\mathrm{D}} - \mu^{\mathrm{GP}} \right)} \right) \right]. \label{eq:lowerbound10} 
\end{equation}
To relate the right-hand side of Eq.~\eqref{eq:lowerbound10} to $F_0(\beta,N,\omega)$, we first need to replace $\mu^{\mathrm{GP}}$ by $\mu_0$, the chemical potential of the ideal gas leading to an expected number of $N$ particles. After the chemical potential has been replaced, we have to get rid of the Dirichlet boundary conditions in the formula for the grand canonical free energy, and replace it by its canonical version.

To replace $\mu^\mathrm{GP}$ by $\mu_0$, we use convexity to bound
\begin{align}\nonumber
 &\tr\left[ \ln\left( 1 - e^{-\beta \left( h_{\geq R}^{\mathrm{D}}- \mu^{\mathrm{GP}} \right)} \right) \right] - \tr\left[ \ln\left( 1 - e^{-\beta \left( h_{\geq R}^{\mathrm{D}}- \mu_0 \right)} \right) \right]  \\ \nonumber& \geq \beta(\mu_0 - \mu^\mathrm{GP} )   \tr \left[ \frac 1 {e^{\beta ( h_{\geq R}^{\mathrm{D}} - \mu^\mathrm{GP} ) }-1} \right] \\  & \geq
 \beta(\mu_0 - \mu^\mathrm{GP} )   \tr \left[ \frac 1 {e^{\beta ( h_{\geq R}^{\mathrm{D}} - \mu_0 ) }-1} \right] 
  - \beta^2 (\mu_0-\mu^\mathrm{GP}) ^2 \tr\left[ \frac{ e^{\beta (h_{\geq R}^\mathrm{D} -  \mu^{\mathrm{GP}}) } }{\left(e^{\beta ( h_{\geq R}^{\mathrm{D}} - \mu^\mathrm{GP} )}-1\right)^2 } \right] \,. \label{eq:lowerbound12} 
\end{align}
By arguing as in Eq.~\eqref{eqsq}, we see that the last trace is bounded by $O( (\beta\omega)^{-3})$. 
It remains to get rid of the Dirichlet boundary conditions in the various terms in Eq.~\eqref{eq:lowerbound12}.
To that end, we use  that the $\alpha$-th eigenvalue of $h_{\geq R}^{\mathrm{D}}$ is bounded from below by $e_{\alpha}(h_{\geq R}^{\mathrm{D}}) \geq e_{\alpha}(h)$. For the second term on the left-hand side of \eqref{eq:lowerbound12}, this implies
\begin{equation}
\tr \left[ \ln\left(1- e^{-\beta (h_{\geq R}^{\mathrm{D}} - \mu_0)} \right) \right] \geq \tr\left[ \ln \left( 1 - e^{-\beta \left( h - \mu_0 \right)} \right) \right]. \label{eq:lowerbound11}
\end{equation}  
Note that the above bound is rough in the sense that the right-hand side includes the grand canonical potential of the condensate. The latter is negligible in the limit considered, however. The first term on the right-hand side of Eq.~\eqref{eq:lowerbound12} is proportional to the particle number and hence we have to be more careful. We use $e_{\alpha}(h_{\geq R}^{\mathrm{D}}) \geq e_{\alpha}(h)$ for $\alpha \geq 1$ and $e_{0}(h_{\geq R}^{\mathrm{D}}) \geq \frac{\omega^2 R^2}{4} - \frac{3 \omega }{2}$, which avoids adding the expected number of particles in the condensate, and find
\begin{equation}
\tr\left[ \frac{1}{e^{\beta ( h_{\geq R}^{\mathrm{D}} - \mu_0 )}-1} \right] \leq \tr\left[ \mathds{1}\left( h \geq \omega \right) \frac{1}{e^{\beta \left( h - \mu_0 \right)}-1} \right] + O\left( \beta^{-1}\omega^{-2} R^{-2} \right).
\label{eq:lowerbound11b}
\end{equation}
Note that the first term on the right-hand side of Eq.~\eqref{eq:lowerbound11b} is $N_{\mathrm{th}}^{\mathrm{gc}}$, the expected number of particles in the grand canonical thermal cloud. Combining Eqs.~\eqref{eq:lowerbound10}--\eqref{eq:lowerbound11b} and using that $|\mu_0-\mu^\mathrm{GP}|\lesssim \omega$ 
 we find
\begin{align}
&\tr\left( \left( \text{d}\Upsilon(h_{\geq R}^{\mathrm{D}}) - \mu^{\mathrm{GP}} \hat{N} \right) \Gamma_{j_2} \right) - TS(\Gamma_{j_2}) \geq \tfrac{1}{\beta} \tr\left[ \ln \left( 1 - e^{-\beta \left( h - \mu_0 \right)} \right) \right]  - \left( \mu^{\mathrm{GP}} - \mu_0 \right) N_{\mathrm{th}}^{\mathrm{gc}}  \label{eq:lowerbound15} \\
&\hspace{8cm}  - O( \beta^{-2} \omega^{-1} ) - O (\beta^{-1} \omega^{-1} R^{-2})  \,. \nonumber
\end{align}

It remains to replace the grand canonical free energy by the canonical one, and $N_{\mathrm{th}}^{\mathrm{gc}}$ by $N_{\mathrm{th}}$. 
Corollary~\ref{cor:freeenergy} tells us that the difference of the canonical and the grand canonical free energy is at most of order $- T \ln N $. Moreover, $| N_{\mathrm{th}} -  N_{\mathrm{th}}^{\mathrm{gc}} | \lesssim (\beta \omega)^{-3/2} (\ln N)^{1/2} + (\beta \omega)^{-1} \ln N$ by Corollary~\ref{lem:particlenumbers}. We therefore have
\begin{align}
\tr\left( \left( \text{d}\Upsilon(h_{\geq R}^{\mathrm{D}}) - \mu^{\mathrm{GP}} \hat{N} \right) \Gamma_{j_2} \right) - TS(\Gamma_{j_2}) &\geq  F_0(\beta,N,\omega) - \mu^{\mathrm{GP}} N_{\mathrm{th}} - O\left( \frac{1}{\beta^2 \omega} \right) - O\left( \frac 1{\beta \omega R^2}\right)  \nonumber  \\
& \quad  - O\left( \frac{(\ln N)^{1/2}}{\omega^{1/2} \beta^{3/2}} \right) - O\left( \frac{\ln N}{\beta} \right) \label{eq:lowerbound16} 
\end{align}
as a lower bound for the free energy in region $B(R)^{\mathrm{c}}$. Note that we have added the additional negative term $\mu_0 N_0^\textrm{gc}$ on the right hand side.

\subsection{The final estimate for the lower bound}
\label{sec:finalestimatelowerbound}
We combine the results from Eqs.~\eqref{eq:lowerbound7},~\eqref{eq:lowerbound9a} and~\eqref{eq:lowerbound16} and the fact that $E^{\mathrm{GP}}(N_0,a_N,\omega) - \mu^{\mathrm{GP}} N_0 = O(\omega N)$ to find
\begin{align}
\tr(H_N \Gamma_N) - TS(\Gamma_N) &\geq  F_0(\beta,N,\omega)  + E^{\mathrm{GP}}(N_0,a_N,\omega) - o(\omega N) \nonumber \\
&\quad  - O(\omega \epsilon N) - 3 N R^{-2}   - O\left( \frac{R^3}{\epsilon^{3/2} \beta^{5/2} } \right) \,.  \label{eq:lowerbound17} 
\end{align}
To obtain the result we assumed that $R\omega^{1/2}$ is large enough,  and used that $(\beta\omega)^{-1} \lesssim N^{1/3}$ to dominate some of the error terms by others.  The optimal choice of the parameters $R$ and $\epsilon$ turns out to be $R \sim \omega^{-1/2} N^{1/8} (\beta\omega)^{5/16}$ and $\epsilon \sim N^{-1/4} (\beta\omega)^{-5/8}$. The three terms on the second line of the right hand side of Eq.~\eqref{eq:lowerbound17} are thus bounded by $\omega N^{23/24}$.  In particular, 
\begin{equation}
F(\beta,N,\omega) \geq F_0(\beta,N,\omega) + E^{\mathrm{GP}}(N_0,a_N,\omega) - o( \omega N ) \,. 
\label{eq:lowerbound21}
\end{equation}
This completes the proof of the lower bound.

\section{Proof of the asymptotics of the one-particle density matrix}
\label{sec:densitymatrix}
In the following discussion, we assume that $N_0 \sim N$ 
and $N a_N \geq \epsilon \omega^{-1/2}$, i.e. $a_v \geq \epsilon$ for some $\epsilon>0$ holds. The last assumption is necessary keeping in mind the second remark in Section~\ref{sec:remarks}. The case where one of these conditions is not fulfilled will be taken care of at the end. 

Assume we are given a sequence of states $\Gamma_N$ with reduced one-particle density matrices $\gamma_N$ such that
\begin{equation}
\tr\left[ H_N \Gamma_N \right] - TS(\Gamma_N) = F_0(\beta,N,\omega) + E^{\mathrm{GP}}(N_0,a_N,\omega) + o(\omega N)
\label{eq:densitymatrix1}
\end{equation}
as $N$ tends to infinity. Choose two functions $j_1$ and $j_2$ as in the proof of the lower bound, satisfying $j_1(x)^2 + j_2(x)^2 = 1$ for all $x \in \mathbb{R}^3$, $j_1(x) = 1$ for $x \in B(R)$ and $j_1(x) = 0$ for $x \in B(2R)^c$, and also  $|\nabla j_1(x) |^2 + |\nabla j_2(x)|^2 \leq 3 R^{-2}$. Using  Eqs.~\eqref{eq:lowerbound7},~\eqref{eq:lowerbound7a},~\eqref{eq:lowerbound9a} and~\eqref{eq:lowerbound16} and the choice of parameters from Section~\ref{sec:finalestimatelowerbound}, we see that
\begin{equation}
o(\omega N)  \geq \tr \left[ \left( H_{\leq 2 R}^{\mathrm{D}} - \mu^{\mathrm{GP}} \hat{N} \right) \Gamma_{N,j_1} \right] - E^{\mathrm{GP}}(N_0,a_N,\omega) + \mu^{\mathrm{GP}}N_0
\label{eq:densitymatrix2a}
\end{equation}
holds. The operator $H_{\leq 2 R}^{\mathrm{D}}$ was defined in Eq.~\eqref{eq:lowerbound7b}, $\mu^{\mathrm{GP}}$ is given by \eqref{eq:lowerbound6b} and the state $\Gamma_{N,j_1}$ is related to $\Gamma_N$ in the way described in the proof of Lemma~\ref{lem:localization}. Moreover, Eqs.~\eqref{eq:lowerbound7} and~\eqref{eq:lowerbound9a} 
together with the fact that $\mu_0 N_0< 0$ tell us that
\begin{align}
o(\omega N)  &\geq \tr\left[ \left( \text{d}\Upsilon\left( h_{\geq R}^{\mathrm{D}} \right) - \mu^{\mathrm{GP}} \hat{N} \right) \Gamma_{N,j_2} \right] - T S\left( \Gamma_{N,j_2} \right) \label{eq:densitymatrix2b} \\
&\quad - \tfrac{1}{\beta}\tr\left[ \ln\left( 1-e^{-\beta(h-\mu_0)} \right) \right]  + \left( \mu^{\mathrm{GP}} - \mu_0 \right) N_{\mathrm{th}}\,, \nonumber
\end{align}
where also  $\Gamma_{N,j_2}$ is related to $\Gamma_N$ in the way described in the proof of Lemma~\ref{lem:localization}. Note that we have the grand canonical free energy on the right-hand side of Eq.~\eqref{eq:densitymatrix2b} instead of the canonical free energy, which is allowed by Corollary~\ref{cor:freeenergy} in the Appendix. 

Eqs.~\eqref{eq:densitymatrix2a} and~\eqref{eq:densitymatrix2b} will be used to deduce the desired bounds on $\gamma_N$. 
As a first step we will derive asymptotic expressions for the one-particle density matrices of $\Gamma_{N,j_1}$ and $\Gamma_{N,j_2}$, that is, for $j_1 \gamma_N j_1$ and $j_2 \gamma_N j_2$, respectively. Afterwards, we consider the \lq\lq off-diagonal\rq\rq\ contribution coming from $j_1 \gamma_N j_2$ and $j_2 \gamma_N j_1$ and show that their trace norm is of order $o(N)$. As one would expect, $j_1 \gamma_N j_1$ turns out to be close to $N_0 \vert \phi^{\mathrm{GP}}_{1,N_0 a_N} \rangle\langle \phi^{\mathrm{GP}}_{1,N_0 a_N} \vert$ and $j_2 \gamma_N j_2$ is close to $\gamma_{N,0} - N_0 \vert \varphi_0 \rangle\langle \varphi_0 \vert$. 
\subsection{The bound for $j_1 \gamma_N j_1$}
To derive a bound for $j_1 \gamma_N j_1$, we make use of existing results \cite{LiSei2002, LiSei2006, largecoulombsystems} on the convergence of the one-particle density matrix of approximate minimizers of the ground state energy functional to the projection onto the GP minimizer. The main difficulty to overcome is that the particle number of the state $\Gamma_{N,j_1}$ may fluctuate, that is, it is a state on the full Fock space.

As in Section~\ref{sec:lowerboundcondensate} we choose $0 < \delta \ll 1$ such that still $\delta N_0 \gg 1$. Let $P_M$ be the projection onto  the Fock space sector with $M$ particles. Keeping in mind the normalization  $\tr[ \Gamma_{N,j_1} ] = 1$, Eq.~\eqref{eq:densitymatrix2a} can be written as
\begin{equation}
o(\omega N) \geq \sum_{M=0}^{N} \tr\left[ \left\lbrace \left(H^{\mathrm{D}}_{\leq 2R} - \mu^{\mathrm{GP}} M \right) - \left( E^{\mathrm{GP}}(N_0,a_N,\omega) - \mu^{\mathrm{GP}} N_0 \right) \right\rbrace P_M \Gamma_{N,j_1} P_M \right] . \label{eq:aprioricondensate1}
\end{equation}   
Let us again distinguish two cases: For $0 \leq M \leq \delta N_0$ we drop $H^{\mathrm{D}}_{\leq 2R}$ to obtain a lower bound and use $\mu^{\mathrm{GP}} \delta N_0 \ll \omega N_0$ to show that there is a constant $C_1 > 0$ such that the expression in the curly brackets in the above equation is bounded from below by $C_1 \omega N_0$. In the case where $\delta N_0 < M \leq N$, we invoke \cite[Eq.~(101) and Lemma~4]{largecoulombsystems} to see that there exists a constant $C_2 >0$ such that the summand in Eq.~\eqref{eq:aprioricondensate1} is bounded from below by 
\begin{align}\nonumber
& \tr\left[ P_M \Gamma_{N,j_1} P_M \right] \Big\lbrace E^{\mathrm{GP}}(M,a_N,\omega) - E^{\mathrm{GP}}(N_0,a_N,\omega) - \mu^{\mathrm{GP}} (M-N_0)  \\
&\hspace{4.5cm}+ \frac{\omega C_2}M  \left\Vert \gamma_{P_M \Gamma_{N,j_1} P_M} - M P_{Ma_N}^{\mathrm{GP}} \right\Vert_1^2  - o(\omega N) \Big\rbrace.  \label{eq:aprioricondensate2}
\end{align}
Here $\gamma_{P_M \Gamma_{N,j_1} P_M}$ denotes the one-particle density matrix of the state $P_M \Gamma_{N,j_1} P_M (\tr[P_M \Gamma_{N,j_1} P_M])^{-1}$. It is normalized to have  $\tr[ \gamma_{P_M \Gamma_{j_1} P_M} ] = M$. By $P_{M a_N}^{\mathrm{GP}}$ we denote the projection onto the GP minimizer $\phi^{\mathrm{GP}}_{1,M a_N}$. 
This estimate is in fact uniform in $M a_N$ for $M$ in the range we consider.

We shall use the strict convexity of $M \mapsto E^{\mathrm{GP}}(M,a_N,\omega)$ in order to obtain a lower bound on the first three terms in the above parentheses that is strictly positive for $M \neq N_0$. 
Using that $\mu^{\mathrm{GP}}  = \tfrac{\text{d}}{\text{d}N} E^{\mathrm{GP}}(N_0,a_N,\omega)$, as well as the convexity of $\rho\mapsto \int |\nabla\sqrt{\rho}|^2$, we deduce that 
\begin{equation}
E^{\mathrm{GP}}(M,a_N,\omega) - E^{\mathrm{GP}}(N_0,a_N,\omega) - \mu^{\mathrm{GP}}(M - N_0) \geq  4\pi a_N \int_{\mathbb{R}^3} \left( \phi^\mathrm{GP}_{M,a_N}(x)^2 - \phi^\mathrm{GP}_{N_0,a_N}(x)^2 \right)^2 \text{d}x    \,.  \label{eq:aprioricondensate4}
\end{equation}
For a lower bound, we pick $s>0$ and $t \in \mathbb{R}$ and estimate
\begin{align}\nonumber
& \int_{\mathbb{R}^3} \left( \phi^\mathrm{GP}_{M,a_N}(x)^2 - \phi^\mathrm{GP}_{N_0,a_N}(x)^2 \right)^2 \text{d}x  \geq    \int_{|x|<s} \left( \phi^\mathrm{GP}_{M,a_N}(x)^2 - \phi^\mathrm{GP}_{N_0,a_N}(x)^2 \right)^2 \text{d}x \\ & \geq 2 t  \int_{|x|<s} \left( \phi^\mathrm{GP}_{M,a_N}(x)^2 - \phi^\mathrm{GP}_{N_0,a_N}(x)^2 \right) \text{d}x  - \frac {4\pi}{3} t^2 s^3\,.  \label{eq:aprioricondensate40}
\end{align}
Since
\begin{equation}\label{eq:aprioricondensate41}
  \int_{|x|\geq s}  \phi^\mathrm{GP}_{M,a_N}(x)^2  \text{d}x \leq   s^{-2} \int_{\mathbb{R}^3}  \phi^\mathrm{GP}_{M,a_N}(x)^2  |x|^2 \text{d}x \leq 4 s^{-2} \omega^{-2} \left( E^\mathrm{GP}(M,a_N,\omega) + \tfrac 32 M \omega \right) 
\end{equation}
this implies
\begin{align}\nonumber 
& \int_{\mathbb{R}^3} \left( \phi^\mathrm{GP}_{M,a_N}(x)^2 - \phi^\mathrm{GP}_{N_0,a_N}(x)^2 \right)^2 \text{d}x  \\ & \geq  2 t (M-N_0) - 8 |t| s^{-2} \omega^{-2} \left( E^\mathrm{GP}(M,a_N,\omega) + E^\mathrm{GP}(N_0,a_N,\omega) +  \tfrac 32 (M+N_0) \omega\right) - \frac {4\pi} 3 t^2 s^3\,.\label{eq:aprioricondensate42}
 \end{align}
After optimizing over $s$ and $t$, we thus obtain the lower bound 
\begin{equation}\label{eq:aprioricondensate43}
 \frac 1{14\pi} \left( \frac 37\right)^{5/2} \omega^3 \frac{ |M-N_0|^{7/2}}{\left( E^\mathrm{GP}(M,a_N,\omega) + E^\mathrm{GP}(N_0,a_N,\omega) +  \tfrac 32 (M+N_0) \omega \right)^{3/2}}\,.
\end{equation}
In particular, with \eqref{eq:aprioricondensate4} we conclude that 
\begin{equation}
E^{\mathrm{GP}}(M,a_N,\omega) - E^{\mathrm{GP}}(N_0,a_N,\omega) - \mu^{\mathrm{GP}}(M - N_0) 
\gtrsim  \omega^{3/2} a_N  \frac{ |M-N_0|^{7/2} }{(M+N_0)^{3/2}} \,.  \label{eq:aprioricondensate4a}
\end{equation}

Putting all this together, we obtain
\begin{align}
o(N) & \geq  N_0 \sum_{0 \leq M \leq \delta N_0} \tr\left[ P_M \Gamma_{N,j_1} P_M \right]   \label{eq:aprioricondensate3} \\
& \quad +  \sum_{\delta N_0 < M \leq N} \tr\left[ P_M \Gamma_{N,j_1} P_M \right] \left( \omega^{1/2} a_N  \frac{ |M-N_0|^{7/2} }{(M+N_0)^{3/2}}  + \frac{1}M  \left\Vert \gamma_{P_M \Gamma_{N,j_1} P_M} - M P^{\mathrm{GP}}_{Ma_N} \right\Vert_1^2 \right). \nonumber
\end{align}
Next, we write $j_1 \gamma_N j_1  = \sum_{M=0}^{N} \tr \left[ P_M \Gamma_{N,j_1} P_M \right] \gamma_{P_M \Gamma_{N,j_1} P_M}$ and estimate the trace norm difference of $j_1 \gamma_N j_1$ and $N_0 P^{\mathrm{GP}}_{N_0 a_N}$ in a first step by
\begin{align}
\left\Vert j_1 \gamma_N j_1 - N_0 P^{\mathrm{GP}}_{N_0 a_N} \right\Vert_1 &\leq N_0(1+ \delta) \sum_{0 \leq M \leq \delta N_0} \tr\left[ P_M \Gamma_{N,j_1} P_M \right]\label{eq:aprioricondensate5} \\
&\quad + \sum_{\delta N_0 < M \leq N} \tr\left[ P_M \Gamma_{N,j_1} P_M \right] \left\Vert  \gamma_{P_M \Gamma_{N,j_1} P_M} - M P^{\mathrm{GP}}_{M a_N} \right\Vert_1 \nonumber \\
&\quad + \left\Vert \sum_{\delta N_0 < M \leq N} \tr\left[ P_M \Gamma_{N,j_1} P_M \right] \left( M P^{\mathrm{GP}}_{M a_N} - N_0 P^{\mathrm{GP}}_{N_0 a_N} \right) \right\Vert_1. \nonumber
\end{align}
Eq.~\eqref{eq:aprioricondensate3} tells us that the first two terms on the right-hand side of the above equation are of order $o(N)$. For the term in the last line, we insert $N_0 P^{\mathrm{GP}}_{M a_N} - N_0 P^{\mathrm{GP}}_{M a_N}$ in the obvious place to see that it is bounded from above by 
\begin{align}
&\left\Vert \sum_{\delta N_0 < M \leq N} \tr\left[ P_M \Gamma_{N,j_1} P_M \right] \left( M P^{\mathrm{GP}}_{M a_N} - N_0 P^{\mathrm{GP}}_{N_0 a_N} \right) \right\Vert_1 \label{eq:aprioricondensate6} \\
&\hspace{4cm} \leq \sum_{\delta N_0 < M \leq N} \tr\left[ P_M \Gamma_{N,j_1} P_M \right] \left\lbrace \vert M- N_0 \vert +  N_0 \left\Vert P^{\mathrm{GP}}_{M a_N} - P^{\mathrm{GP}}_{N_0 a_N} \right\Vert_1 \right\rbrace. \nonumber
\end{align} 
To bound the right-hand side of Eq.~\eqref{eq:aprioricondensate6}, we choose $0< \kappa < 1$ and split the sum into two parts, one where $| M-N_0 | \leq \kappa N_0$ and another one where $| M - N_0 | > \kappa N_0$. We claim that there exists a function $f:\mathbb{R}_+ \to \mathbb{R}_+$ with $f(x) \to 0$ for $x \to 0$ such that the first part of the sum is bounded by $(\kappa + f(\kappa)) N_0$. This follows from the continuity of  the map $M \mapsto \phi^{\mathrm{GP}}_{1,M a_N}$ in $L^2(\mathbb{R}^3)$, which can easily be deduced from the uniqueness of the minimizer of the GP functional. To estimate the contribution to the sum of the terms with $| M - N_0 | > \kappa N_0$, we write
\begin{align}\nonumber
&\sum_{ \substack{\delta N_0 < M \leq N : \\ |M-N_0| > \kappa N_0}}  \tr\left[ P_M \Gamma_{N,j_1} P_M \right]  \left\lbrace \vert M- N_0 \vert +  N_0 \left\Vert P^{\mathrm{GP}}_{M a_N} - P^{\mathrm{GP}}_{N_0 a_N} \right\Vert_1 \right\rbrace  \\
&\hspace{1cm}\leq \sum_{ \substack{\delta N_0 < M \leq N : \\ |M-N_0| > \kappa N_0}}  \tr\left[ P_M \Gamma_{N,j_1} P_M \right] \left\{ \frac{ \vert M- N_0 \vert^{7/2} }{\kappa^{5/2} N_0^{5/2}} + 2 \frac{ \vert M- N_0 \vert^{7/2} }{\kappa^{7/2} N_0^{5/2}} \right\}\,. \label{eq:aprioricondensate6c}
\end{align}
Together with Eq.~\eqref{eq:aprioricondensate3}, this implies for $\kappa \lesssim 1$ that
\begin{equation}
\sum_{ \substack{\delta N_0 < M \leq N : \\ |M-N_0| > \kappa N_0}}  \tr\left[ P_M \Gamma_{N,j_1} P_M \right]  \left\lbrace \vert M- N_0 \vert +  N_0 \left\Vert P^{\mathrm{GP}}_{M a_N} - P^{\mathrm{GP}}_{N_0 a_N} \right\Vert \right\rbrace \label{eq:aprioricondensate6d} \\
 \leq \frac{o(N)}{\omega^{1/2} \kappa^{7/2} a_N N_0 } \,.
\end{equation}
Choosing $\kappa \ll 1$ and $\delta$ appropriately, we see that the right-hand side of Eq.~\eqref{eq:aprioricondensate6d} as well as the part of the sum in Eq.~\eqref{eq:aprioricondensate6} where $\vert M - N_0 \vert \leq \kappa N_0$ is of the order $o(N)$. Together with Eq.~\eqref{eq:aprioricondensate5}, this proves
\begin{equation}
\left\Vert j_1 \gamma_N j_1 - N_0 P^{\mathrm{GP}}_{N_0 a_N} \right\Vert_1 \leq o(N).
\label{eq:aprioricondensate6e}
\end{equation}

\subsection{The bound for $j_2 \gamma_N j_2$}
\label{sec:aprioriboundj2gammaNj2}
The main ingredient to derive a bound for $j_2 \gamma_N j_2$ is a {\em novel coercivity estimate} for the bosonic relative entropy that we prove in Lemma~\ref{lem:relativeentropy} below. Using this estimate, we shall show that $j_2 \gamma_N j_2$ is close to $\gamma_0^{\mathrm{gc}} - N_0^\mathrm{gc} \vert \varphi_0 \rangle\langle \varphi_0 \vert$ in trace norm, where 
\begin{equation}
\gamma_0^{\mathrm{gc}} = \frac{1}{e^{\beta (h-\mu_0)}-1}
\label{eq:aprioricondensate6f}
\end{equation}
denotes the grand canonical analogue of $\gamma_{N,0}$. This part of the proof is motivated by a related analysis for the one-particle density matrix of a dilute Fermi gas in \cite{RobertFermigas}. 

For positive trace-class operators $\gamma$ define 
\begin{equation}
s(\gamma)  = - \tr \sigma(\gamma), \quad \ \text{with} \quad \sigma(x) = x \ln(x) - (1+x) \ln(1+x).
\label{eq:densitymatrix5b}
\end{equation}
 We have \cite[2.5.14.5]{Thirring_4} 
\begin{equation}
S(\Gamma_{j_2}) \leq s(j_2 \gamma_N j_2)\,.
\label{eq:densitymatrix3a}
\end{equation}
Since 
\begin{equation}
\tr\left[ \left( \text{d}\Upsilon\left( h_{\geq R}^{\mathrm{D}} \right) - \mu^{\mathrm{GP}} \hat{N} \right) \Gamma_{N,j_2} \right]  = \tr\left[ \left( h - \mu^{\mathrm{GP}}\right) j_2 \gamma_N j_2 \right] 
\end{equation}
we conclude that
\begin{align}
&\tr\left[ \left( \text{d}\Upsilon\left( h_{\geq R}^{\mathrm{D}} \right) - \mu^{\mathrm{GP}} \hat{N} \right) \Gamma_{N,j_2} \right] - T S\left( \Gamma_{N,j_2} \right) \label{eq:densitymatrix3b} \\
&\hspace{5cm} \geq \tr\left[ \left( h - \mu^{\mathrm{GP}}\right) j_2 \gamma_N j_2 \right] - T s\left( j_2 \gamma_N j_2 \right) \nonumber
\end{align}
holds. Let us define 
\begin{equation}
\nu(x)=\max\lbrace x, \mu \rbrace \quad \text{ with } \quad c \mu^{\mathrm{GP}} < \mu \lesssim \omega \text{ for some } c>1.
\label{eq:densitymatrixdefofg}
\end{equation}
For what follows, it will be convenient to replace the Hamiltonian $h$ by $\nu(h)$ on the right-hand side of Eq.~\eqref{eq:densitymatrix3b}. We  write $h=\sum_{\alpha=0}^{\infty} e_{\alpha}(h) \vert \varphi_{\alpha} \rangle\langle \varphi_{\alpha} \vert$ and choose $\alpha_0$ to be the largest integer such that $e_{\alpha_0}(h) < \mu$. Using  $\alpha_0 = O(1)$ and $\mu \lesssim \omega$ we can estimate
\begin{equation}
\tr[ (\nu(h) - h) j_2 \gamma_N j_2 ] \leq \mu \sum_{\alpha=0}^{\alpha_0} \langle \varphi_{\alpha},  j_2 \gamma_N j_2 \varphi_{\alpha} \rangle \lesssim \omega N \sum_{\alpha=0}^{\alpha_0} \left\Vert j_2 \varphi_{\alpha} \right\Vert^2 \lesssim \omega N e^{-C \omega R^2}\,,
\label{eq:densitymatrix3c}
\end{equation}
where $C>0$ is some appropriately chosen constant. To obtain the last inequality on the right-hand side of Eq.~\eqref{eq:densitymatrix3c}, we used the decay of the eigenfunctions of $h$ and the fact that the support of $j_2$ is given by $B(R)^{\mathrm{c}}$. Together with Eq.~\eqref{eq:densitymatrix2b}, Eq.~\eqref{eq:densitymatrix3b} and $\omega^{1/2} R \gg 1$, this implies
\begin{align}
o(\omega N)  &\geq  \tr\left[ \left( \nu( h ) - \mu^{\mathrm{GP}}\right) j_2 \gamma_N j_2 \right] - T s\left( j_2 \gamma_N j_2 \right) \label{eq:densitymatrix3d} \\
& \quad - \tfrac{1}{\beta}\tr\left[ \ln\left( 1-e^{-\beta(h-\mu_0)} \right) \right]  + \left( \mu^{\mathrm{GP}} - \mu_0 \right) N_{\mathrm{th}}. \nonumber
\end{align}
To be able to  compare the expressions in the first and in the second line on the right-hand side of Eq.~\eqref{eq:densitymatrix3d}, we will replace $h$ by $\nu(h)$ and afterwards $\mu_0$ by $\mu^{\mathrm{GP}}$ in the first term in the second line. In fact, since $\nu(h)\geq h$, 
\begin{equation}
 \tr \left[ \ln\left( 1 - e^{-\beta \left( h - \mu_0 \right)} \right)  \right] \leq \tr \left[  \ln\left( 1 - e^{-\beta \left( \nu(h) - \mu_0 \right)} \right)  \right] \label{eq:densitymatrix3e_o}
\end{equation}
Moreover, using that $\beta \mu_0 = O(N_0^{-1})$ and thus $\vert \mu_0 - \mu^{\mathrm{GP}} \vert \lesssim \omega$, we can proceed similarly to Eq.~\eqref{eq:lowerbound12} to obtain
\begin{equation}
 \frac 1 \beta \tr \left[ \ln\left( 1 - e^{-\beta \left( \nu(h) - \mu_0 \right)} \right) - \ln\left( 1 - e^{-\beta \left( \nu( h) - \mu^\mathrm{GP} \right)} \right) \right] \leq \left( \mu_0 - \mu^{\mathrm{GP}} \right) N_{\mathrm{th}}^{\mathrm{gc}} + O\left( \frac{1}{\beta^2 \omega} \right).
\label{eq:densitymatrix3e}
\end{equation}
Corollary~\ref{lem:particlenumbers} tells us that $\vert N_{\mathrm{th}} - N_{\mathrm{th}}^{\mathrm{gc}} \vert \lesssim (\beta \omega)^{-3/2} (\ln N)^{1/2} + (\beta \omega)^{-1} \ln N$. Together with Eqs.~\eqref{eq:densitymatrix3d}--\eqref{eq:densitymatrix3e}, this shows that
\begin{equation}
o(\omega N) \geq \tr[ (\nu(h)-\mu^{\mathrm{GP}}) j_2 \gamma_N j_2 ] - Ts(j_2 \gamma_N j_2) - \tfrac{1}{\beta} \tr\left[ \ln\left( 1-e^{-\beta( \nu(h)-\mu^{\mathrm{GP}})} \right) \right] \label{eq:densitymatrix5}
\end{equation}
holds.

The right-hand side of Eq.~\eqref{eq:densitymatrix5} can be written in terms of the relative entropy, which is defined as follows. 
For two nonnegative operators $\gamma,\gamma_0$ with finite trace, the bosonic relative entropy of $\gamma$ with respect to $\gamma_0$ is given by
\begin{equation}
\mathcal{S}(\gamma,\gamma_0) = \tr \left( \sigma(\gamma) - \sigma(\gamma_0) - \sigma'(\gamma_0)(\gamma-\gamma_0)  \right).
\label{eq:relativeentropy1}
\end{equation}
with $\sigma$ defined in Eq.~\eqref{eq:densitymatrix5b}. 
For matrices it is well-defined as long as $\gamma_0>0$. In case $\gamma_0$ has a nontrivial kernel and $\gamma - \gamma_0 \neq 0$ on $\text{ker}(\gamma_0)$, one defines $\mathcal{S}(\gamma,\gamma_0) = \infty$. If $\gamma - \gamma_0 = 0$ on $\text{ker}(\gamma_0)$ the trace is by definition taken on the complement of that subspace. In the case  of trace-class operators with $\gamma_0$ strictly positive, one can equivalently define
\begin{equation}\label{def:cS}
\mathcal{S}(\gamma,\gamma_0)  = \sum_{i,j}  \left| \left\langle \psi_i | \varphi_j \right\rangle \right|^2 S(\gamma_i,\nu_j)
\end{equation}
where $S(x,y) = \sigma(x) - \sigma(y) - \sigma'(y)(x-y) \geq 0$, and $\{ \lambda_i, \psi_i\}$ respectively $\{\nu_j,\varphi_j\}$ are the eigenvalues and eigenfunctions of $\gamma$ and $\gamma_0$, respectively. The definition \eqref{def:cS} will be most convenient for our purpose. We note that $\mathcal{S}$ can be defined more generally even for non-compact operators by  approximating the operators by matrices and  taking limits, see  \cite{LewinSabin2014, DHS2015}. 

Denote by
\begin{equation}
\gamma_{\nu,0} = \left(e^{\beta\left(\nu(h)-\mu^{\mathrm{GP}} \right)}-1 \right)^{-1} 
\label{eq:densitymatrix3f}
\end{equation}
the one-particle density matrix of the Gibbs state related to the grand canonical potential $\tfrac{1}{\beta} \tr[ \ln( 1-e^{-\beta( \nu(h)-\mu^{\mathrm{GP}})} ) ]$. A simple computation shows that Eq.~\eqref{eq:densitymatrix5} can equivalently be written as
\begin{equation}
o(\omega N) \geq \tfrac{1}{\beta} \mathcal{S}\left( j_2 \gamma_N j_2,  \gamma_{\nu,0} \right).
\label{eq:densitymatrix3g}
\end{equation}
In order to get quantitative information out of Eq.~\eqref{eq:densitymatrix3g}, we need the following Lemma:
\begin{lemma}
\label{lem:relativeentropy}
There exists a constant $C>0$ such that for any two nonnegative trace-class operators $\gamma,\gamma_0$ we have
\begin{equation}
\mathcal{S}(\gamma,\gamma_0) \geq C \tr\left( \frac{1}{1+\gamma_0} \left( \frac{\gamma}{\sqrt{1+\gamma}} - \frac{\gamma_0}{\sqrt{1+\gamma_0}} \right)^2  \right)
\label{eq:relativeentropy2}
\end{equation}
and
\begin{equation}
\mathcal{S}(\gamma,\gamma_0) \geq C \frac{\left[ \tr\left(\gamma - \gamma_0 \right) \right]^2}{\tr\left( \left( \gamma+\gamma_0 \right)\left(1+\gamma_0 \right) \right)}.
\label{eq:relativeentropy3}
\end{equation}
\end{lemma}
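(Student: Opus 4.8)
\emph{Strategy.} The plan is to reduce both inequalities to pointwise (scalar) bounds on the function
$S(x,y) = \sigma(x)-\sigma(y)-\sigma'(y)(x-y) = x\ln(x/y) + (1+x)\ln\frac{1+y}{1+x} \geq 0$
(the closed form follows from the definition of $\sigma$ by a direct computation), using the representation~\eqref{def:cS}. If $\gamma_0$ has a non-trivial kernel on which $\gamma\neq\gamma_0$, then $\mathcal{S}(\gamma,\gamma_0)=\infty$ and there is nothing to prove; otherwise $\gamma$ and $\gamma_0$ both vanish there and that subspace contributes nothing to either side, so I may assume $\gamma_0>0$. Diagonalizing $\gamma=\sum_i\lambda_i|\psi_i\rangle\langle\psi_i|$ and $\gamma_0=\sum_j\nu_j|\varphi_j\rangle\langle\varphi_j|$ with orthonormal bases and setting $w_{ij}=|\langle\psi_i|\varphi_j\rangle|^2$ (so $\sum_i w_{ij}=\sum_j w_{ij}=1$), the representation~\eqref{def:cS} reads $\mathcal{S}(\gamma,\gamma_0)=\sum_{i,j}w_{ij}\,S(\lambda_i,\nu_j)$.

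\emph{Reduction of the two bounds.} Writing $g(x)=x/\sqrt{1+x}$, I would first establish the identity
\[
\tr\!\left(\frac{1}{1+\gamma_0}\Big(\frac{\gamma}{\sqrt{1+\gamma}}-\frac{\gamma_0}{\sqrt{1+\gamma_0}}\Big)^{\!2}\right) = \sum_{i,j} w_{ij}\,\frac{\big(g(\lambda_i)-g(\nu_j)\big)^2}{1+\nu_j},
\]
which follows by expanding $\big(g(\gamma)-g(\gamma_0)\big)\varphi_j = \sum_i\big(g(\lambda_i)-g(\nu_j)\big)\langle\psi_i|\varphi_j\rangle\psi_i$ and using that $\{\psi_i\}$ is orthonormal. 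Hence~\eqref{eq:relativeentropy2} will follow once the scalar bound $S(x,y)\geq C\,(g(x)-g(y))^2/(1+y)$ is shown. For~\eqref{eq:relativeentropy3} I would similarly expand $\tr(\gamma-\gamma_0)=\sum_{ij}w_{ij}(\lambda_i-\nu_j)$ and $\tr\big((\gamma+\gamma_0)(1+\gamma_0)\big)=\sum_{ij}w_{ij}(\lambda_i+\nu_j)(1+\nu_j)$, and then apply Cauchy--Schwarz with weights $w_{ij}$:
\[
\big[\tr(\gamma-\gamma_0)\big]^2 \leq \Big(\sum_{ij} w_{ij}\,\frac{(\lambda_i-\nu_j)^2}{(\lambda_i+\nu_j)(1+\nu_j)}\Big)\,\tr\big((\gamma+\gamma_0)(1+\gamma_0)\big),
\]
so that~\eqref{eq:relativeentropy3} reduces to the scalar bound $S(x,y)\geq C\,(x-y)^2/\big((x+y)(1+y)\big)$ (if the last trace is infinite the estimate is trivial).

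\emph{The scalar inequalities.} It remains to prove, for a universal $C>0$ and all $x,y\geq 0$,
\[
S(x,y)\geq \frac{C}{1+y}\big(g(x)-g(y)\big)^2 \qquad\text{and}\qquad S(x,y)\geq \frac{C\,(x-y)^2}{(x+y)(1+y)}.
\]
I would prove both by elementary calculus, splitting into the regimes where $x$ is comparable to $y$, where $x\ll y$, and where $x\gg y$, and using the integral representation $S(x,y)=(x-y)^2\int_0^1(1-t)\,u_t^{-1}(1+u_t)^{-1}\,\mathrm{d}t$ with $u_t=y+t(x-y)$, together with the closed form of $S$. For the second inequality, since $u\mapsto u(1+u)$ is increasing one bounds the integrand from below on a sub-interval of $[0,1]$ whose length is adapted to $x/y$. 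For the first inequality one uses in addition the elementary bounds $g'(u)^2\leq(1+u)^{-1}$, $\big(g(x)-g(y)\big)^2\leq g(\max\{x,y\})^2=\max\{x,y\}^2/(1+\max\{x,y\})$, and $(1+y)\ln(1+1/y)\geq1$. Summing the scalar bounds against the weights $w_{ij}$ and combining with the two reductions then yields Lemma~\ref{lem:relativeentropy}.

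\emph{The main obstacle.} The delicate point is the first scalar inequality in the regime $x\gg y$: because $g(x)$ grows like $\sqrt x$ for large $x$ but like $x$ for small $x$, the prefactor $(1+y)^{-1}$ on the right must be balanced against the $x$-weight carried by $S$, and the naive estimate $\big(g(x)-g(y)\big)^2\leq|x-y|\,\big|\ln\tfrac{1+x}{1+y}\big|$ loses a logarithmic factor as $x/y\to\infty$. This is what forces the case distinction and the use of the sharper bound $\big(g(x)-g(y)\big)^2\leq g(\max\{x,y\})^2$ in that regime; everything else is routine.
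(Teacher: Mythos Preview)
Your proposal is correct and follows essentially the same architecture as the paper: both reduce to the scalar inequality $S(x,y)\geq C(x-y)^2/\big((x+y)(1+y)\big)$ via the representation~\eqref{def:cS}, and both lift it to the operator level termwise in the weights $w_{ij}$. Your Cauchy--Schwarz derivation of~\eqref{eq:relativeentropy3} is equivalent to the paper's use of the sup-representation $x^2/y=\sup_\lambda\{2\lambda x-\lambda^2 y\}$ combined with Klein's inequality; the paper's scalar case analysis (four regimes, using $S(x,y)=\int_y^x(x-s)/(s(s+1))\,\mathrm{d}s$) is exactly the kind of argument you sketch.

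The one substantive difference concerns~\eqref{eq:relativeentropy2}. You propose to prove the scalar bound $S(x,y)\geq C\,(g(x)-g(y))^2/(1+y)$ directly and correctly identify the regime $x\gg y$ as the delicate point. The paper avoids this obstacle entirely by \emph{deriving} the first scalar inequality from the second: from $g'(x)\leq x^{-1/2}$ one gets $(g(x)-g(y))^2\leq 4(\sqrt{x}-\sqrt{y})^2$, and then $(\sqrt{z}-1)^2\leq(z-1)^2/(z+1)$ gives $(\sqrt{x}-\sqrt{y})^2\leq(x-y)^2/(x+y)$, so that $(g(x)-g(y))^2/(1+y)\leq 4(x-y)^2/\big((x+y)(1+y)\big)$. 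This two-line reduction replaces the case distinction you flag as the main difficulty, and is the cleanest way to close your argument.
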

\begin{remark}
With Eqs.~\eqref{eq:relativeentropy2} and~\eqref{eq:relativeentropy3} one can show that for fixed $\gamma_0$ convergence of the relative entropy to zero implies convergence of $\gamma$ to $\gamma_0$ in trace norm. A quantitative estimate on their trace norm distance will in fact be given below, see Eq.~\eqref{eq:densitymatrix9} et seq.  
\end{remark}
\begin{proof}
We start with the proof of the second inequality. With $S(x,y) = \sigma(x) - \sigma(y) - \sigma'(y)(x-y)$  we wish to show that
\begin{equation}
S(x,y) \geq C \frac{(x-y)^2}{(x+y)(1+y)}
\label{eq:relativeentropy4}
\end{equation}
holds for all numbers $x,y \in \mathbb{R}_+$. To that end, we will consider several cases and start with the one where $x \geq y$ and $y \geq 1$. We claim that
\begin{equation}
S(x,y) = \int_y^x \frac{(x-s)}{s(s+1)} \text{d}s \geq \frac{1}{2} \left( -1 + \frac{x}{y} - \ln\left( \frac{x}{y} \right) \right)
\label{eq:relativeentropy5}
\end{equation}
which follows from $s(s+1) \leq 2 s^{2}$ for $s \geq 1$. One also checks that $z \geq 1$ implies $-1+z-\ln(z) \geq (8/9) (z-1)^2/(z+1)$. Together with Eq.~\eqref{eq:relativeentropy5}, this gives
\begin{equation}
S(x,y) \geq \frac{4}{9} \frac{(x-y)^2}{(x+y)(1+y)}.
\label{eq:relativeentropy6}
\end{equation}
Next, consider the case $x \geq 2$ and $y \leq 1$. Here
\begin{equation}
\int_y^x \frac{(x-s)}{s(s+1)} \text{d}s \geq \frac{1}{2} \int_{1}^{x} \frac{(x-s)}{s^2} \text{d}s = \frac{1}{2} \left( -1 + x - \ln(x) \right).
\label{eq:relativeentropy7}
\end{equation}
We use the same inequality as above to obtain a lower bound for the right-hand of Eq.~\eqref{eq:relativeentropy7} and find
\begin{equation}
S(x,y) \geq \frac{4}{9} \frac{(x-1)^2}{1+x} \geq \frac{1}{9} \frac{(x-y)^2}{1+x} \geq \frac{2}{27} \frac{(x-y)^2}{(x+y)(1+y)}.  
\label{eq:relativeentropy8}
\end{equation}

For $x \geq y$ and $x \leq 2$, we again consider the integral representation of $S(x,y)$ from above and replace $s$ by $x+y$ in the denominator of the function under the integral sign. This yields a lower bound of the form
\begin{equation}
S(x,y) \geq \frac{1}{2} \frac{(x-y)^2}{(x+y)(1+x+y)} \geq \frac{1}{6} \frac{(x-y)^2}{(x+y)(1+y)}.
\label{eq:relativeentropy9}
\end{equation}
It remains to consider the case $y \geq x$. Here we argue in the same way as in the previous step:
\begin{equation}
S(x,y) \geq \frac{1}{2} \frac{(x-y)^2}{(x+y)(1+x+y)} \geq \frac{1}{4} \frac{(x-y)^2}{(x+y)(1+y)}
\label{eq:relativeentropy10}
\end{equation}
This proves the claimed bound \eqref{eq:relativeentropy4} for $S(x,y)$ with $C\geq 2/27$. 

In order to deduce Eq.~\eqref{eq:relativeentropy3} from Eq.~\eqref{eq:relativeentropy4}, we follow the argument in the proof of \cite[Lemma~7]{RobertFermigas}. For $(x,y) \in \mathbb{R}^2$ with $y > 0$, the map $(x,y) \mapsto \tfrac{x^2}{y}$ is jointly convex. In fact, $\tfrac{x^2}{y} = \sup_{\lambda \in \mathbb{R}} \left\lbrace 2 \lambda x - \lambda^2 y \right\rbrace$. Using this representation and Klein's inequality (see e.g.~\cite[2.1.4(5)]{Thirring_4}), which simply amounts to plugging the lower bound for $S$ into \eqref{def:cS}, we find that
\begin{equation}
\mathcal{S}\left( \gamma, \gamma_0 \right) \geq \frac{4}{27} \lambda  \tr \left[ \gamma - \gamma_0 \right] - \frac{2}{27} \lambda^2 \tr \left[ (\gamma - \gamma_0)(1+\gamma_0) \right]
\label{eq:relativeentropy11}
\end{equation} 
which holds for all $\lambda \in \mathbb{R}$. By optimizing over $\lambda$ we obtain Eq.~\eqref{eq:relativeentropy3}.

It remains to show that Eq.~\eqref{eq:relativeentropy2} holds. Let $f(x) = x/\sqrt{1+x}$ and note that $f'(x) \leq x^{-1/2}$ implies
\begin{equation}
\left( f(x) - f(y) \right)^2 \leq 4 \left( \sqrt{x} - \sqrt{y} \right)^2.
\label{eq:relativeentropy12}
\end{equation}
Since $\left( \sqrt{z} - 1 \right)^2 \leq \tfrac{(z-1)^2}{z+1}$ for $z \geq 0$ we conclude that
\begin{equation}
\left( \frac{x}{\sqrt{1+x}} - \frac{y}{\sqrt{1+y}} \right)^2 \leq 4 \frac{(x-y)^2}{x+y}.
\label{eq:relativeentropy13}
\end{equation}
Eq.~\eqref{eq:relativeentropy13} together with Eq.~\eqref{eq:relativeentropy4} and Klein's inequality prove the claim. 
\end{proof}
As in the above Lemma, let $\gamma$ and $\gamma_0$ be nonnegative trace-class operators and choose $\lambda > 0$. We have
\begin{equation}
\left\Vert \gamma - \gamma_0 \right\Vert_1 \leq \tr\left[ \mathds{1}(\gamma > \lambda) \gamma \right] + \left\Vert \mathds{1}(\gamma \leq \lambda) \gamma - \gamma_0 \right\Vert_1. 
\label{eq:densitymatrix6}
\end{equation}
Let $P$ be some orthogonal projection, $Q = 1-P$ and denote 
\beq
\tilde{\gamma}=\mathds{1}(\gamma \leq \lambda) \gamma \,.
\eeq
The following argumentation follows closely the related argument in \cite[Eqs.~(4.33)--(4.35)]{RobertFermigas}. We estimate
\begin{align}
\left\Vert \tilde{\gamma} - \gamma_0 \right\Vert_1 &\leq \left\Vert \left( \tilde{\gamma} - \gamma_0 \right) P \right\Vert_1 + \left\Vert \tilde{\gamma} Q \right\Vert_1 + \left\Vert \gamma_0 Q \right\Vert_1 \label{eq:densitymatrix7} \\
&\leq \left\Vert \left( \tilde{\gamma} - \gamma_0 \right) P \right\Vert_1 + \left\Vert \tilde{\gamma} \right\Vert_1^{1/2} \left\Vert Q \tilde{\gamma} Q \right\Vert_1^{1/2} + \left\Vert \gamma_0 \right\Vert_1^{1/2} \left\Vert Q \gamma_0 Q \right\Vert_1^{1/2}. \nonumber
\end{align}
For the trace norm of $\tilde{\gamma}$, we have $\left\Vert \tilde{\gamma} \right\Vert_1 \leq \left\Vert \gamma_0 \right\Vert_1 + \left| \tr\left[ \tilde{\gamma} - \gamma_0 \right] \right|$. Together with
\begin{align}
\left\Vert Q \tilde{\gamma} Q \right\Vert_1 &= \tr\left( \tilde{\gamma} Q \right) = \tr\left[ \gamma_0 Q + (\tilde{\gamma}-\gamma_0) - (\tilde{\gamma} - \gamma_0) P \right] \label{eq:densitymatrix8} \\
&\leq \left\Vert Q \gamma_0 Q \right\Vert_1 + \left| \tr[\tilde{\gamma} - \gamma_0 ] \right| + \left\Vert (\tilde{\gamma}-\gamma_0) P \right\Vert_1, \nonumber
\end{align}
this implies
\begin{align}
\left\Vert \gamma - \gamma_0 \right\Vert_1 &\leq \tr\left[ \mathds{1}(\gamma > \lambda) \gamma \right]  + \left\Vert \left( \tilde{\gamma} - \gamma_0 \right) P \right\Vert_1  \label{eq:densitymatrix9} \\
& \quad + 2 \left( \left\Vert \gamma_0 \right\Vert_1 + \left| \tr\left[ \tilde{\gamma} - \gamma_0 \right] \right| \right)^{1/2} \left( \left\Vert Q \gamma_0 Q \right\Vert_1 + \left| \tr[\tilde{\gamma} - \gamma_0 ] \right| + \left\Vert (\tilde{\gamma}-\gamma_0) P \right\Vert_1 \right)^{1/2}. \nonumber 
\end{align}
We shall apply this to   $\gamma = j_2 \gamma_N j_2$ and $\gamma_0=\gamma_{\nu,0}$ in Eq.~\eqref{eq:densitymatrix3f}, with $\lambda = 4 \Vert \gamma_{\nu,0} \Vert$. Let us note that $\Vert \gamma_{\nu,0} \Vert = O((\beta \omega)^{-1})$ which follows from the explicit form of the eigenvalues of $h$ and the definition of $\nu$, see Eq.~\eqref{eq:densitymatrixdefofg}. We also define 
\beq 
f(x) = \frac x {\sqrt{1+x}} \,. 
\eeq
To keep the notation simple, we will still write $\gamma$ instead of $j_2 \gamma_N j_2$ in the following discussion.

Let us start with the first term on the right-hand side of $\eqref{eq:densitymatrix9}$. We want to show that it is of the order $o(N)$. To do so, we bound
\begin{equation}\label{eq:densitymatrix10}
\tr[ \gamma \mathds{1}(\gamma > \lambda) ] \leq \frac{1+\lambda}{\lambda} \tr\left[ f(\gamma)^2 \mathds{1}(\gamma > \lambda) \right]  =\frac{1+\lambda}{\lambda}   \sum_{e_i(\gamma)>\lambda}   f(e_i(\gamma))^2
\end{equation}
where $e_i(\gamma)$ are the eigenvalues of $\gamma$. Since $f(t)\geq 2 f(t/4)$, we  have $f(e_i(\gamma)) \leq 2( f(e_i(\gamma)) - f(\lambda/4))$ for $e_i(\gamma)>\lambda$. Since $f(\gamma_{\nu,0}) \leq f(\lambda/4)$,  we further have  (denoting by $\psi_i$  the eigenfunctions of $\gamma$)
 \begin{align} 
\tr[ \gamma \mathds{1}(\gamma > \lambda) ] &\leq  4 \frac{1+\lambda}{\lambda}  \sum_{\gamma_i>\lambda}   \langle \psi_i | (f(\gamma) - f(\gamma_{\nu,0})) | \psi_i \rangle^2  \label{eq:densitymatrix101} \\ 
& \leq   4 \frac{1+\lambda}{\lambda}  \sum_{\gamma_i>\lambda}   \langle \psi_i | (f(\gamma) - f(\gamma_{\nu,0}))^2 | \psi_i \rangle  = 4 \frac{1+\lambda}{\lambda} \tr\left[ \mathds{1}(\gamma > \lambda) \left( f(\gamma) - f(\gamma_{\nu,0}) \right)^2 \right]. \nonumber
\end{align}
We know from Lemma~\ref{lem:relativeentropy} and Eq.~\eqref{eq:densitymatrix3g} that
\begin{equation}
\tr\left[ \frac{1}{1+ \gamma_{\nu,0} } \left( f(\gamma) - f(\gamma_{\nu,0}) \right)^2 \right] \leq o(\beta \omega N). \label{eq:densitymatrix11}
\end{equation}
Together with $\mathds{1}(\gamma > \lambda) \leq 1$ and $\Vert \gamma_{\nu,0} \Vert \lesssim (\beta \omega)^{-1}$, Eqs.~\eqref{eq:densitymatrix10}--\eqref{eq:densitymatrix11} imply that
\begin{equation}
\tr[ \gamma \mathds{1}(\gamma > \lambda) ] \leq o(N) \,. \label{eq:densitymatrix12}
\end{equation}

Next, consider the second term on the right-hand side of Eq.~\eqref{eq:densitymatrix9}. Using the Cauchy-Schwarz inequality and the cyclicity of the trace, we find
\begin{align}
\left\Vert ( \tilde{\gamma} - \gamma_{\nu,0} ) P \right\Vert_1 \leq \left( \tr\left[ \frac{1}{1+\gamma_{\nu,0}}(\tilde{\gamma} - \gamma_{\nu,0})^2 \right] \right)^{1/2} \left( \tr\left[ (1+\gamma_{\nu,0}) P \right] \right)^{1/2}. \label{eq:densitymatrix13}
\end{align}
We write $f(x) - f(y) =(x-y) \int_0^1 f'(y+t(x-y)) \text{d}t $ and use the lower bound $f'(x) \geq \frac{1}{2} (1+x)^{-1/2}$ to show that
\begin{equation}
\left( f(x) - f(y) \right)^2 \geq \frac{1}{4 (1+\max\lbrace x,y\rbrace)} (x-y)^2 \geq  \frac{1}{4 (1+\lambda)} (x-y)^2
\label{eq:densitymatrix13b}
\end{equation} 
for $0\leq x,y\leq \lambda$. The fact that $\max \lbrace \| \gamma_{\nu,0}\| , \| \tilde{\gamma}\| \rbrace \leq \lambda$ and an application of Klein's inequality therefore gives
\begin{equation}
\tr\left[ \frac{1}{1+\gamma_{\nu,0}}(\tilde{\gamma} - \gamma_{\nu,0})^2 \right] \leq 4 \tr \left[ \frac{1+\lambda}{1+\gamma_{\nu,0}} \left( f(\tilde{\gamma}) - f (\gamma_{\nu,0}) \right)^2 \right]. \label{eq:densitymatrix14}
\end{equation}
Denote by $\lbrace \psi_{j} \rbrace_{j=0}^{\infty}$ the eigenbasis of $\gamma$ and by $\lbrace \varphi_{i} \rbrace_{i=0}^{\infty}$ the eigenbasis of $\gamma_{\nu,0}$. In order to replace $\tilde{\gamma}$ by $\gamma$, we write
\begin{align}\nonumber
& \tr \left[ \frac{1}{1+\gamma_{\nu,0}} \left( f(\tilde{\gamma}) - f (\gamma_{\nu,0}) \right)^2 \right]  \\
&= \sum_{i,j = 0}^{\infty} \frac{1}{1+e_{i}(\gamma_{\nu,0})} \left\vert \left\langle \psi_{i} \vert  \varphi_{j} \right\rangle \right\vert^2 \left( f(\mathds{1}(e_{j}(\gamma) \leq \lambda) e_{j}(\gamma)) - f(e_{i}(\gamma_{\nu,0})) \right)^2. \label{eq:densitymatrix15}
\end{align}
Since $\lambda = 4 \Vert \gamma_{\nu,0} \Vert$, we have $f(\lambda) \geq 2 f(\|\gamma_{\nu,0}\|)$. It follows that $f(e_j(\gamma)) \geq 2 f(e_i(\gamma_{\nu,0}))$ for all $i$ and $j$ such that $e_j(\gamma)> \lambda$. Hence we can replace $f(\mathds{1}(e_{j}(\gamma) \leq \lambda) e_{j}(\gamma))$ by $f(e_{j}(\gamma))$ in Eq.~\eqref{eq:densitymatrix15} to obtain an upper bound.  Combining this upper bound with Eqs.~\eqref{eq:densitymatrix14},~\eqref{eq:densitymatrix13} and~\eqref{eq:densitymatrix11}, we find that 
\begin{equation}
\left\Vert ( \tilde{\gamma} - \gamma_{\nu,0} ) P \right\Vert_1 \leq o(N^{1/2}) \left( \tr\left[ (1+\gamma_{\nu,0}) P \right] \right)^{1/2} \,.\label{eq:densitymatrix16}
\end{equation}

Eq.~\eqref{eq:densitymatrix3g}, $\|\gamma_{\nu,0}\|\lesssim (\beta\omega)^{-1}$ and an application of Lemma~\ref{lem:relativeentropy} prove the bound $\left| \tr [ \gamma - \gamma_{\nu,0} ] \right| \leq o(N)$. Together with Eq.~\eqref{eq:densitymatrix12}, this shows
\begin{equation}
\vert \tr [ \mathds{1}(\gamma \leq \lambda) \gamma - \gamma_{\nu,0} ] \vert \leq \vert \tr [ \gamma - \gamma_{\nu,0} ] \vert + \tr[ \mathds{1}(\gamma > \lambda) \gamma ] \leq o(N). \label{eq:densitymatrix17}
\end{equation}
Having Eq.~\eqref{eq:densitymatrix17} and Eq.~\eqref{eq:densitymatrix16} at hand, we combine them with Eq.~\eqref{eq:densitymatrix9} and $\Vert \gamma_{\nu,0} \Vert_1 \leq N$ to finally obtain
\begin{align}
\left\Vert j_2 \gamma_N j_2 - \gamma_{\nu,0} \right\Vert_1 &\leq  o(N) + o\left(N^{1/2} \right) \left( \tr\left[ (1+\gamma_{\nu,0}) P \right] \right)^{1/2} \label{eq:densitymatrix18} \\
&\quad + O(N^{1/2}) \left( \left\Vert Q \gamma_{\nu,0} Q \right\Vert_1 + o(N^{1/2}) \left( \tr\left[ (1+\gamma_{\nu,0}) P \right] \right)^{1/2} \right)^{1/2} \,,\nonumber
\end{align}
where we inserted $j_2 \gamma_N j_2$ for  $\gamma$. To complete the argument it remains to choose the projection $P$.

We choose $P = \mathds{1}(h \leq \eta T)$ for some large $\eta > 0$. Recall that $g(n)=(n+1)(n+2)/2$ denotes the degree of degeneracy of the energy level $\omega n$ of the harmonic oscillator Hamiltonian $h$.  We then have 
\begin{equation}
\tr [(1+\gamma_{\nu,0})P] \leq \tr \left[P + \gamma_{\nu,0} \right] = \sum_{\substack{ n \geq 0 : \\ \omega n \leq \eta T }} g(n) + O\left(\frac{1}{(\beta \omega)^{3}}\right) \lesssim \frac {1+\eta^3}{(\beta\omega)^3} \,.
\label{eq:densitymatrix18b}
\end{equation}
The term involving $Q$ can be estimated as 
\begin{equation}
\left\Vert Q \gamma_{\nu,0} Q \right\Vert_1 \leq \tr\left[ \gamma_{0} Q \right] = \sum_{ \substack{n \geq 0 : \\ \omega n > \eta T }} \frac{g(n)}{e^{\beta \left( \omega n - \mu_0 \right)}-1} \lesssim \frac{e^{-\eta/2}}{(\beta \omega)^3}.
\label{eq:densitymatrix18c}
\end{equation}
By choosing $\eta \gg 1$ appropriately, this shows 
\begin{equation}
\left\Vert j_2 \gamma_N j_2 - \gamma_{\nu,0} \right\Vert_1 \leq o(N). 
\label{eq:densitymatrix18c2}
\end{equation}

As a final step in the estimate of $j_2 \gamma_N j_2$, we replace $\gamma_{\nu,0}$ by $\gamma_0^{\mathrm{gc}} - N_0^\mathrm{gc} \vert \varphi_0 \rangle\langle \varphi_0 \vert$ with $\gamma_0^{\mathrm{gc}}$ defined in Eq.~\eqref{eq:aprioricondensate6f}. A straightforward computation shows that $\Vert \gamma_0^{\mathrm{gc}} - N_0^\mathrm{gc} \vert \varphi_0 \rangle\langle \varphi_0 \vert -\gamma_{\nu,0} \Vert_1 \lesssim (\beta \omega)^{-1}$. Hence,
\begin{equation}
\left\Vert j_2 \gamma_N j_2 - \left( \gamma_0^{\mathrm{gc}} - N_0^\mathrm{gc} \vert \varphi_0 \rangle\langle \varphi_0 \vert \right) \right\Vert_1 \leq o(N)
\label{eq:densitymatrix18c3}
\end{equation}
holds.
\subsection{The off-diagonal elements of $\gamma_N$ and the final estimate}
To complete the proof of Theorem~\ref{thm:main}, it remains to estimate the trace norm of $j_1^2 \gamma_N j_2^2$. In combination with Corollary~\ref{lem:particlenumbers}, which shows that the trace norm difference of $\gamma_{N,0}$ and $\gamma_0^{\mathrm{gc}}$ is small, this will allow us to conclude the convergence result \eqref{eq:mainresult6} for the one-particle density matrix  in the case $N_0 \sim N$ 
and $N a_N \geq \epsilon \omega^{-1/2}$ for some $\epsilon >0$. Finally, we shall comment on the case where one of these assumptions is not valid.

We define $\tilde{\gamma}_0^{\mathrm{gc}} = \gamma_0^{\mathrm{gc}} - N_0^\mathrm{gc} \vert \varphi_0 \rangle \langle \varphi_0 \vert$ as well as $P^{\mathrm{GP}} = P^{\mathrm{GP}}_{N_0a_N} = \vert \phi^{\mathrm{GP}}_{1,N_0 a_N} \rangle \langle \phi^{\mathrm{GP}}_{1,N_0 a_N} \vert$ for short. The identity $j_1(x)^2 + j_2(x)^2 = 1$ for all $x \in \mathbb{R}^3$ and the triangle inequality allow us to bound
\begin{align}
\left\Vert \gamma_N - \tilde{\gamma}^{\mathrm{gc}}_0 - N_0 P^{\mathrm{GP}} \right\Vert_1 \leq& \left\Vert j_1^2 \gamma_N j_1^2 - N_0 j_1 P^{\mathrm{GP}} j_1 \right\Vert_1 + \left\Vert  j_2^2 \gamma_N j_2^2 - j_2 \tilde{\gamma}^{\mathrm{gc}}_0 j_2 \right\Vert_1 \label{eq:densitymatrix19} \\
&+ 2 \left\Vert j_1^2 \gamma_N j_2^2 \right\Vert_1 + N_0 \left\Vert P^{\mathrm{GP}} - j_1 P^{\mathrm{GP}} j_1 \right\Vert_1 + \left\Vert \tilde{\gamma}^{\mathrm{gc}}_0 - j_2 \tilde{\gamma}^{\mathrm{gc}}_0 j_2 \right\Vert_1. \nonumber
\end{align}
With Eq.~\eqref{eq:aprioricondensate6e}, Eq.~\eqref{eq:densitymatrix18c3} and $j_i(x) \leq 1$ for $i\in\{1,2\}$, we see that the first two terms on the right-hand side of the above equation are of order $o(N)$. To derive an estimate for the first term in the second line of Eq.~\eqref{eq:densitymatrix19}, we bound $\left\Vert j_1^2 \gamma_N j_2^2 \right\Vert_1 \leq \left\Vert P^{\mathrm{GP}} j_1^2 \gamma_N j_2^2 \right\Vert_1 + \left\Vert (1-P^{\mathrm{GP}}) j_1^2 \gamma_N j_2^2 \right\Vert_1$. Since $P^{\mathrm{GP}}$ is a rank one projection, we have (recall that $\Vert \cdot \Vert$ denotes the operator norm)
\begin{equation}
\left\Vert P^{\mathrm{GP}} j_1^2 \gamma_N j_2^2 \right\Vert_1 = \left\Vert P^{\mathrm{GP}} j_1^2 \gamma_N j_2^2 \right\Vert \leq \left\Vert \gamma_N \right\Vert^{1/2} \left\Vert j_2^2 \gamma_N j_2^2 \right\Vert^{1/2} \leq N^{1/2} \left\Vert j_2 \gamma_N j_2 \right\Vert^{1/2}. \label{eq:densitymatrix20}
\end{equation}
We also estimate
\begin{equation}
\left\Vert j_2 \gamma_N j_2 \right\Vert \leq \left\Vert j_2 \gamma_N j_2 - \tilde{\gamma}^{\mathrm{gc}}_0 \right\Vert + \left\Vert \tilde{\gamma}_0^{\mathrm{gc}} \right\Vert. \label{eq:densitymatrix21}
\end{equation}
Recall that the largest eigenvalue of $\tilde{\gamma}_0^{\mathrm{gc}}$ is bounded by a constant times $(\beta \omega)^{-1}$. Making use of Eq.~\eqref{eq:densitymatrix18c3}, this implies $\left\Vert P^{\mathrm{GP}} j_1^2 \gamma_N j_2^2 \right\Vert_1 \leq o(N)$. On the other hand,
\begin{align}
\left\Vert (1-P^{\mathrm{GP}}) j_1^2 \gamma_N j_2^2 \right\Vert_1 &\leq \left\Vert (1-P^{\mathrm{GP}}) j_1^2 \gamma_N^{1/2} \right\Vert_2 \left\Vert \gamma_N^{1/2} j_2^2 \right\Vert_2 \label{eq:densitymatrix22} \\
&\leq \left( \tr\left[ \left( j_1^2 \gamma_N j_1^2 - N_0 P^{\mathrm{GP}} \right) (1-P^{\mathrm{GP}}) \right] \right)^{1/2} N^{1/2} \,, \nonumber
\end{align}
where $\Vert \cdot \Vert_2$ denotes the Hilbert-Schmidt norm. 
We apply Eq.~\eqref{eq:aprioricondensate6e} to see that the trace in the second line of Eq.~\eqref{eq:densitymatrix22} is bounded by $o(N) + N_0 \left\Vert P^{\mathrm{GP}} - j_1 P^{\mathrm{GP}} j_1 \right\Vert_1$. Moreover, the exponential decay of $\phi^{\mathrm{GP}}_{1,N_0 a_N}$, see \cite[Appendix~A]{RobertGPderivation}, implies 
\begin{equation}
\left\Vert P^{\mathrm{GP}} - j_1 P^{\mathrm{GP}} j_1 \right\Vert_1 \lesssim \left( \int_{B(R)^{\mathrm{c}}} \left\vert \phi^{\mathrm{GP}}_{1,N_0 a_N}(x) \right\vert^2 \text{d}x \right)^{1/2}  \lesssim e^{- c \omega^{1/2}R}
\label{eq:densitymatrix22b}
\end{equation}
for an appropriately chosen constant $c>0$. Since $R \gg \omega^{-1/2}$ we know that the last term on the right-hand side of Eq.~\eqref{eq:densitymatrix22b} is of order $o(1)$. Together with Eqs.~\eqref{eq:densitymatrix20}--\eqref{eq:densitymatrix22b}, we therefore see that
\begin{equation}
\left\Vert j_1^2 \gamma_N j_2^2 \right\Vert_1 + N_0 \left\Vert P^{\mathrm{GP}} - j_1 P^{\mathrm{GP}} j_1 \right\Vert_1 \leq o(N)
\label{eq:densitymatrix22c}
\end{equation} 
holds.

It remains to give a bound on the last term on the right-hand side of Eq.~\eqref{eq:densitymatrix19}. We add and subtract $j_2 \tilde{\gamma}^{\mathrm{gc}}_0$ and use $j_2 \leq 1$ to see that
\begin{equation}
\left\Vert \tilde{\gamma}^{\mathrm{gc}}_0 - j_2 \tilde{\gamma}^{\mathrm{gc}}_0 j_2 \right\Vert_1 \leq  2 \left\Vert (1-j_2) \tilde{\gamma}^{\mathrm{gc}}_0 \right\Vert_1 \leq 2 \left\Vert (1-j_2) \left( \tilde{\gamma}^{\mathrm{gc}}_0 \right)^{1/2} \right\Vert_2 \left\Vert \left( \tilde{\gamma}^{\mathrm{gc}}_0 \right)^{1/2} \right\Vert_2
\label{eq:densitymatrix22d}
\end{equation}
holds.  The last factor equals $(N_{\mathrm{th}}^{\mathrm{gc}})^{1/2}$. The square of the first factor, on the other hand, can be bounded by
\begin{equation}
\left\Vert (1-j_2) \left( \tilde{\gamma}^{\mathrm{gc}}_0 \right)^{1/2} \right\Vert_2^2 = \int_{\mathbb{R}^3} \tilde{\gamma}^{\mathrm{gc}}_0(x,x) \left( 1 - j_2(x) \right)^2 \text{d}x \leq \frac{4 \pi (2R)^3}{3} \sup_{x \in \mathbb{R}^3} \tilde{\gamma}^{\mathrm{gc}}_0(x,x).
\label{eq:densitymatrix22e}
\end{equation}
To obtain this bound, we used that the support of $0\leq 1-j_2\leq 1$ is given by $B(2R)$. We claim that $\tilde{\gamma}_0^{\mathrm{gc}}(x,x) \lesssim \beta^{-3/2}$ holds. This can be seen by an analysis similar to a part of the analysis carried out in Lemma~\ref{lem:densitybound}: We first replace the chemical potential $\mu_0$ by a larger one that guarantees $-\frac{3 \omega}{2} - \mu \geq 0$ to hold. In fact, we choose $\mu = -\frac{3 \omega}{2}$. Then there exists a constant $C>0$ such that 
\begin{equation}
\frac{e^{\beta \left( e_{\alpha}(h) - \mu \right)}-1}{e^{\beta \left( e_{\alpha}(h) - \mu_0 \right)}-1} \leq C \label{eq:densitymatrix23b}
\end{equation}
holds for all $\alpha \geq 1$. This can easily be checked when we realize that the expression on left-hand side of the above equation is monotone decreasing in $e_{\alpha}(h)$ if $\mu < \mu_0$. It therefore suffices to check the inequality with $e_{\alpha}(h)$ replaced by $\omega$. Eq.~\eqref{eq:densitymatrix23b} implies
\begin{equation}
\sum_{\alpha=1}^{\infty} \frac{1}{e^{\beta \left( e_{\alpha}(h) - \mu_0 \right)}-1} \left\vert \varphi_{\alpha}(x) \right|^2 \leq \sum_{\alpha=1}^{\infty} \frac{C}{e^{\beta \left( e_{\alpha}(h) - \mu \right)}-1} \left\vert \varphi_{\alpha}(x) \right|^2
\label{eq:densitymatrix23c}
\end{equation}
where $\lbrace \varphi_{\alpha} \rbrace_{\alpha=0}^{\infty}$ denotes the set of eigenfunctions of $h$. Since $\frac{x^2 \omega^2}{4} - \frac{3 \omega}{2} - \mu \geq 0$, we can argue as in the proof of Lemma~\ref{lem:densitybound}, Eqs.~\eqref{eq:densitybound4}--\eqref{eq:densitybound6}, to show that
\begin{equation}
\tilde{\gamma}_{0}^{\mathrm{gc}}(x,x) \lesssim \beta^{-3/2}
\label{eq:densitymatrix23e}
\end{equation}
holds. 

Together with Eq.~\eqref{eq:densitymatrix22d} and Eq.~\eqref{eq:densitymatrix22e}, Eq.~\eqref{eq:densitymatrix23e} implies 
\begin{equation}
\left\Vert \tilde{\gamma}_0^{\mathrm{gc}} - j_2 \tilde{\gamma}_0^{\mathrm{gc}} j_2 \right\Vert_1 \lesssim \left( N_{\mathrm{th}}^{\mathrm{gc}} \right)^{1/2} \frac{R^{3/2}}{\beta^{3/4}}.
\label{eq:densitymatrix23f}
\end{equation}
For our choice of $R$ in Section~\ref{sec:finalestimatelowerbound}, we have $R^{3/2} \ll \beta^{3/4} (\beta\omega)^{3/2}N$ and hence the right-hand side of \eqref{eq:densitymatrix23f} is of order $o(N)$. Together with 
Eqs.~\eqref{eq:densitymatrix19} and~\eqref{eq:densitymatrix22c}  this  proves
\begin{equation}
\left\Vert \gamma_N - \tilde{\gamma}_0^{\mathrm{gc}} - N_0 P^{\mathrm{GP}} \right\Vert_1 \leq o(N). 
\label{eq:densitymatrix24}
\end{equation}
The desired bound in Eq.~\eqref{eq:mainresult6} then follows from Corollary~\ref{lem:particlenumbers}.

Recall that so far we have worked under  the assumptions $N_0 \sim N$ 
 and $N a_N \geq \epsilon \omega^{-1/2}$ for some $\epsilon >0$.
It remains to consider the case where either $N_0 = o(N)$ and/or $a_N \ll \omega^{-1/2} N^{-1}$. 
In each of these cases, we have $\| P^\mathrm{GP} - |\varphi_0\rangle\langle \varphi_0| \| \ll 1$, and also $E^{\mathrm{GP}}(N_0,a_N,\omega) \ll \omega N$. The equivalent of Eq.~\eqref{eq:densitymatrix1} therefore reads
\begin{equation}
\tr\left[ H_N \Gamma_N \right] - TS(\Gamma_N) = F_0(\beta,N,\omega) + o(\omega N)
\label{eq:densitymatrix18k}
\end{equation}
and implies 
\begin{align}
o(\omega N)  \geq& \tr\left[ \left( \text{d}\Upsilon\left( h-\mu_0 \right) \right) \Gamma_{N} \right] - T S\left( \Gamma_{N} \right) - \tfrac{1}{\beta}\tr\left[ \ln\left( 1-e^{-\beta(h-\mu_0)} \right) \right]. 
\label{eq:densitymatrix18l}
\end{align}
With this input, we go through the analysis of Section~\ref{sec:aprioriboundj2gammaNj2}. In case $\mu_0 \lesssim - \omega$, we can directly apply Lemma~\ref{lem:relativeentropy} and the subsequent estimates, and the equivalent of Eq.~\eqref{eq:densitymatrix18c3} tells us that $\| \gamma_N - \gamma_0^{\mathrm{gc}} \|_1 \leq o(N)$. Together with Corollary~\ref{lem:particlenumbers}  this implies the claim. If $|\mu_0| \ll \omega$, we have to first remove the condensate. Let $P=|\varphi_0\rangle\langle \varphi_0|$ denote the projection onto the ground state of $h$, and $Q=1-P$.  From the subadditivity of the entropy and Eq.~\eqref{eq:densitymatrix3a}, we have 
\beq
S(\Gamma_N) \leq s(P\gamma_NP)+s(Q\gamma_NQ) \leq s(Q\gamma_NQ) + 1 + \ln(1+ N)\,.
\eeq
Since $\tr P \ln ( 1-e^{-\beta(h-\mu_0)} ) = \ln (1-e^{\beta\mu_0}) \leq 0$ and $\tr (h-\mu) P\gamma_N P \geq0$, we conclude from Eq.~\eqref{eq:densitymatrix18l} that $\mathcal{S}(Q\gamma_N Q,\tilde\gamma_0^\mathrm{gc}) = o( \beta \omega N)$.  The analysis of Section~\ref{sec:aprioriboundj2gammaNj2} then implies that $\| Q \gamma_N Q - \tilde\gamma_0^{\mathrm{gc}} \|_1 \leq o(N)$. Since $\tr \gamma_N = \tr \gamma_0^\mathrm{gc}$, this also implies that $\| P \gamma_N P - N_0^\mathrm{gc} P \|_1 = o(N)$. Finally, by arguing as in Eq.~\eqref{eq:densitymatrix20}, one easily sees that $\| P \gamma_N Q\|_1 \leq o(N)$. In combination with Corollary~\ref{lem:particlenumbers}, this shows that also in this case $\| \gamma_N - \gamma_{N,0} \|_1 \leq o(N)$. This completes the proof of Eq.~\eqref{eq:mainresult6}.

To conclude the proof of Theorem~\ref{thm:main} it remains to prove Eq.~\eqref{eq:mainresult7}. To that end, we write
\begin{equation}
\left\Vert \gamma_N - N_0 P^{\mathrm{GP}} \right\Vert \leq \left\Vert \gamma_N - N_0 P^{\mathrm{GP}} - \tilde{\gamma}_{N,0} \right\Vert + \left\Vert \tilde{\gamma}_{N,0} - \tilde{\gamma}^{\mathrm{gc}}_0 \right\Vert + \left\Vert \tilde{\gamma}^{\mathrm{gc}}_0 \right\Vert.
\label{eq:densitymatrix18m}
\end{equation} 
The first term on the right-hand side of Eq.~\eqref{eq:densitymatrix18m} can be bounded by the trace norm of the same expression, which is bounded by $o(N)$ according to Eq.~\eqref{eq:mainresult6}. Similarly, Eq.~\eqref{eq:freedensityclose} implies that the second term is bounded by a constant times $(N \ln N)^{1/2}$. From the explicit form of $\tilde{\gamma}^{\mathrm{gc}}_0$ we deduce that its largest eigenvalue is of the order $O((\beta \omega)^{-1})$. Therefore,
\begin{equation}
\left\Vert \gamma_N - N_0 P^{\mathrm{GP}} \right\Vert \leq o(N)
\label{eq:densitymatrix18n}
\end{equation}
and the proof of Theorem~\ref{thm:main} is complete.

\appendix
\section{Some properties of the ideal Bose gas}
\label{sec:appendix1}
In this Appendix we collect several statements about the ideal Bose that are needed in the proof of Theorem~\ref{thm:main} and do not seem to have appeared in the literature before (except for the first part of Proposition~\ref{prop:idealgas}). We start by introducing some notation. Fix a  nondecreasing sequence $\left\lbrace E_j \right\rbrace_{j=0}^{\infty}$ of nonnegative real numbers. A vector $n=(n_0,n_1, \ldots)$ of infinite length is called a \textit{configuration} if all its entries are nonnegative integers and if only a finite number of them is different from zero. The collection $\left\lbrace E_j \right\rbrace_{j=0}^{\infty}$ plays the role of the energy levels of a one-particle quantum system with the temperature factor $\beta$ absorbed, and a configuration $n$ labels an element of the standard basis of the bosonic Fock space. For each configuration $n$, we define its energy to be $E(n) = \sum_{j \geq 0} E_j n_j$. The canonical partition function is given by $Z(N)=\sum_{|n|=N} \exp\left(-E(n)\right)$ for $N \in \mathbb{N}_0$. Here $\left| n \right| = \sum_{j=0}^{\infty} n_j$ denotes the number of particles in the configuration $n$. Let $\langle A \rangle_N$ be the expectation of an operator $A$ in the canonical Gibbs state related to $Z(N)$. By $a_j^*$ and $a_j$ we denote the bosonic creation and annihilation operator of a particle with the energy $E_j$ and we define $\hat{n}_j = a^*_j a_j$. The free energy of the system  is given by
\begin{equation}
F(N)=-\ln\left( Z(N) \right)
\end{equation}
and the expectation of $f(\hat{n}_j)$ for $j \in \mathbb{N}_0$ with a function $f:\mathbb{N}_0 \rightarrow \mathbb{R}$ reads
\begin{equation}
\left\langle f(\hat{n}_j) \right\rangle_N = \frac{ \sum_{|n|=N} f(n_j) e^{-E(n)} }{Z(N)}.
\end{equation}
We assume that the energy levels $\left\lbrace E_j \right\rbrace_{j=0}^{\infty}$ and the function $f$ are such that the partition function and the expectations of all $f(\hat{n}_j)$ are finite. We then have: 
\begin{proposition}
\label{prop:idealgas}
The map $N \mapsto F(N)$ is convex, i.e., $F(N+1) + F(N-1) \geq 2 F(N)$ holds for all $N \geq 1$, and for any nonnegative, nondecreasing function $f:\mathbb{N}_0 \rightarrow \mathbb{R}$ the map $N \mapsto \left\langle f(\hat{n}_j) \right\rangle_N$ is nondecreasing for each $j \in \mathbb{N}_0$. 
\end{proposition}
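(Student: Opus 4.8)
The plan is to deduce both assertions from a single elementary fact: for any fixed nondecreasing sequence $\{E_j\}_{j\ge0}$ of nonnegative reals with $Z(N)<\infty$, the partition function $N\mapsto Z(N)$ is log-concave,
\[
Z(N)^2\ \ge\ Z(N-1)\,Z(N+1),\qquad N\ge1 .
\]
Given this, the first claim is immediate, since $F(N+1)+F(N-1)-2F(N)=\ln\!\big(Z(N)^2/(Z(N-1)Z(N+1))\big)\ge0$. For the second claim I would fix the level $j$, abbreviate $q_j=e^{-E_j}$, and let $Z^{(j)}(M)=\sum_{|n|=M,\ n_j=0}e^{-E(n)}$ be the partition function of the system obtained by deleting the level $E_j$. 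Grouping configurations according to the value $\ell=n_j$ gives $\langle \mathds{1}(\hat n_j=\ell)\rangle_N=q_j^{\ell}\,Z^{(j)}(N-\ell)/Z(N)$ for $0\le\ell\le N$ (and $0$ otherwise). Since $Z^{(j)}$ is a partition function of exactly the same type, it too is log-concave, so $M\mapsto Z^{(j)}(M+1)/Z^{(j)}(M)$ is nonincreasing. Consequently the likelihood ratio
\[
\frac{\langle \mathds{1}(\hat n_j=\ell)\rangle_{N+1}}{\langle \mathds{1}(\hat n_j=\ell)\rangle_{N}}
=\frac{Z(N)}{Z(N+1)}\cdot\frac{Z^{(j)}(N+1-\ell)}{Z^{(j)}(N-\ell)}
\]
is nondecreasing in $\ell$ (equal to $+\infty$ at $\ell=N+1$). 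The monotone likelihood ratio property classically implies stochastic domination of the law of $\hat n_j$ at particle number $N+1$ over that at $N$, i.e.\ $\langle\mathds{1}(\hat n_j\ge k)\rangle_{N+1}\ge\langle\mathds{1}(\hat n_j\ge k)\rangle_{N}$ for every $k\ge1$. For nonnegative nondecreasing $f$ one then writes $f(\hat n_j)=f(0)+\sum_{k\ge1}(f(k)-f(k-1))\,\mathds{1}(\hat n_j\ge k)$ with all coefficients $f(k)-f(k-1)\ge0$, and takes expectations term by term to conclude $\langle f(\hat n_j)\rangle_{N}\le\langle f(\hat n_j)\rangle_{N+1}$.

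It remains to prove the log-concavity of $Z$. I would first truncate to the lowest $M+1$ levels, $Z_M(N)=\sum_{n_0+\dots+n_M=N}\prod_{k=0}^M e^{-n_kE_k}$, and observe that $Z_M=g_0*g_1*\cdots*g_M$ is the convolution of the geometric sequences $g_k=(e^{-\ell E_k})_{\ell\ge0}$. Each $g_k$ is log-concave with no internal zeros (indeed log-linear), and the convolution of finitely many such sequences is again log-concave — a classical fact about P\'olya frequency sequences of order two, provable via total positivity (the relevant Toeplitz matrix of $Z_M$ is the product of those of the $g_k$, and $2\times2$ minors of a product are nonnegative by Cauchy--Binet). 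Hence each $Z_M$ is log-concave. Finally $Z_M(N)\uparrow Z(N)<\infty$ as $M\to\infty$, so the inequalities $Z_M(N)^2\ge Z_M(N-1)Z_M(N+1)$ pass to the pointwise limit.

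The only nonroutine input is the base fact that convolving log-concave sequences \emph{without internal zeros} preserves log-concavity; everything else is bookkeeping. I expect this to be the main point to get right — one must genuinely use the absence of internal zeros, which is exactly the feature that distinguishes $Z(N)$ (built from geometric factors with full support $\mathbb{N}_0$) from the ordinary partition function (built from the factors $(1-z^k)^{-1}$, whose coefficient sequences are supported on multiples of $k$) and is the reason the latter fails to be log-concave.
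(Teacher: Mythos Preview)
Your proof is correct. Note that the paper itself does not prove this proposition: it simply refers to the literature (Lewis--Pul\'e--Zagrebnov and S\"ut\H{o} for the convexity of $F$, Pul\'e--Zagrebnov and S\"ut\H{o} for the monotonicity of $\langle f(\hat n_j)\rangle_N$) and remarks that S\"ut\H{o}'s argument for $f(x)=x$ extends to general nonnegative nondecreasing $f$. So there is no in-paper argument to compare against.

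Your route---log-concavity of $Z(N)$ as a convolution of geometric (hence log-linear, hence PF$_2$) sequences, then log-concavity of the reduced partition function $Z^{(j)}$, then the monotone likelihood ratio property for the law of $\hat n_j$, then stochastic domination, then the layer-cake identity $f(m)=f(0)+\sum_{k\ge1}(f(k)-f(k-1))\mathds{1}(m\ge k)$---is clean, self-contained, and handles both claims in one stroke. Two minor remarks worth making explicit in a final write-up: (i) you use $0<Z^{(j)}(M)<\infty$ for all $M\ge0$, which holds because there are infinitely many levels (so removing one still leaves configurations of every size) and because $Z^{(j)}(M)\le Z(M)<\infty$; (ii) the MLR-to-stochastic-domination step is correctly stated, but one should note that at $\ell=N+1$ the ratio is $+\infty$ since the denominator vanishes while the numerator does not, which is exactly consistent with monotonicity. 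The only genuinely external input is the PF$_2$ convolution fact, and your Cauchy--Binet sketch is the standard proof.
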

The proof of the first statement seems to appear for the first time in \cite{LewisPuleZagrebnov1988} and it was later reproven in \cite{Suto}. The second statement has been shown in \cite{PuleZagrebnov} for the functions $f(x) = x^k$ with $k \in \mathbb{N}$ and in \cite{Suto} for $f(x) = x$. The proof in \cite{Suto} still works if one replaces the identity by a nonnegative, nondecreasing function $f$.

Proposition~\ref{prop:idealgas} has two consequences that will be of importance for us. The first is naturally formulated in the following setup: Denote by $Z_{\mathrm{gc}}(\mu) = \sum_{n} \exp(-(E(n) - \mu |n| ))$ the grand canonical partition function and define $\lambda_{N,\mu} = Z(N) \exp(\mu N)/Z_{\mathrm{gc}}(\mu)$. The grand canonical free energy is given by $F_{\mathrm{gc}}(\mu) = \sum_{N \geq 0} \lambda_{N,\mu} F(N) - S(\lambda)$, where $S(\lambda) = -\sum_{N\geq 0} \lambda_{N,\mu} \ln(\lambda_{N,\mu})$. The first consequence of Proposition~\ref{prop:idealgas} is the following statement which quantifies the difference between the canonical free energy and its grand canonical counterpart:
\begin{corollary}
\label{cor:freeenergy}
Assume $\mu$ is such that $\overline{N} = \sum_{N \geq 0} N \lambda_{N,\mu} \in \mathbb{N}$. Then
\begin{equation}
F(\overline{N}) \geq F_{\mathrm{gc}}(\mu) \geq F(\overline{N}) - \ln(1+\overline{N}) - 1.
\label{eq:idealgas15}
\end{equation}
\end{corollary}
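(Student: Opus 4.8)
The plan is to exploit two equivalent expressions for $F_{\mathrm{gc}}(\mu)$. From the definition $F_{\mathrm{gc}}(\mu)=\sum_{N\ge 0}\lambda_{N,\mu}F(N)-S(\lambda)$ and the identity $\ln\lambda_{N,\mu}=\ln Z(N)+\mu N-\ln Z_{\mathrm{gc}}(\mu)=-F(N)+\mu N-\ln Z_{\mathrm{gc}}(\mu)$, a one-line computation using $\sum_N\lambda_{N,\mu}=1$ and $\sum_N N\lambda_{N,\mu}=\overline N$ yields the closed form $F_{\mathrm{gc}}(\mu)=\mu\overline N-\ln Z_{\mathrm{gc}}(\mu)$. (Throughout one may assume $\overline N\ge 1$, since for $\overline N=0$ the constraint forces $\lambda_{\cdot,\mu}=\delta_0$ and the statement is trivial.)

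For the upper bound $F(\overline N)\ge F_{\mathrm{gc}}(\mu)$ I would use the closed form: every summand of $Z_{\mathrm{gc}}(\mu)=\sum_N Z(N)e^{\mu N}$ is nonnegative, so $Z_{\mathrm{gc}}(\mu)\ge Z(\overline N)e^{\mu\overline N}$, hence $\ln Z_{\mathrm{gc}}(\mu)\ge -F(\overline N)+\mu\overline N$, and subtracting this from $\mu\overline N$ gives $F_{\mathrm{gc}}(\mu)\le F(\overline N)$. (Equivalently, this is the grand canonical Gibbs variational principle tested on the canonical Gibbs state with $\overline N$ particles, viewed as a state on Fock space.)

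For the lower bound I would return to $F_{\mathrm{gc}}(\mu)=\sum_N\lambda_{N,\mu}F(N)-S(\lambda)$ and estimate the two pieces. By the convexity of $N\mapsto F(N)$ from Proposition~\ref{prop:idealgas}, there exists $s\in\mathbb{R}$ (e.g. $s=F(\overline N)-F(\overline N-1)$) with $F(N)\ge F(\overline N)+s(N-\overline N)$ for all $N\in\mathbb{N}_0$; since $\overline N$ is an integer and equals the mean of $\lambda_{\cdot,\mu}$ this gives $\sum_N\lambda_{N,\mu}F(N)\ge F(\overline N)+s\sum_N\lambda_{N,\mu}(N-\overline N)=F(\overline N)$. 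For the entropy I would use that, among all probability distributions on $\mathbb{N}_0$ with mean $\overline N$, the entropy is maximized by the geometric law $q_N=(1-p)p^N$ with $p=\overline N/(\overline N+1)$: nonnegativity of the relative entropy, $0\le\sum_N\lambda_{N,\mu}\ln(\lambda_{N,\mu}/q_N)=-S(\lambda)+S(q)$ (the cross term $-\sum_N\lambda_{N,\mu}\ln q_N$ is affine in $N$, so it only sees the common mean), yields $S(\lambda)\le S(q)=\ln(\overline N+1)+\overline N\ln(1+1/\overline N)$. Since $t\ln(1+1/t)<1$ for all $t>0$, this is at most $\ln(\overline N+1)+1$. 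Combining the two bounds gives $F_{\mathrm{gc}}(\mu)\ge F(\overline N)-\ln(\overline N+1)-1$.

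No step here is a genuine obstacle; the only point requiring a small idea is the entropy estimate, where one recognizes the geometric distribution as the maximum-entropy law at fixed mean and computes its entropy explicitly. Some care with convergence is also in order: finiteness of $Z_{\mathrm{gc}}(\mu)$ (which is needed even to define $\lambda_{N,\mu}$) makes $F_{\mathrm{gc}}(\mu)$ finite, the entropy bound makes $S(\lambda)$ finite, and since the convex piecewise-linear extension of $F$ is bounded below by an affine function, $\sum_N\lambda_{N,\mu}F(N)$ converges absolutely up to its (finite) negative part, so all the manipulations above are legitimate.
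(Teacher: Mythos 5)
Your proof is correct and follows essentially the same route as the paper's: convexity of $N\mapsto F(N)$ handles the energy term, the entropy is bounded by maximizing over all distributions with mean $\overline N$ (equivalently, by the geometric law), and the upper bound is the Gibbs variational principle in disguise. The only difference is that you spell out the two facts the paper states without proof — the closed form $F_{\mathrm{gc}}(\mu)=\mu\overline N-\ln Z_{\mathrm{gc}}(\mu)$ with the single-term lower bound on $Z_{\mathrm{gc}}$, and the identification of the geometric distribution as the entropy maximizer via nonnegativity of the relative entropy.
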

\begin{proof}
The convexity of the map $N \mapsto F(N)$, see Proposition~\ref{prop:idealgas}, implies $\sum_{N \geq 0} \lambda_{N,\mu} F(N) \geq F(\overline{N})$. Denote by $\mathcal{M}$ the set of all real, nonnegative sequences $w=\lbrace w_N \rbrace_{N=0}^{\infty}$ with the properties that $\sum_{N \geq 0} w_N = 1$ and $\sum_{N \geq 0} N w_N = \overline{N}$. By definition, we have $S(\lambda) \leq \sup_{w \in \mathcal{M}} S(w)$. The supremum on the right-hand side can be computed explicitly and equals $\ln(1+\overline{N}) + \overline{N} \ln((1+\overline{N})/\overline{N})$. Since $\overline{N} \ln( (1+\overline{N})/\overline{N} ) \leq 1$ this proves the lower bound for $F_{\mathrm{gc}}(\mu)$. The upper bound follows from the Gibbs variational principle.
\end{proof}
The second consequence is an estimate of the canonical one/two-particle density in terms of the grand canonical one-particle density. 
\begin{proposition}
\label{cor:density}
Let $h$ be a one-particle Hamiltonian on $L^2(\mathbb{R}^{d})$, $d \geq 1$, with energy levels $\lbrace E_j \rbrace_{j=0}^\infty$ and eigenfunctions $\lbrace \phi_j \rbrace_{j=0}^{\infty}$, that is, $h \phi_j = E_j \phi_j$. Denote by
\begin{align}
\varrho_N^{\mathrm{c}}(x) &= \sum_{j} \langle \hat{n}_j \rangle_N | \phi_j(x) |^2 \quad \text{ and }  \label{eq:idealgas16} \\
\varrho_N^{(2),\mathrm{c}}(x,y) &= \sum_{j_1,j_2,j_3,j_4} \overline{ \phi_{j_1}(x) } \overline{ \phi_{j_2}(y) } \phi_{j_3}(y) \phi_{j_4}(x) \left\langle a_{j_1}^* a_{j_2}^* a_{j_3} a_{j_4} \right\rangle_N \nonumber
\end{align}
the canonical one-particle and two-particle densities, respectively. The grand canonical one-particle density is given by
\begin{equation}
\varrho^{\mathrm{gc}}_{\mu}(x) = \sum_{N \geq 0} \lambda_{N,\mu} \varrho_N^{\mathrm{c}}(x), \label{eq:idealgas16b} 
\end{equation}
where $\mu$ is chosen such that $\overline{N} = \sum_{N \geq 0} N \lambda_{N,\mu} \in \mathbb{N}$. 
Then
\begin{equation}
\varrho_{\overline{N}}^{\mathrm{c}}(x) \leq \tfrac{40}{1.8} \varrho^{\mathrm{gc}}_{\mu}(x)  \quad \text{ and } \quad \varrho_{\overline{N}}^{(2),\mathrm{c}}(x,y) \leq 4 \left(  \tfrac{40}{1.8} \right)^2 \varrho^{\mathrm{gc}}_{\mu}(x) \varrho^{\mathrm{gc}}_{\mu}(y)
\label{eq:idealgas18}
\end{equation}
holds almost everywhere.
\end{proposition}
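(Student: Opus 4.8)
The plan is to reduce the proposition to estimates on occupation numbers and, ultimately, to one anti-concentration estimate for the total particle number. Since the one-particle Hamiltonian $h$ commutes with every number operator $\hat n_j$, both ideal-gas Gibbs states are diagonal in the occupation-number basis, so $\varrho^{\mathrm c}_{\overline N}(x)=\sum_j\langle\hat n_j\rangle_{\overline N}|\phi_j(x)|^2$, $\varrho^{\mathrm{gc}}_\mu(x)=\sum_j m_j|\phi_j(x)|^2$ with $m_j:=\langle\hat n_j\rangle^{\mathrm{gc}}_\mu=(e^{\beta(E_j-\mu)}-1)^{-1}$, and in $\varrho^{(2),\mathrm c}_{\overline N}$ only the terms with $\{j_1,j_2\}=\{j_3,j_4\}$ survive. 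Spelling the latter out gives a direct part $\sum_{j\ne k}|\phi_j(x)|^2|\phi_k(y)|^2\langle\hat n_j\hat n_k\rangle_{\overline N}$, an exchange part $\sum_{j\ne k}\overline{\phi_j(x)}\phi_j(y)\phi_k(x)\overline{\phi_k(y)}\langle\hat n_j\hat n_k\rangle_{\overline N}$, which the bound $2\,ab\le a^2+b^2$ applied to $a=|\phi_j(x)||\phi_k(y)|$, $b=|\phi_k(x)||\phi_j(y)|$ together with the $j\leftrightarrow k$ symmetry shows is dominated by the direct part, and a diagonal part $\sum_j|\phi_j(x)|^2|\phi_j(y)|^2\langle\hat n_j(\hat n_j-1)\rangle_{\overline N}$. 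Hence it suffices to establish the mode-wise bounds $\langle\hat n_j\rangle_{\overline N}\lesssim m_j$, $\langle\hat n_j(\hat n_j-1)\rangle_{\overline N}\lesssim m_j^2$, and $\langle\hat n_j\hat n_k\rangle_{\overline N}\lesssim m_jm_k$ for $j\ne k$, since $\sum_j a_jb_j\le(\sum_j a_j)(\sum_j b_j)$ for $a_j,b_j\ge0$ then turns each of the three parts into a constant times $\varrho^{\mathrm{gc}}_\mu(x)$ or $\varrho^{\mathrm{gc}}_\mu(x)\varrho^{\mathrm{gc}}_\mu(y)$.

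For the first two bounds I would invoke Proposition~\ref{prop:idealgas}: both $f(x)=x$ and $f(x)=x(x-1)$ are nonnegative and nondecreasing on $\mathbb N_0$, so $N\mapsto\langle f(\hat n_j)\rangle_N$ is nondecreasing, and splitting the convex combination $\langle f(\hat n_j)\rangle^{\mathrm{gc}}_\mu=\sum_N\lambda_{N,\mu}\langle f(\hat n_j)\rangle_N$ at $N=\overline N$ yields $\langle f(\hat n_j)\rangle^{\mathrm{gc}}_\mu\ge\langle f(\hat n_j)\rangle_{\overline N}\,\mathbb P_\mu(\hat N\ge\overline N)$. Combined with the grand canonical identities $\langle\hat n_j\rangle^{\mathrm{gc}}_\mu=m_j$ and $\langle\hat n_j(\hat n_j-1)\rangle^{\mathrm{gc}}_\mu=2m_j^2$ (the modes being independent geometric variables), this reduces everything to the key estimate $\mathbb P_\mu(\hat N\ge\overline N)\ge c_0$ for a universal $c_0>0$ (one can take $c_0=1.8/40$); proving this is the main obstacle. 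It asserts that a sum $\hat N=\sum_j\hat n_j$ of independent geometric variables with integer mean $\overline N$ keeps a fixed fraction of its mass at or above the mean; I would prove it by splitting off the mode $j_\ast$ with the largest mean $m_\ast$ and distinguishing the regimes $m_\ast\ge1$ (where $\hat n_{j_\ast}$ is flat enough, $\mathbb P(\hat n_{j_\ast}\ge t)=(m_\ast/(1+m_\ast))^{\lceil t\rceil_+}\gtrsim1$ on the relevant scale, and one conditions on the remaining modes lying above their mean) and $m_\ast<1$ (where all means are small, $\operatorname{Var}(\hat N)=\sum_j(m_j+m_j^2)\gtrsim\overline N$, and a second-moment anti-concentration bound for sums of independent integer variables applies).

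For $\langle\hat n_j\hat n_k\rangle_{\overline N}$ with $j\ne k$, Proposition~\ref{prop:idealgas} does not apply directly, so I would condition: conditioning the canonical $\overline N$-particle ensemble on $\hat n_k=m$ leaves exactly the canonical $(\overline N-m)$-particle ensemble on the modes $\ell\ne k$, and applying the monotonicity of Proposition~\ref{prop:idealgas} to that subsystem gives $\langle\hat n_j\hat n_k\rangle_{\overline N}\le\langle\hat n_k\rangle_{\overline N}\langle\hat n_j\rangle^{(k)}_{\overline N}$, with $\langle\cdot\rangle^{(k)}$ the canonical state on modes $\ell\ne k$. Taking $k$ to be the mode of smaller mean occupation, I would bound $\langle\hat n_k\rangle_{\overline N}$ by the already-established single-mode bound and $\langle\hat n_j\rangle^{(k)}_{\overline N}$ by that same bound applied to the subsystem (both it and Proposition~\ref{prop:idealgas} hold for an arbitrary energy sequence), at the chemical potential $\mu^{(k)}$ giving the subsystem mean $\overline N$; finally $\langle\hat n_j\rangle^{(k),\mathrm{gc}}_{\mu^{(k)}}\le m_j+m_k\le2m_j$, because raising the chemical potential from $\mu$ to $\mu^{(k)}$ restores exactly the $m_k$ particles lost by deleting mode $k$ and the single nonnegative summand $(e^{\beta(E_j-\,\cdot\,)}-1)^{-1}$ cannot increase by more than that total. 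Assembling these estimates yields the two inequalities in \eqref{eq:idealgas18}; the explicit constants $\tfrac{40}{1.8}$ and $4(\tfrac{40}{1.8})^2$ require only elementary bookkeeping of the exchange and diagonal contributions, which I omit.
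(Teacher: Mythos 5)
Your reduction to the single-mode monotonicity inequalities and to the anti-concentration estimate $\mathbf{P}_\mu(\hat N\ge\overline N)\ge c_0$ follows the paper's structure, and your direct/exchange/diagonal decomposition together with $2ab\le a^2+b^2$ is the paper's Cauchy--Schwarz step in disguise. The genuine gap is the anti-concentration estimate itself, which you correctly flag as ``the main obstacle'' but leave unresolved. Your $m_*<1$ case invokes a ``second-moment anti-concentration bound,'' but no such bound exists: knowing only that $X=\hat N-\overline N$ is centered with variance $\sigma^2$ does not prevent $\mathbf{P}(X\ge0)$ from being arbitrarily small (put mass $1-p$ at a small negative value and mass $p$ at a large positive value and let $p\to0$ with the variance held fixed); a higher moment or the explicit product structure is indispensable. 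Your $m_*\ge1$ case is also circular: to gain $\overline N-m_*-k$ particles from the remaining modes with constant probability you need $k\gtrsim\sqrt{\mathrm{Var}(\hat M)}$, while $\mathbf{P}(\hat n_{j_*}\ge m_*+k)\gtrsim1$ needs $k\lesssim m_*$, and these are incompatible when $m_*=O(1)$ but $\overline N\gg1$ (e.g.\ $m_*=1$, all other $m_j$ small, variance of order $\overline N$). The paper resolves this with Lemma~\ref{lem:fourthmomentbound}: an exact moment identity gives $\langle(\hat N-\overline N)^4\rangle_{\mathrm{gc}}\le 9\,\mathrm{Var}^2+\mathrm{Var}$, which combined with $\mathrm{Var}(\hat N)\ge\overline N\ge1$ yields $\mathbf{E}[X^4]\le 10$ for the standardized $X$, and a Paley--Zygmund-type polynomial estimate $ax+bx^2-dx^4\le\chi_{[0,\infty)}(x)$ then gives $\mathbf{P}(X\ge0)\ge(2\sqrt{3}-3)/10\ge 1.8/40$ uniformly in the energy sequence.

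Your treatment of the cross terms $\langle\hat n_j\hat n_k\rangle_{\overline N}$ with $j\ne k$ is, by contrast, a genuinely different and correct route. The paper simply quotes the correlation inequality $\langle\hat n_j\hat n_k\rangle_N\le\langle\hat n_j\rangle_N\langle\hat n_k\rangle_N$ from \cite{Suto}; your argument --- conditioning on $\hat n_k$, applying the monotonicity of Proposition~\ref{prop:idealgas} to the subsystem on modes $\ell\ne k$, and the observation $m_j^{(k)}\le m_j+m_k$ after raising the chemical potential --- is self-contained and recovers the same conclusion, with the extra factor $2$ from $m_j+m_k\le 2m_j$ absorbed by the factor $4$ in the stated constant. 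It is a nice alternative worth recording, but it does not rescue the proposal as a whole, since the anti-concentration constant still needs the fourth-moment input.
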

\begin{proof}
We start with the proof of the inequality for the one-particle density. From the definition of the one-particle densities, Eqs.~\eqref{eq:idealgas16} and~\eqref{eq:idealgas16b}, and the monotonicity of the map $N \mapsto \langle \hat{n}_j \rangle_N$, see Proposition~\ref{prop:idealgas}, we find
\begin{equation}
\varrho^{\mathrm{gc}}_{\mu}(x) \geq \varrho^{\mathrm{c}}_{\overline{N}}(x) \sum_{N \geq \overline{N}} \lambda_{N,\mu}.
\label{eq:idealgas19}
\end{equation}
To prove the claim, we need to show that $\sum_{N \geq \overline{N}} \lambda_{N,\mu} \geq \tfrac{1.8}{40}$ holds. From Lemma~\ref{lem:fourthmomentbound} below we know that 
\begin{equation}
\sum_{N \geq 0} \lambda_{N,\mu} \left( N-\overline{N} \right)^4 \leq 9 \left( \sum_{N \geq 0} \lambda_{N,\mu} \left( N-\overline{N} \right)^2 \right)^2 + \sum_{N \geq 0} \lambda_{N,\mu} \left( N-\overline{N} \right)^2.
\label{eq:idealgas17}
\end{equation}
The grand canonical Gibbs state is quasi-free and  hence can use Wick's Theorem to see that 
\begin{equation}
\sum_{N \geq 0} \lambda_{N,\mu} \left( N-\overline{N} \right)^2 \geq \overline{N} \geq 1
\label{eq:idealgas17b}
\end{equation}
holds. Hence, we can bound the centered fourth moment of the particle number in Eq.~\eqref{eq:idealgas17} by 10 times the variance squared.

Let us define the new random variable $X$ by $X = (N-\overline{N}) (\textbf{E}((N-\overline{N})^2))^{-1/2}$ which by our assumptions has the following properties:
\begin{equation}
\mathbf{E}(X)=0, \quad \mathbf{E}(X^2)=1 \quad \text{ and } \quad \mathbf{E}(X^4) = Y. \label{eq:idealgas20}
\end{equation} 
Here $\mathbf{E}(X)$ is the expectation of $X$ and we have $Y \leq 10$ by the arguments in the previous paragraph. Denote by $\textbf{P}_X$ the probability measure on $\mathbb{R}$ induced by $X$ and choose $a$, $b$, $d$ such that the function $f(x)=ax+bx^2-dx^4$ obeys $f(x) \leq \chi_{[0,\infty)}(x)$ for all $x \in \mathbb{R}$. Here $\chi_{[0,\infty)}$ denotes the characteristic function of the interval $[0,\infty)$. We then have
\begin{equation}
\mathbf{P}(X \geq 0) = \int_{\mathbb{R}} \chi_{[0,\infty)}(s) \text{d}\mathbf{P}_X(s) \geq \int_{\mathbb{R}} f(s) \text{d}\mathbf{P}_X(s) = b - Y d. 
\label{eq:idealgas21}
\end{equation}
Explicit optimization of the right-hand side of Eq.~\eqref{eq:idealgas21} under the constraint $f(x) \leq \chi_{[0,\infty)}(x)$ yields 
\begin{equation}
\mathbf{P}(X \geq 0) \geq \frac{2 \sqrt{3} - 3}{Y} \geq \frac{1.8}{40}.
\label{eq:idealgas22}
\end{equation}
This proves the claim for the one-particle densities (assuming the validity of Lemma~\ref{lem:fourthmomentbound}).

It remains to prove the bound for the two-particle densities. An application of the Cauchy-Schwarz inequality tells us that
\begin{align}
\varrho^{(2),\mathrm{c}}_{\overline{N}}(x,y) &= \sum_{i,j} \left| \frac{1}{2}   \sum_{\sigma \in S_2} { \phi_{\sigma(i)}(x) } \overline{ \phi_{\sigma(j)}(y) } \right|^2  \left\langle  a_i^* a_j^*  a_j a_i  \right\rangle_{\overline{N}} \label{eq:lemupperbound221} \\
&\leq \sum_{i,j} \left\vert \phi_i(x) \right\vert^2 \left\vert \phi_j(y) \right\vert^2 \left\langle a_i^* a_j^* a_j a_i \right\rangle_{\overline{N}} \nonumber
\end{align}
holds. Here $S_2$ denotes the group of permutations of two elements. Let us denote by $\langle A \rangle_{\mathrm{gc}} = \sum_{N \geq 0} \lambda_{N,\mu} \langle A \rangle_N$ the expectation of an operator $A$ in the grand canonical Gibbs state. We want to derive an upper bound for the expectation value in the second line on the right-hand side of Eq.~\eqref{eq:lemupperbound221}. For $i = j$ we have $a_i^* a_i^* a_i a_i = \hat{n}_i (\hat{n}_i - 1)$ with $\hat n_i = a_i^* a_i$. From Proposition~\ref{prop:idealgas} we know that the map $N \mapsto \langle \hat{n}_j(\hat{n}_j - 1) \rangle_N$ is nondecreasing. Using this fact, we argue as in the case of the one-particle density to see that 
\begin{equation}
\langle \hat{n}_j(\hat{n}_j - 1) \rangle_{\overline{N}} \leq \frac{40}{1.8} \langle \hat{n}_j(\hat{n}_j - 1) \rangle_{\mathrm{gc}}
\label{eq:lemupperbound221a}
\end{equation}
holds. The right-hand side of Eq.~\eqref{eq:lemupperbound221a} can be simplified when we use Wick's Theorem: $\langle \hat{n}_j(\hat{n}_j - 1) \rangle_{\mathrm{gc}} = 2 \langle \hat{n}_j \rangle_{\mathrm{gc}}^2$. If $i \neq j$ we have $a_i^* a_j^* a_j a_i = \hat{n}_i \hat{n}_j$ and \cite{Suto} tells us that
\begin{equation}
\langle \hat{n}_i \hat{n}_j \rangle_{\overline{N}} \leq \langle \hat{n}_i \rangle_{\overline{N}} \langle \hat{n}_j \rangle_{\overline{N}} 
\label{eq:lemupperbound221b}
\end{equation}
 holds. As in the previous case, we use $\langle \hat{n}_j \rangle_{\overline{N}} \leq \frac{40}{1.8} \langle \hat{n}_j \rangle_{\mathrm{gc}}$. Combining these estimates with Eq.~\eqref{eq:lemupperbound221}, we finally obtain
\begin{equation}
\varrho^{(2),\mathrm{c}}_{\overline{N}}(x,y) \leq 4 \left( \frac{40}{1.8} \right)^2 \sum_{i,j} \left\vert \phi_i(x) \right\vert^2 \left\vert \phi_j(y) \right\vert^2 \left\langle \hat{n}_i \right\rangle_{\mathrm{gc}} \left\langle \hat{n}_j \right\rangle_{\mathrm{gc}}.
\label{eq:lemupperbound222}
\end{equation}
This proves the claim \eqref{eq:idealgas18}.
\end{proof}

\begin{remark}\label{rem:app}
The first part of the proof shows that $\langle \hat n_j\rangle_N \leq \frac{40}{1.8}  \langle \hat n_j \rangle_\textrm{gc}$ for all $j$, where $\langle \, \cdot\,\rangle_\mathrm{gc}$ denotes the corresponding grand canonical state with average particle number $N$. In particular, $N_0 \lesssim N_0^\mathrm{gc}$ and $N_\mathrm{th} \lesssim N_\mathrm{th}^\mathrm{gc}$ holds. 
\end{remark}

The next Lemma provides an estimate of the fourth moment of the particle number in the grand canonical ensemble in terms of the second moment. It is needed in the proof of Proposition~\ref{cor:density}.
\begin{lemma}
\label{lem:fourthmomentbound}
Let $\hat{N} = \sum_{j=0}^{\infty} \hat{n}_j$ be the particle number operator and denote by $\overline{N} = \langle \hat{N} \rangle_{\mathrm{gc}}$ the expected number of particles in the grand canonical ensemble. We then have
\begin{equation}
\langle (\hat{N} - \overline{N})^4 \rangle_{\mathrm{gc}} \leq 9 \langle (\hat{N}-\overline{N})^2 \rangle_{\mathrm{gc}}^2 + \langle (\hat{N}-\overline{N})^2 \rangle_{\mathrm{gc}}. 
\label{eq:fourthmomentbound0}
\end{equation}
\end{lemma}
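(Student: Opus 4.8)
The plan is to work entirely in the grand canonical ensemble and to exploit that the corresponding Gibbs state factorizes over the one-particle eigenmodes: since $\overline N=\sum_j(e^{E_j-\mu}-1)^{-1}<\infty$ forces $\mu<\min_jE_j$, the occupation numbers $\hat n_j$ are independent random variables, each following the geometric distribution with mean $\nu_j:=\langle \hat n_j\rangle_{\mathrm{gc}}=(e^{E_j-\mu}-1)^{-1}$. First I would introduce the centered variables $Y_j:=\hat n_j-\nu_j$, so that $\hat N-\overline N=\sum_jY_j$ is a sum of independent, mean-zero random variables, and abbreviate $\sigma_j^2:=\langle Y_j^2\rangle_{\mathrm{gc}}$. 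By quasi-freeness (Wick's theorem for a single mode, or equivalently $\langle (a_j^*)^m a_j^m\rangle_{\mathrm{gc}}=m!\,\nu_j^m$) one has $\sigma_j^2=\nu_j(1+\nu_j)$ and $\langle(\hat N-\overline N)^2\rangle_{\mathrm{gc}}=\sum_j\sigma_j^2$.

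The computational heart is a single-mode identity: from the factorial moments $\langle \hat n_j(\hat n_j-1)\cdots(\hat n_j-m+1)\rangle_{\mathrm{gc}}=m!\,\nu_j^m$ of the geometric law one obtains, after a routine calculation, $\langle Y_j^4\rangle_{\mathrm{gc}}=9\sigma_j^4+\sigma_j^2$. Then I would expand $(\sum_jY_j)^4$ and discard every term containing an index occurring exactly once (these vanish by independence and $\langle Y_j\rangle_{\mathrm{gc}}=0$); only the contributions with all four indices equal and with two coincident pairs survive, yielding
\begin{equation*}
\langle(\hat N-\overline N)^4\rangle_{\mathrm{gc}}=\sum_j\langle Y_j^4\rangle_{\mathrm{gc}}+3\Bigl[\Bigl(\sum_j\sigma_j^2\Bigr)^2-\sum_j\sigma_j^4\Bigr]=6\sum_j\sigma_j^4+\sum_j\sigma_j^2+3\Bigl(\sum_j\sigma_j^2\Bigr)^2.
\end{equation*}
Equivalently, and perhaps more cleanly, one may use that the fourth cumulant of a sum of independent variables is additive, with $\kappa_4^{(j)}=6\sigma_j^4+\sigma_j^2$ for a geometric mode, together with $\mu_4=\kappa_4+3\kappa_2^2$. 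Comparing with the claimed bound $9(\sum_j\sigma_j^2)^2+\sum_j\sigma_j^2$ in \eqref{eq:fourthmomentbound0}, everything reduces to $6\sum_j\sigma_j^4\le 6(\sum_j\sigma_j^2)^2$, i.e.\ the elementary inequality $\sum_j\sigma_j^4\le(\sum_j\sigma_j^2)^2$ for nonnegative terms. All series converge because $\sum_j\nu_j=\overline N<\infty$ and $\sigma_j^2=\nu_j(1+\nu_j)\le(1+\nu_0)\nu_j$.

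I do not expect a genuine obstacle here: the only point requiring care is the bookkeeping of the single-mode fourth moment and the combinatorics of the fourth power of an independent sum, after which the statement follows from a trivial $\ell^2$-versus-$\ell^4$ estimate. If one wishes to bypass the exact evaluation of $\langle Y_j^4\rangle_{\mathrm{gc}}$, it suffices to prove the one-sided bound $\langle Y_j^4\rangle_{\mathrm{gc}}\le 9\sigma_j^4+\sigma_j^2$, which can be read off the probability generating function $(1-p_j)/(1-p_jz)$ of the geometric law with $p_j=\nu_j/(1+\nu_j)$.
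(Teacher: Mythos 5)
Your proposal is correct and is essentially the paper's argument, dressed in probabilistic rather than generating-function language. The paper computes $(\partial/\partial\mu)^3\overline N$ for $\overline N=\sum_j(e^{E_j-\mu}-1)^{-1}$ and recognizes $\langle(\hat N-\overline N)^4\rangle=(\partial/\partial\mu)^3\overline N+3(\partial/\partial\mu\,\overline N)^2$, which is exactly your $\mu_4=\kappa_4+3\kappa_2^2$ with $\kappa_4=\sum_j(6\sigma_j^4+\sigma_j^2)$ arising modewise (the $\sinh$-identities in the paper evaluate to $6\sigma_j^4$ and $\sigma_j^2$); both then reduce the claim to the same elementary bound $\sum_j\sigma_j^4\le(\sum_j\sigma_j^2)^2$.
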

\begin{proof}
If we use that $\tfrac{\partial}{\partial \mu} Z_{\mathrm{gc}}(\mu) = Z_{\mathrm{gc}}(\mu) \overline{N}$ holds, a simple computation leads to
\begin{equation}
\left( \frac{\partial}{\partial \mu} \right)^3 \overline{N} = \left\langle \hat{N}^4 \right\rangle_{\mathrm{gc}} - 4 \left\langle \hat{N}^3 \right\rangle_{\mathrm{gc}} \overline{N} - 3 \left \langle \hat{N}^2 \right\rangle_{\mathrm{gc}}^2 + 12  \left\langle \hat{N}^2 \right\rangle_{\mathrm{gc}} \overline{N}^2 - 6 \overline{N}^4. \label{eq:fourthmomentbound1}
\end{equation}
On the other hand, 
\begin{equation}
\left\langle \left( \hat{N}-\overline{N} \right)^4 \right\rangle_{\mathrm{gc}} = \left\langle \hat{N}^4 \right\rangle_{\mathrm{gc}} - 4 \left\langle \hat{N}^3 \right\rangle_{\mathrm{gc}} \overline{N} + 6 \left\langle \hat{N}^2 \right\rangle_{\mathrm{gc}} \overline{N}^2 - 3 \overline{N}^4, \label{eq:fourthmomentbound2}
\end{equation}
which together with $\tfrac{\partial}{\partial \mu} \overline{N} = \langle \hat{N}^2 \rangle_{\mathrm{gc}} - \overline{N}^2$ allows us to conclude that
\begin{equation}
\left\langle \left( \hat{N}-\overline{N} \right)^4 \right\rangle = \left( \frac{\partial}{\partial \mu} \right)^3 \overline{N} + 3 \left( \frac{\partial}{\partial \mu} \overline{N} \right)^2. \label{eq:fourthmomentbound3}
\end{equation}
To treat the first term on the right-hand side, we need to do a little computation. It yields
\begin{equation}
\frac{\partial}{\partial \mu} \sum_{j=0}^{\infty} \frac{1}{e^{E_j - \mu}-1} = \sum_{j=0}^{\infty} \frac{1}{4 \sinh\left( \frac{E_j - \mu}{2} \right)^2} \label{eq:fourthmomentbound4}
\end{equation}
as well as
\begin{equation}
\left( \frac{\partial}{\partial \mu} \right)^3 \sum_{j=0}^{\infty} \frac{1}{e^{E_j - \mu}-1} = \sum_{j=0}^{\infty} \left[ \frac{3}{8 \sinh\left( \frac{E_j - \mu}{2} \right)^4} + \frac{1}{4 \sinh\left( \frac{E_j - \mu}{2} \right)^2} \right]. \label{eq:fourthmomentbound5}
\end{equation}
With Eq.~\eqref{eq:fourthmomentbound4} we see that the first term on the right-hand side of Eq.~\eqref{eq:fourthmomentbound5} is bounded from above by six times the variance of the particle number squared. Together with Eq.~\eqref{eq:fourthmomentbound3} this proves the claim.
\end{proof}

The last statement of this Appendix is an estimate on the trace norm difference of the one-particle density matrices of the canonical and grand canonical Gibbs states, which we denote by $\gamma_{N}$ and $\gamma_N^\textrm{gc}$, respectively. The latter equals 
\begin{equation}
\gamma_N^\mathrm{gc} = \frac 1{e^{h-\mu} -1} =  \sum_{i\geq 0} \frac 1{e^{E_i-\mu} -1 } | \phi_j\rangle\langle\phi_j|
\end{equation}
where $\mu<0$ is chosen such that $\tr \gamma_N^\mathrm{gc} = N \in \mathbb{N}$. We shall also introduce $\tilde{\gamma}_N^{\mathrm{gc}} = \gamma_N^\mathrm{gc} - N_0^\mathrm{gc} |\phi_0\rangle\langle \phi_0|$, where $N_0^\mathrm{gc} = (e^{-\mu}-1)^{-1}$ is the number of particles in the condensate in the grand canonical ensemble, and similarly $\tilde{\gamma}_N = \gamma_{N} - N_0 \vert \phi_0 \rangle \langle \phi_0 \vert$, with $N_0 = \langle \hat n_0\rangle_N$ the number of particles in the condensate in the  canonical ensemble.

\begin{lemma}
\label{lem:freedensityclose}
With the definitions above,  we have
\begin{equation}
|N_0 - N_0^\mathrm{gc}| \leq \left\Vert \tilde{\gamma}_N^{\mathrm{gc}} - \tilde{\gamma}_{N} \right\Vert_1 \leq \left\Vert {\gamma}_N^{\mathrm{gc}} - {\gamma}_{N} \right\Vert_1 \lesssim \left( \tr \left[\tilde{\gamma}_N^{\mathrm{gc}} (1 + \tilde{\gamma}_N^{\mathrm{gc}}) \right]    \ln N \right)^{1/2} +  \left(1 +\| \tilde\gamma_N^\mathrm{gc}\| \right) \ln N   \,.
\label{eq:freedensityclose}
\end{equation}
\end{lemma}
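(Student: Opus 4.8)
The plan is to compare the canonical and grand canonical ensembles through the relative entropy and the coercivity estimate of Lemma~\ref{lem:relativeentropy}, exactly in the spirit of the argument in Section~\ref{sec:densitymatrix}. First I would observe that the grand canonical one-particle density matrix $\gamma_N^{\mathrm{gc}}$ is the minimizer of the free energy functional $\gamma \mapsto \tr[(h-\mu)\gamma] - \beta^{-1} s(\gamma) + \mu N$ over one-particle density matrices, and that the canonical Gibbs state, when restricted to its one-particle density matrix $\gamma_N$, is an \emph{admissible} trial object with $\tr\gamma_N = N = \tr\gamma_N^{\mathrm{gc}}$. The quantity
\[
\tr[(h-\mu)\gamma_N] - \tfrac1\beta S(\Gamma_N) - F_0^{\mathrm{gc}}(\beta,N,\omega) \;\geq\; \tfrac1\beta\,\mathcal{S}(\gamma_N,\gamma_N^{\mathrm{gc}})
\]
is controlled from above: using $S(\Gamma_N) \leq s(\gamma_N)$ (the bound \eqref{eq:densitymatrix3a}) and the fact that the canonical free energy $F_0$ differs from $F_0^{\mathrm{gc}}$ by at most $TO(\ln N)$ (Corollary~\ref{cor:freeenergy}), one finds $\mathcal{S}(\gamma_N,\gamma_N^{\mathrm{gc}}) \lesssim \ln N$. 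Thus the relative entropy of the canonical one-particle density matrix with respect to the grand canonical one is bounded by $O(\ln N)$.

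The second step is to feed this into the two inequalities of Lemma~\ref{lem:relativeentropy}. The coercivity bound \eqref{eq:relativeentropy3} immediately gives
\[
[\tr(\gamma_N-\gamma_N^{\mathrm{gc}})]^2 \lesssim \ln N \cdot \tr[(\gamma_N+\gamma_N^{\mathrm{gc}})(1+\gamma_N^{\mathrm{gc}})],
\]
which is vacuous as stated (the left side is zero) but applied to $\tilde\gamma_N$ versus $\tilde\gamma_N^{\mathrm{gc}}$ after peeling off the ground state controls $|N_0-N_0^{\mathrm{gc}}|$. More importantly, \eqref{eq:relativeentropy2} controls $\tr[(1+\gamma_N^{\mathrm{gc}})^{-1}(f(\gamma_N)-f(\gamma_N^{\mathrm{gc}}))^2] \lesssim \ln N$ with $f(x)=x/\sqrt{1+x}$. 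I would then run the same decomposition as in \eqref{eq:densitymatrix6}--\eqref{eq:densitymatrix18}: split $\gamma_N = \mathds{1}(\gamma_N>\lambda)\gamma_N + \mathds{1}(\gamma_N\leq\lambda)\gamma_N$ with $\lambda = 4\|\tilde\gamma_N^{\mathrm{gc}}\|$, use $f(t)\geq 2f(t/4)$ to bound the high-eigenvalue part, use the lower bound $f'(x)\geq \tfrac12(1+x)^{-1/2}$ on the bulk part to pass from the $f$-distance back to the genuine distance, and choose the spectral projection $P = \mathds{1}(h\leq \eta T)$ to split the remaining trace-norm error into a piece of size $(\tr[(1+\gamma_N^{\mathrm{gc}})P])^{1/2}$ and a piece of size $\|Q\gamma_N^{\mathrm{gc}}Q\|_1^{1/2}$. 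Tracking the powers of $\ln N$, $\tr[\tilde\gamma_N^{\mathrm{gc}}(1+\tilde\gamma_N^{\mathrm{gc}})]$, and $\|\tilde\gamma_N^{\mathrm{gc}}\|$ through these Cauchy--Schwarz steps is what produces the precise right-hand side of \eqref{eq:freedensityclose}: the square-root term comes from the bulk estimate $(\text{relative entropy})^{1/2}\cdot(\tr[(1+\tilde\gamma_N^{\mathrm{gc}})P])^{1/2}$ after optimizing over $\eta$, and the linear-in-$\ln N$ term comes from the high-eigenvalue tail together with the choice of $\lambda$.

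The first two inequalities in \eqref{eq:freedensityclose} are elementary: $\|\tilde\gamma_N^{\mathrm{gc}}-\tilde\gamma_N\|_1 \leq \|\gamma_N^{\mathrm{gc}}-\gamma_N\|_1$ because removing the rank-one ground-state pieces can only decrease the trace norm once one notes $|N_0-N_0^{\mathrm{gc}}|$ is itself bounded by $\|\gamma_N^{\mathrm{gc}}-\gamma_N\|_1$ (test against $|\phi_0\rangle\langle\phi_0|$), and the leftmost inequality is the same test. The main obstacle is the bookkeeping in the last step: one must be careful that the projection $P$ and the cutoff $\lambda$ are chosen so that \emph{both} the term $\tr[(1+\gamma_N^{\mathrm{gc}})P]$ (which grows like $\eta^3(\beta\omega)^{-3}$, i.e. like $\eta^3 N$ in the regime of interest) and $\|Q\gamma_N^{\mathrm{gc}}Q\|_1$ (which decays like $e^{-\eta/2}(\beta\omega)^{-3}$) are simultaneously under control, and that one does not lose more than a single power of $\ln N$ — this forces one to keep the $\tr[\tilde\gamma_N^{\mathrm{gc}}(1+\tilde\gamma_N^{\mathrm{gc}})]$ and $\|\tilde\gamma_N^{\mathrm{gc}}\|$ dependence explicit rather than bounding them crudely by $N$, since in the regime $\beta\omega \sim N^{-1/3}$ one has $\|\tilde\gamma_N^{\mathrm{gc}}\| \sim N^{1/3}$ and $\tr[\tilde\gamma_N^{\mathrm{gc}}(1+\tilde\gamma_N^{\mathrm{gc}})] \sim N^{5/3}$, so the stated bound reads $(N^{5/3}\ln N)^{1/2} + N^{1/3}\ln N \sim N^{5/6}(\ln N)^{1/2}$, which is the $o(N)$ we need.
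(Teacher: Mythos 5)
Your first step is correct and is exactly what the paper does: the entropy inequality $S(\Gamma_N) \leq s(\gamma_N)$ combined with Corollary~\ref{cor:freeenergy} gives $\mathcal{S}(\gamma_N, \gamma_N^{\mathrm{gc}}) \lesssim \ln N$. The problem is what you propose to do with this.

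The decisive observation that you are missing is that $\gamma_N$ and $\gamma_N^{\mathrm{gc}}$ \emph{commute}: they are both functions of the one-body Hamiltonian $h$ and share the eigenbasis $\{\phi_j\}$. Because of this, the relative entropy collapses to a scalar sum $\mathcal{S}(\gamma_N,\gamma_N^{\mathrm{gc}}) = \sum_\alpha S(a_\alpha, b_\alpha)$ over the common eigenvalues, and a single Cauchy--Schwarz applied to $\sum_{\alpha\geq 1}|a_\alpha - b_\alpha|$ together with the scalar coercivity bound~\eqref{eq:relativeentropy4} immediately yields the self-consistent inequality
\[
\left\Vert \tilde{\gamma}_N^{\mathrm{gc}} - \tilde{\gamma}_N \right\Vert_1 \lesssim (\ln N)^{1/2}\left( (1+\|\tilde\gamma_N^{\mathrm{gc}}\|)\|\tilde\gamma_N^{\mathrm{gc}}-\tilde\gamma_N\|_1 + \tr[\tilde\gamma_N^{\mathrm{gc}}(1+\tilde\gamma_N^{\mathrm{gc}})]\right)^{1/2},
\]
which solves to the stated bound. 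No cutoff $\lambda$, no projections $P,Q$, no $\eta$-optimization, no use of $f(x)=x/\sqrt{1+x}$ is needed. The same commutativity also gives the middle inequality of~\eqref{eq:freedensityclose} immediately, since $\|\gamma_N^{\mathrm{gc}}-\gamma_N\|_1 = |N_0-N_0^{\mathrm{gc}}| + \|\tilde\gamma_N^{\mathrm{gc}}-\tilde\gamma_N\|_1$ by orthogonality of the ground-state piece; the naive triangle inequality you invoke only gives a factor two the wrong way.

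The route you propose --- transplanting the cutoff-plus-projection machinery of \eqref{eq:densitymatrix6}--\eqref{eq:densitymatrix18} --- was designed for the genuinely non-commuting pair $j_2\gamma_N j_2$ and $\gamma_{\nu,0}$ and would not reproduce the stated bound. The Cauchy--Schwarz step $\|\gamma_0 Q\|_1 \leq \|\gamma_0\|_1^{1/2}\|Q\gamma_0 Q\|_1^{1/2}$ in \eqref{eq:densitymatrix7} introduces a factor $(\tr\tilde\gamma_N^{\mathrm{gc}})^{1/2}\sim (\beta\omega)^{-3/2}$, and after optimizing over $\eta$ the resulting bound is of order $(\beta\omega)^{-2}(\ln N)^{1/2}$, which is weaker than the lemma's $(\beta\omega)^{-3/2}(\ln N)^{1/2}$ by a factor $(\beta\omega)^{-1/2}\sim N^{1/6}$. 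Finally, your numerical check at the end is off: $\tr[\tilde\gamma_N^{\mathrm{gc}}(1+\tilde\gamma_N^{\mathrm{gc}})] \sim (\beta\omega)^{-3}\sim N$, not $N^{5/3}$, since the degeneracy-weighted sum $\sum_n n^2 (e^{\beta\omega n}-1)^{-2}$ is $O((\beta\omega)^{-3})$; the right-hand side of the lemma is of order $N^{1/2}(\ln N)^{1/2}$, not $N^{5/6}(\ln N)^{1/2}$.
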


\begin{proof}
Let $\Gamma_N^\mathrm{G}$ denote the canonical Gibbs state, with one-particle density matrix $\gamma_N$. With $s(\gamma)$ defined in Eq.~\eqref{eq:densitymatrix5b}, we have the entropy inequality (see, e.g.,  \cite[2.5.14.5]{Thirring_4}) $S(\Gamma_{N}^\mathrm{G}) \leq s(\gamma_{N})$. Therefore,
\begin{equation}
F(N) \geq \tr\left[ h \gamma_{N} \right] - s(\gamma_{N}) = F_{\mathrm{gc}}(\mu) + \mathcal{S}(\gamma_N, \gamma_N^\mathrm{gc})
\label{eq:densitymatrix18d}
\end{equation}
where $\mathcal{S}$ denotes the relative entropy defined in \eqref{def:cS}. 
Corollary~\ref{cor:freeenergy} thus implies that
\begin{equation}
\mathcal{S}\left( \gamma_{N}, \gamma_N^{\mathrm{gc}} \right) \lesssim \ln(N)\,.
\label{eq:densitymatrix18e}
\end{equation}
Note that $\gamma_{N}$ and $\gamma_N^\mathrm{gc}$ have the same eigenfunctions $\phi_j$. In particular, they commute. Hence the relative entropy can be written as
\beq
\mathcal{S}\left( \gamma_{N}, \gamma_N^{\mathrm{gc}} \right) = \sum_{\alpha\geq 0} S(a_\alpha,b_\alpha) 
\eeq
where $S$ is defined in the proof of Lemma~\ref{lem:relativeentropy}, and $a_\alpha$ resp. $b_\alpha$ denote the eigenvalues of $\gamma_{N}$ and $\gamma_N^\mathrm{gc}$, respectively.  Using \eqref{eq:relativeentropy4} and the Cauchy-Schwarz inequality, this implies
\begin{align}\nonumber
\left\Vert \tilde{\gamma}_N^{\mathrm{gc}} - \tilde{\gamma}_{N} \right\Vert_1 & = \sum_{\alpha\geq 1} | a_\alpha - b_\alpha| \leq  \left( \sum_{\alpha\geq 1} \frac { (a_\alpha - b_\alpha)^2}{(a_\alpha+b_\alpha)(1+b_\alpha)} \right)^{1/2} 
 \left( \sum_{\alpha\geq 1}  (a_\alpha+b_\alpha)(1+b_\alpha) \right)^{1/2} 
\\ & \lesssim \mathcal{S}\left( \gamma_{N}, \gamma_N^{\mathrm{gc}} \right)^{1/2} \left\{  \left(1 +\| \tilde\gamma_N^\mathrm{gc}\| \right) \left\Vert \tilde{\gamma}_N^{\mathrm{gc}} - \tilde{\gamma}_{N} \right\Vert_1 + \tr \left[\tilde{\gamma}_N^{\mathrm{gc}} (1 + \tilde{\gamma}_N^{\mathrm{gc}}) \right]  \right\}^{1/2}\,.  \label{eq:dm19}
\end{align}
In combination with Eq.~\eqref{eq:densitymatrix18e}, this gives
\beq
\left\Vert \tilde{\gamma}_N^{\mathrm{gc}} - \tilde{\gamma}_{N} \right\Vert_1 \lesssim \left(  \tr \left[\tilde{\gamma}_N^{\mathrm{gc}} (1 + \tilde{\gamma}_N^{\mathrm{gc}}) \right]    \ln N \right)^{1/2} +  \left(1 +\| \tilde\gamma_N^\mathrm{gc}\| \right) \ln N \,.
\eeq
The claim \eqref{eq:freedensityclose} then follows from the fact that $N = \tr \gamma_{N} = \tr \gamma_N^\mathrm{gc}$, which implies that 
\beq
|N_0 - N_0^\mathrm{gc}| = | \tr ( \tilde{\gamma}_N^{\mathrm{gc}} - \tilde{\gamma}_{N}) | \leq \left\Vert \tilde{\gamma}_N^{\mathrm{gc}} - \tilde{\gamma}_{N} \right\Vert_1 \,
\eeq
holds.
\end{proof}


We shall now apply Lemma~\ref{lem:freedensityclose} to the case of the harmonic oscillator Hamiltonian in Eq.~\eqref{eq:idealbosegas1a}. We  re-introduce the inverse temperature $\beta$, and adjust the notation to the one used in the main text. That is, we denote the one-particle density matrices by $\gamma_{N,0}$ and $\gamma_0^\mathrm{gc}$, respectively. Recall also the definitions $N_\mathrm{th} = N - N_0$ and similarly for $N_\mathrm{th}^\mathrm{gc}$. In this case, we have
\begin{equation}
\tr \tilde\gamma_0^\mathrm{gc} = N_0^\mathrm{gc} = O((\beta\omega)^{-3}), \  \ \| \tilde\gamma_0^\mathrm{gc} \| = O ( (\beta\omega)^{-1}) 
\end{equation}
and also 
\begin{equation}
\tr \left(\tilde\gamma_0^\mathrm{gc}\right)^2  \lesssim  \int_0^\infty  \frac {x^2} { \left( e^{\beta\omega x} -1 \right)^2 } \textrm{d}x = O ((\beta\omega)^{-3}) 
\end{equation}
Hence we obtain the following Corollary. 

\begin{corollary}
\label{lem:particlenumbers}
Consider the three-dimensional ideal Bose gas in the harmonic oscillator potential, that is, the one-particle Hamiltonian of the system is given by $h=-\Delta + \tfrac{\omega^2}{4} x^2 - \tfrac{3}{2} \omega$. We assume that the chemical potential $\mu_0$ is chosen such that the expected number of particles in the grand canonical ensemble equals $N \in \mathbb{N}$. Then 
\begin{equation}
|N_0 - N_0^\mathrm{gc}| = |N_\mathrm{th} - N_\mathrm{th}^\mathrm{gc} | \leq \left\Vert {\gamma}_0^{\mathrm{gc}} - {\gamma}_{N,0} \right\Vert_1   \lesssim  (\beta\omega)^{-3/2} \left( \ln N \right)^{1/2} + (\beta\omega)^{-1} \ln N  \,.
\label{eq:freedensitycloseho}
\end{equation}
\end{corollary}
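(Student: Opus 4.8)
The plan is to obtain Corollary~\ref{lem:particlenumbers} as an immediate specialization of Lemma~\ref{lem:freedensityclose} to the three-dimensional harmonic oscillator Hamiltonian $h=-\Delta+\tfrac{\omega^2}{4}x^2-\tfrac 32\omega$, once the three grand canonical quantities entering the right-hand side of \eqref{eq:freedensityclose} have been controlled, namely $\tr[\tilde{\gamma}_0^{\mathrm{gc}}(1+\tilde{\gamma}_0^{\mathrm{gc}})]$ and $\|\tilde{\gamma}_0^{\mathrm{gc}}\|$.

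First I would record that, since $\mu_0<0$, the grand canonical one-particle density matrix $\gamma_0^{\mathrm{gc}}=(e^{\beta(h-\mu_0)}-1)^{-1}$ is a well-defined trace-class operator with $\tr\gamma_0^{\mathrm{gc}}=N$, and that $\tilde{\gamma}_0^{\mathrm{gc}}=\gamma_0^{\mathrm{gc}}-N_0^{\mathrm{gc}}|\varphi_0\rangle\langle\varphi_0|$ has eigenvalues $(e^{\beta(\omega n-\mu_0)}-1)^{-1}$ with degeneracies $g(n)=(n+1)(n+2)/2$, $n\geq 1$. The operator norm is the largest such eigenvalue, attained at $n=1$; using $\mu_0<0$ one gets $\|\tilde{\gamma}_0^{\mathrm{gc}}\|\leq (e^{\beta\omega}-1)^{-1}=O((\beta\omega)^{-1})$. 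For the traces, the key point is that $\sum_{n\geq 1}g(n)(e^{\beta\omega n-\beta\mu_0}-1)^{-1}$ and $\sum_{n\geq 1}g(n)(e^{\beta\omega n-\beta\mu_0}-1)^{-2}$ are Riemann sums (with mesh $\beta\omega$) for the convergent integrals $\omega^{-3}\int_0^\infty \tfrac{x^2/2}{e^{x}-1}\,\dd x$ and $\omega^{-3}\int_0^\infty \tfrac{x^2/2}{(e^{x}-1)^2}\,\dd x$, up to the harmless factor $e^{\beta\mu_0}\leq 1$, hence both are $O((\beta\omega)^{-3})$ uniformly as $\beta\omega\to 0$. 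This gives $\tr\tilde{\gamma}_0^{\mathrm{gc}}=N_0^{\mathrm{gc}}=O((\beta\omega)^{-3})$ and $\tr(\tilde{\gamma}_0^{\mathrm{gc}})^2=O((\beta\omega)^{-3})$, so that $\tr[\tilde{\gamma}_0^{\mathrm{gc}}(1+\tilde{\gamma}_0^{\mathrm{gc}})]=O((\beta\omega)^{-3})$. These are precisely the bounds listed just before the Corollary in the text, and I would simply reproduce the short estimates.

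Next I would insert these estimates into \eqref{eq:freedensityclose}, obtaining
\[
\|\gamma_0^{\mathrm{gc}}-\gamma_{N,0}\|_1 \lesssim \big((\beta\omega)^{-3}\ln N\big)^{1/2} + \big(1+(\beta\omega)^{-1}\big)\ln N .
\]
In the regime of interest $(\beta\omega)^{-3}\sim N\to\infty$, so $(\beta\omega)^{-1}\gtrsim 1$ and the summand $1$ is absorbed into $(\beta\omega)^{-1}$, which yields the claimed bound $(\beta\omega)^{-3/2}(\ln N)^{1/2}+(\beta\omega)^{-1}\ln N$. The middle inequality $|N_0-N_0^{\mathrm{gc}}|\leq\|\gamma_0^{\mathrm{gc}}-\gamma_{N,0}\|_1$ is already contained in Lemma~\ref{lem:freedensityclose}, and the first equality $|N_0-N_0^{\mathrm{gc}}|=|N_{\mathrm{th}}-N_{\mathrm{th}}^{\mathrm{gc}}|$ follows from $N_0+N_{\mathrm{th}}=N=N_0^{\mathrm{gc}}+N_{\mathrm{th}}^{\mathrm{gc}}$.

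There is essentially no serious obstacle: the analytic work sits in Lemma~\ref{lem:freedensityclose}, whose proof rests on the relative-entropy coercivity estimate \eqref{eq:relativeentropy4} together with Corollary~\ref{cor:freeenergy}. The only point requiring a little care is the uniformity of the spectral sums as $\beta\omega\to 0$, i.e.\ ensuring that the Riemann-sum comparison error is of lower order than $(\beta\omega)^{-3}$; this is standard since the integrands decay exponentially and the mesh is $\beta\omega$. Everything else is bookkeeping.
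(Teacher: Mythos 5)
Your proof is correct and follows essentially the same route as the paper: specialize Lemma~\ref{lem:freedensityclose} to the harmonic oscillator, bound $\tr\tilde{\gamma}_0^{\mathrm{gc}}$, $\tr(\tilde{\gamma}_0^{\mathrm{gc}})^2$ and $\|\tilde{\gamma}_0^{\mathrm{gc}}\|$ via the explicit spectrum and Riemann-sum comparisons, and absorb the constant into $(\beta\omega)^{-1}$ in the thermodynamic regime. (Minor remark: both you and the text write $\tr\tilde{\gamma}_0^{\mathrm{gc}}=N_0^{\mathrm{gc}}$, which should read $N_{\mathrm{th}}^{\mathrm{gc}}$; it is immaterial since both are $O((\beta\omega)^{-3})$.)
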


\textbf{Acknowledgments.} 
Partial financial support by the European Research Council (ERC) under the European Union's Horizon 2020 research and innovation programme (grant agreement No 694227), and by the Austrian Science Fund (FWF), project Nr. P 27533-N27, is gratefully acknowledged.

\vspace{0.5cm}

(Andreas Deuchert) Institute of Science and Technology Austria (IST Austria)\\ Am Campus 1, 3400 Klosterneuburg, Austria\\ E-mail address: \texttt{andreas.deuchert@ist.ac.at}

(Robert Seiringer) Institute of Science and Technology Austria (IST Austria)\\ Am Campus 1, 3400 Klosterneuburg, Austria\\ E-mail address: \texttt{robert.seiringer@ist.ac.at}

(Jakob Yngvason) Faculty of Physics, University of Vienna \\ Boltzmanngasse 5, 1090 Vienna, Austria \\ E-mail address: \texttt{jakob.yngvason@univie.ac.at}


\begin{thebibliography}{49}
\addcontentsline{toc}{section}{References}

\bibitem{WieCor1995} M.H. Anderson, J.R. Ensher, M.R. Matthews, C.E. Wieman, E.A. Cornell, \textit{Observation of bose-einstein condensation in a dilute atomic vapor}, Science \textbf{269}, 198 (1995)
 
\bibitem{BenOlivSchl2015} N. Benedikter, G. de Oliveira, B. Schlein, \textit{Quantitative Derivation of the Gross-Pitaevskii Equation}, Comm. Pure Appl. Math. \textbf{68}, 1399 (2015)

\bibitem{BenPorSchl2015} N. Benedikter, M. Porta, B. Schlein, \textit{Effective Evolution Equations from Quantum Dynamics}, Springer, Berlin (2016) 

\bibitem{BratelliRobinson2} O. Bratteli, D. W. Robinson, \textit{Operator Algebras and Quantum Statistical Mechanics 2}, Springer, Berlin (1997)

\bibitem{ChatterjeeDiakonis} S. Chatterjee, P. Diaconis, \textit{Fluctuations of the Bose-Einstein condensate}, J. Phys. A: Math. Theor. \textbf{47}, 085201 (2014) 

\bibitem{SimonSchroedingerOps} L. Cycon, R. G. Froese, W. Kirsch, B. Simon, \textit{Schr\"odinger Operators with Applications to Quantum Mechanics and Global Geometry}, Springer, Heidelberg (1987)

\bibitem{Dalfovo_etal1999}F. Dalfovo, S, Giorgini, L.P. Pitaevskii, S. Stringari, \textit{Theory of Bose-Einstein condensation in trapped gases}, Rev. Mod. Phys. \textbf{71}, 463 (1999)

\bibitem{Kett1995} K.B. Davis, M.-O. Mewes, M.R. Andrews, N.J. van Druten, D.S. Durfee, D.M. Kurn, W. Ketterle, \textit{Bose-Einstein Condensation in a Gas of Sodium Atoms}, 
Phys. Rev. Lett. \textbf{75}, 3969 (1995)

\bibitem{DerezinskiGerard} J. Derezinski, C. Gerard, \textit{Asymptotic completeness in quantum field theory. Massive Pauli-Fierz Hamiltonians}, Rev. Math. Phys. \textbf{11}, 383 (1999)

\bibitem{DHS2015} A. Deuchert, C. Hainzl, R. Seiringer, \textit{Note on a Family of Monotone Quantum Relative Entropies}, Lett. Math. Phys. \textbf{105}, 1449 (2015)

\bibitem{Dyson} F.J. Dyson, \textit{Ground-State Energy of a Hard-Sphere Gas}, Phys. Rev. \textbf{106}, 20 (1957)

\bibitem{ErdSchlYau2009} L. Erd\H os, B. Schlein, H.-T. Yau, \textit{Rigorous derivation of the Gross-Pitaevskii equation with a large interaction potential}, J. Amer. Math. Soc. \textbf{22}, 1099 (2009)

\bibitem{ErdSchlYau2010} L. Erd\H os, B. Schlein, H.-T. Yau, \textit{Derivation of the Gross-Pitaevskii equation for the dynamics of Bose-Einstein condensate}, Ann. of Math. \textbf{172}, 291 (2010)

\bibitem{FKSS2017} J. Fr\"ohlich, A. Knowles, B. Schlein, V. Sohinger, 
\textit{Gibbs Measures of Nonlinear Schr\"odinger Equations as Limits of Many-Body Quantum States in Dimensions 
$d\leq 3$}, Commun. Math. Phys. \textbf{356}, 883 (2017)

\bibitem{Gau2013} A.L. Gaunt, T.F. Schmidutz, I. Gotlibovych, R.P. Smith,  Z. Hadzibabic, \textit{Bose-Einstein Condensation of Atoms in a Uniform Potential}, 
Phys. Rev. Lett. \textbf{110}, 200406 (2013)

\bibitem{HLSThemodynamicLimit} C. Hainzl, M. Lewin, J.P. Solovej, \textit{The thermodynamic limit of quantum Coulomb systems Part II. Applications}, Adv. Math. \textbf{221}, 488 (2009)

\bibitem{Hau1998} Lene Vestergaard Hau, B.D. Busch, Chien Liu, Zachary Dutton, Michael M. Burns, J.A. Golovchenko, \textit{Near Resonant Spatial Images of Confined Bose-Einstein Condensates in the 4-Dee Magnetic Bottle},  
Phys. Rev. A \textbf{58}, R54 (1998)

\bibitem{Jastrow} R. Jastrow, \textit{Many-Body Problem with Strong Forces}, Phys. Rev. \textbf{98}, 1479 (1955)

\bibitem{kocht} H. Koch, D. Tataru, \textit{$L^p$ eigenfunction bounds for the Hermite operator},  Duke Math. J. \textbf{128},  369 (2005)

\bibitem{LewinNamRougerie2015} M. Lewin, P.T. Nam, N. Rougerie, \textit{Derivation of nonlinear Gibbs measures from many-body quantum mechanics}, J. \'Ec. polytech. Math. \textbf{2}, 65 (2015)

\bibitem{LewinNamRougerie2017} M. Lewin, P.T. Nam, N. Rougerie, \textit{Gibbs measures based on 1D (an)harmonic oscillators as mean-field limits}, arXiv:1703.09422 

\bibitem{LewinSabin2014} M. Lewin, J. Sabin, \textit{A family of monotone quantum relative entropies}, Lett. Math. Phys. \textbf{104}, 691 (2014) 

\bibitem{LewisPuleZagrebnov1988} J.T. Lewis, J.V. Pul\'e, V.A. Zagrebnov, \textit{The large deviation principle for the Kac distribution}, Helv. Phys. Acta \textbf{61}, 1063 (1988)

\bibitem{LiebLoss} E.H. Lieb, M. Loss, \textit{Analysis}, AMS, Providence, Rhode Island (2001)

\bibitem{LiSei2002} E.H. Lieb, R. Seiringer, \textit{Proof of Bose-Einstein Condensation for Dilute Trapped Gases}, Phys. Rev. Lett. \textbf{88}, 170409 (2002)

\bibitem{LiSei2006} E.H. Lieb, R. Seiringer, \textit{Derivation of the Gross-Pitaevskii equation for rotating Bose gases}, Commun. Math. Phys. \textbf{264}, 505 (2006)

\bibitem{LiSeiSol2005} E.H. Lieb, R. Seiringer, J.P. Solovej, \textit{Ground State Energy of the Low Density Fermi Gas}, Phys. Rev. A \textbf{71}, 053605 (2005)

\bibitem{Themathematicsofthebosegas} E.H. Lieb, R. Seiringer, J. P. Solovej, J. Yngvason, \textit{The Mathematics of the Bose Gas and its Condensation}, Birkh\"auser, Basel (2005)

\bibitem{RobertGPderivation} E.H. Lieb, R. Seiringer, J. Yngvason, \textit{Bosons in a trap: A rigorous derivation of the Gross-Pitaevskii energy functional}, Phys. Rev. A \textbf{61}, 043602 (2000)

\bibitem{LiYng1998} E.H. Lieb, J. Yngvason, \textit{Ground State Energy of the Low Density Bose Gas}, Phys. Rev. Lett. \textbf{80}, 2504 (1998)

\bibitem{Lopes2017} R. Lopes, C. Eigen, N. Navon, D. Cl\'ement, R. P. Smith, and Z. Hadzibabic, \textit{Quantum Depletion of a Homogeneous Bose-Einstein Condensate}, Phys. Rev. Lett. \textbf{119}, 190404 (2017)

\bibitem{NRS} P.T. Nam, N. Rougerie, R. Seiringer, \textit{Ground states of large Bose systems: the Gross-Pitaevskii limit revisited}, Analysis \& PDE \textbf{9}, 459 (2016)

\bibitem{PethickSmith} C. Pethick, H. Smith, \textit{Bose-Einstein Condensation in Dilute Gases}, Cambridge University Press, New York (2008)

\bibitem{Pickl2015} P. Pickl, \textit{Derivation of the time dependent Gross Pitaevskii equation with external fields}, Rev. Math. Phys. \textbf{27}, 1550003 (2015)

\bibitem{PitaevskiiStringari} L. Pitaevskii, S. Stringari, \textit{Bose-Einstein Condensation and Superfluidity}, Oxford University Press, New York (2016)

\bibitem{PuleZagrebnov} J.V. Pul\'e, V.A. Zagrebnov, \textit{The canonical perfect Bose gas in Casimir boxes}, J. Math. Phys. \textbf{45}, 9 (2004)

\bibitem{Rou2015} N. Rougerie, \textit{De Finetti theorems, mean-field limits and Bose-Einstein condensation}, preprint arXiv:1506.05263, Lecture notes, (2015)

\bibitem{largecoulombsystems} R. Seiringer, \textit{Dilute, Trapped Bose Gases and Bose-Einstein Condensation}, in: J. Derezinski, H. Siedentop, (Eds.), \textit{Large Coulomb systems, lecture notes on mathematical aspects of QED}, Lect. Notes Phys. \textbf{695}, Springer, Berlin (2006)

\bibitem{RobertFermigas} R. Seiringer, \textit{The Thermodynamic Pressure of a Dilute Fermi Gas}, Commun. Math. Phys. \textbf{261}, 729 (2006)

\bibitem{Sei2008} R. Seiringer, \textit{Free Energy of a Dilute Bose Gas: Lower Bound}, Comm. Math. Phys. \textbf{279}, 595 (2008)

\bibitem{SimonFunct} B. Simon, \textit{Functional Integration and Quantum Physics, Second ed.}, AMS, Providence, Rhode Island (2005)

\bibitem{Suto} A. S\"ut\H{o}, \textit{Correlation inequalities for noninteracting Bose gases}, J. Phys. A: Math. Gen. \textbf{37}, 3 (2004)

\bibitem{Tamm_etal 2011} N. Tammuz, R.P. Smith, R.L.D. Campbell, S. Beattie,  S. Moullder, \textit{Can an Bose Gas be Saturated?}, Phys. Rev. Lett. \textbf{106}, 230401 (2011)

\bibitem{Thirring_4} W. Thirring, \textit{Quantum Mathematical Physics}, $2^{nd}$ ed., Springer, New York (2002)

\bibitem{weyl} 	H. Weyl, \textit{Das asymptotische Verteilungsgesetz der Eigenwerte linearer partieller Differentialgleichungen (mit einer Anwendung auf die Theorie der Hohlraumstrahlung)},  Math. Ann. \textbf{71},  441 (1911) 

\bibitem{Yin2010} J. Yin, \textit{Free Energies of Dilute Bose Gases: Upper Gound}, J. Stat. Phys. \textbf{141}, 683 (2010)


\end{thebibliography}
\end{document}